\documentclass[letterpaper, 11pt, onecolumn]{article}
\pdfoutput=1 %

\usepackage{lmodern,booktabs,authblk,tocloft}
\usepackage[margin=1in]{geometry}
\usepackage{graphicx}
\usepackage[linesnumbered,ruled,vlined]{algorithm2e}
\SetKwInput{KwInput}{Input}                %
\SetKwInput{KwOutput}{Output}
\SetKwInOut{KwPromise}{Promise}
\SetKwFor{RepTimes}{repeat}{times}{end}
\usepackage{complexity}
\usepackage{amsmath}
\usepackage{fullpage}
\usepackage{amsfonts}
\usepackage{amssymb}
\usepackage{dsfont}
\usepackage{amsthm}
\usepackage{tensor}
\usepackage[dvipsnames]{xcolor}
\usepackage{comment}
\usepackage{mathtools}
\usepackage{physics}
\usepackage{bbold}
\usepackage{appendix}
\usepackage{multirow}
\usepackage[ruled,vlined]{algorithm2e}
\usepackage[colorlinks]{hyperref}
\hypersetup{
pdfstartview={FitH},
pdfnewwindow=true,
colorlinks=true,
linkcolor=MidnightBlue,
citecolor=MidnightBlue,
filecolor=MidnightBlue,
urlcolor=MidnightBlue}
\usepackage{enumitem}
\usepackage[capitalize]{cleveref}

\makeatletter
\DeclareRobustCommand\bfseries{%
  \not@math@alphabet\bfseries\mathbf
  \fontseries\bfdefault\selectfont\boldmath}
\makeatother

\def\bm{{\bf m}}

\def\ri{{\rm i}}

\def\cB{\mathcal{B}}

\def\cH{\mathcal{H}}

\def\cO{\mathcal{O}}

\def\one{{\mathchoice {\rm 1\mskip-4mu l} {\rm 1\mskip-4mu l} {\rm
1\mskip-4.5mu l} {\rm 1\mskip-5mu l}}}
\def\dqc1{\textsc{DQC1}}
 \newcommand{\sket}[1]{| #1 \rangle}        %

\newcommand{\tshostate}[1] {\ket{\psi_{#1}^{\textrm{c}}}}%

\newcommand{\dshostate}[1] {\ket{\psi_{#1}}} %
\newcommand{\bradshostate}[1]{\bra{\psi_{#1}}} %

\newcommand{\position}{\hat{x}}
\newcommand{\momentum}{\hat{p}}

\newcommand{\discreteposition}{\overline x}
\newcommand{\discretemomentum}{\overline p}
\newcommand{\discretehamiltonian}{\overline{\mathrm{H}}}

\makeatletter
\newtheorem*{rep@theorem}{\rep@title}
\newcommand{\newreptheorem}[2]{%
\newenvironment{rep#1}[1]{%
 \def\rep@title{#2 \ref{##1}}%
 \begin{rep@theorem}}%
 {\end{rep@theorem}}}
\makeatother

\newtheorem{theorem}{Theorem}[section]
\newreptheorem{theorem}{Theorem}
\newtheorem{lemma}[theorem]{Lemma}
\newreptheorem{lemma}{Lemma}
\newtheorem{claim}[theorem]{Claim}
\newreptheorem{claim}{Claim}

\newreptheorem{definition}{Definition}

\newreptheorem{remark}{Remark}
\newtheorem{corollary}[theorem]{Corollary}
\newreptheorem{corollary}{Corollary}

\newreptheorem{observation}{Observation}

\def\rd{{\rm d}}

\def\ri{{\rm i}}

\title{Efficient quantum circuits for high-dimensional representations of $SU(n)$ and Ramanujan quantum expanders}

\author[1]{Vishnu~Iyer}
\author[1,2]{Siddhartha~Jain}
\author[2]{Stephen~Jordan}
\author[2]{Rolando~D.~Somma}

\affil[1]{\footnotesize The University of Texas at Austin, Austin, TX 78712, United States}
\affil[2]{\footnotesize Google Quantum AI, Venice, CA 90291, United States}

\begin{document}

\maketitle

\begin{abstract}
We present efficient quantum circuits that implement high-dimensional unitary irreducible representations (irreps) of $SU(n)$, where $n \ge 2$ is constant. For dimension $N$ and error $\epsilon$, the number of quantum gates in our circuits is polynomial in $\log(N)$ and $\log(1/\epsilon)$. Our construction relies on the Jordan-Schwinger representation, which allows us to realize irreps of $SU(n)$ in the Hilbert space of 
$n$ quantum harmonic oscillators. Together with a recent efficient quantum Hermite transform, which allows us to map the computational basis states to the eigenstates of the quantum harmonic oscillator, this allows us to implement these irreps efficiently. Our quantum circuits can be used to construct explicit Ramanujan quantum expanders, a longstanding open problem.
They can also be used to fast-forward the evolution of certain quantum systems.

\end{abstract}

\maketitle

\newpage
\tableofcontents
\newpage
\section{Introduction}

The special unitary groups $SU(n)$, $n \ge 2$, are a cornerstone of both physics and quantum information theory (Cf.~\cite{georgi2000lie,li2025d}). On the one hand these groups determine the fundamental symmetries of matter: $SU(2)$, generated by the algebra of angular momentum operators, governs the physics of spin systems and the electroweak interaction~\cite{hollik2014quantum}, and $SU(3)$ underpins Quantum Chromodynamics (QCD) and arises naturally in nuclear physics~\cite{arima1999elliott}. 
On the other hand, the groups $SU(n)$ are critical for investigating the complexity of Hilbert space: specific subsets of high-dimensional $SU(n)$ unitaries act as quantum expanders, which are operations that rapidly delocalize information across a system~\cite{lubotzky1988ramanujan,harrow2007quantum}. These ``fast scramblers''  are essential for modeling quantum thermalization, chaos, and black-hole dynamics~\cite{sekino2008fast}, and are intrinsically related with   unitary designs~\cite{harrow2009random}. Consequently, studying efficient quantum implementations of these high-dimensional unitaries is a priority for 
quantum information, having important fundamental and practical applications in
quantum simulation and beyond, as we will discuss.

Formally, the $SU(n)$ group is spanned by $n \times n$
unitary matrices of determinant 1. 
These matrices determine the {\em fundamental} representation
of $SU(n)$.
Nevertheless, the structure of the group becomes more relevant
when considering higher-dimensional representations,
as they allow for richer models of physics (e.g., a spin other than 1/2) and other essential quantum information tasks. To describe such representations, we start from the definition of $SU(n)$ 
via its  Lie algebra ${\mathfrak {su}}(n)$.
In the physicist's convention, 
the fundamental representation for  ${\mathfrak {su}}(n)$ is given by $n \times n$
Hermitian matrices $\{T_1,\ldots, T_{n^2-1}\}$ of trace zero that satisfy the commutation relations
\begin{align}
\label{eq:abstractcommutation}
    [T_a, T_b]=T_a T_b - T_b T_a =\ri \sum_{c=1}^{n^2-1} f_{abc} T_c \;.
\end{align}
The coefficients $f_{abc} \in \mathbb R$ for $a,b,c \in \{1,\ldots,n^2-1\}$ are totally antisymmetric and constitute the  structure factors of ${\mathfrak {su}}(n)$. Equation~\eqref{eq:abstractcommutation} is a main property of a Lie algebra, i.e., commutation relations that are closed.
Any element $U$ in the fundamental representation of $SU(n)$ can then be specified with $n^2-1$ real parameters $\theta_a$, so that
\begin{align}
\label{eq:UinSU(n)}
     U = \exp \left(\ri \sum_{a=1}^{n^2-1} \theta_a T_a \right) \in \mathbb C^{n \times n}\;.
\end{align}
It suffices to restrict $\theta_a \in [0, 4 \pi)$ in our convention.

In general, it is useful to  consider an {\em abstract} definition
of a Lie algebra $\mathfrak g$, and also of its Lie group $G$, that depends solely on the structure factors rather than a specific matrix representation.
In our case, we can specify $\mathfrak g \equiv \mathfrak {su}(n)$ via a basis $\{\tilde T_1, \ldots , \tilde T_{n^2-1} \}$, together with a Lie bracket $[.,.]: \mathfrak g \times \mathfrak g \mapsto \mathfrak g$: 
\begin{align}
    [\tilde T_a, \tilde T_b] = \ri \sum_{c=1}^{n^2-1} f_{abc} \tilde T_c \;.
\end{align}
The Lie bracket is bilinear, antisymmetric, and satisfies the Jacobi identity.
This is similar to Eq.~\eqref{eq:abstractcommutation},
only that we are not alluding to any particular matrix representation of the vectors or elements $\tilde T_a$.
That is, while the Lie bracket specifies the abstract Lie algebra, it is not a specific formula like the commutator of Eq.~\eqref{eq:abstractcommutation} since the $\tilde T_a$ are abstract elements. 
Then, in the context of representation theory, 
any abstract element $\tilde T_a$ is identified with an $N \times N$-dimensional and Hermitian matrix $T_a$, such that the {\em commutator}, defined by
   $ [T_a, T_b]:= T_a T_b - T_b T_a$ satisfies $[T_a, T_b] = \ri \sum_{c=1}^{n^2-1} f_{abc} T_c$.
   That is, the Lie bracket is replaced by the commutator
   between matrices in the representation, and both satisfy similar rules. Since now the dimension can satisfy $N \ge n$ in general, this allows for higher dimensional representations of 
   the abstract Lie algebra $\mathfrak{su}(n)$,
   which is the case we consider here. It also allows for higher-dimensional representations of the group $SU(n)$ by simply letting $U = \exp(\ri \sum_a \theta_a T_a) \in \mathbb C^{N \times N}$.
   See Ref.~\cite{fuchs1995affine} for 
   the basics of Lie algebras and Lie groups.

A simple example to illustrate representation theory is that of $SU(2)$, where the abstract Lie algebra $\mathfrak{su}(2)$ is spanned by elements $\tilde T_1$, $\tilde T_2$, and $\tilde T_3$, which are related to the angular momentum operators $J_x$, $J_y$, and $J_z$ of a quantum system. That is,
the fundamental representation of $\mathfrak{su}(2)$ is that of ``spin-1/2'' and obtained by half the Pauli matrices as $T_a=\frac 1 2 \sigma_a \in \mathbb C^{2 \times 2}$. The next ``spin-1'' representation is given by 
\begin{align}
     T_1 = \frac 1 {\sqrt 2}\begin{pmatrix}
        0 & 1 & 0 \cr 1 & 0 & 1 \cr 0 & 1 & 0
    \end{pmatrix},\;
     T_2 = \frac 1 {\sqrt 2} \begin{pmatrix}
        0 & -\ri & 0 \cr \ri & 0 & -\ri \cr 0 & \ri & 0
    \end{pmatrix}, \;
    T_3 = \begin{pmatrix}
        1 & 0 & 0 \cr 0 & 0 & 0 \cr 0 & 0 & -1
    \end{pmatrix} .
\end{align}
Of particular interest in this work is the case where the dimension of the representation is $N \gg n$. Indeed, we will mainly consider cases where $n$ is a constant (e.g., $n=2$ or $n=3$), and where $N$ can increase exponentially in some problem parameter $r$ (e.g., $N=2^r)$\footnote{Our results can be extended to the case where $n=\polylog(N)$ and can be directly applied to subgroups of $SU(n)$.}. We will also consider {\em faithful} representations of $\mathfrak{su}(n)$, meaning that there is a one-to-one correspondence between each basis element $\tilde T_a$ and its matrix representation $T_a$, and {\em irreducible} representations or irreps, where the matrix representations cannot be decomposed as direct sums of smaller matrices (i.e., there is no invariant subspace other than the full space $\mathbb C^N$).

Noting that high-dimensional representations 
of $SU(n)$ are unitary matrices, it is natural to consider their implementations and applications in quantum algorithms.
That is, for a unitary   $U = \exp \left(\ri \sum_{a=1}^{n^2-1} \theta_a T_a \right) \in \mathbb C^{N \times N}$ in a representation of the group $SU(n)$, we can think of $U$ as a quantum circuit acting on a Hilbert space $\mathbb C^N$ of, say, $r=\log_2 N$ qubits.
For quantum information tasks, it is desirable 
that such quantum circuit's complexity has a 
favorable scaling, particularly in terms of the dimension $N$ and precision $\epsilon$.  
This has been a longstanding open problem,
and constitutes our main result. That is, we will show that $U$ can be implemented within arbitrary error  $\epsilon >0$  with a quantum circuit of complexity polynomial in $\log(N)$ and $\log(1/\epsilon)$.  

Our result does not trivially follow from standard quantum algorithms. For example, while $U$ can be thought of as an evolution of a quantum system with Hamiltonian $\sum_a \theta_a T_a$, the Hermitian matrices in the representation satisfy $\|T_a\|=\Theta(N)$, and hence generic methods for Hamiltonian simulation~\cite{BACS07,berry2015simulating,low2017optimal} would result in complexity that is polynomial in $N$ rather than $\log(N)$. 
However, our main result shows
that these $SU(n)$ operations can be {\em fast-forwarded}. 
By fast-forwarding we mean any quantum simulation approach that bypasses the no-fast-forwarding theorems, which would give complexity $\Omega(t \|H\|)$ for simulating a Hamiltonian $H$ for time $t$~\cite{BACS07,atia2017fast,HHK21,gu2021fast}.

Our solution approach
considers oscillator realizations of $SU(n)$. Essentially, it is possible to construct an $N \times N$-dimensional irrep of $\mathfrak {su}(n)$ by considering specific operators acting on the space of $n$ quantum harmonic oscillators or bosonic modes. More importantly, we show how the corresponding $SU(n)$
unitaries in these bosonic systems can be implemented efficiently, in time polynomial in $\log (N)$ and $\log(1/\epsilon)$.
 This, however, does not directly solve our problem, since we have moved from the space $\mathbb C^N$ of $r$ qubits to that of the $n$ harmonic oscillators,
 and we would need an isometry that allows us to transform between spaces efficiently. Remarkably, 
we are able to construct this isometry by
leveraging a recent efficient quantum Hermite transform~\cite{jain2025efficient}, that maps basis states into Hermite states. These Hermite states are precisely the eigenstates of the quantum harmonic oscillator, also known as Fock states~\cite{griffiths2018introduction}.
Accordingly, our quantum circuit to implement $U$ acts on an enlarged space that is needed to accurately represent the quantum harmonic oscillators.

In essence, our approach shares some similarities with Ref.~\cite{zalka2004implementing}. There, a quantum algorithm for implementing a high-dimensional $SU(2)$ unitary is proposed.
Their idea is to use spherical harmonics rather than the Hermite states. Spherical harmonics also give rise to irreducible representations of $SU(2)$.
However, the runtime of the algorithm in Ref.~\cite{zalka2004implementing} is not carefully analyzed, and
that approach is not known to be efficient -- see Sec.~\ref{sec:relatedwork}.  We also remark that, in our case, the dimension $N$ cannot be arbitrary and needs to obey the relation $N=\binom{M+n-1}{n-1}$, where $M \ge 0$ is an arbitrary integer.
This is because the oscillator representation of $SU(n)$
only produces the ``totally symmetric'' irreps, constraining the possible dimensions\footnote{These symmetric irreps are characterized by Young tableaux of a single row.}. 
Nevertheless, this poses no problem for the applications we consider and $N$ can still be arbitrarily large. Additionally, there is no constraint for $N$ when $n=2$, in which case all irreps are totally symmetric.

An immediate consequence   is that
our efficient quantum algorithms for $SU(2)$
unitaries can be used to construct explicit Ramanujan quantum expanders.
This was also a longstanding problem in quantum information. It is implied by Refs.~\cite{lubotzky1988ramanujan,harrow02expander,harrow2007quantum},
where it is shown that there exists a finite and explicit set of unitaries in $SU(2)$ that give rise to an optimal quantum expander, where the spectral gap in terms of the degree is the largest possible.
Such quantum expanders are then optimal scramblers of quantum information.
We can then construct these expanders with arbitrary accuracy $\epsilon >0$ efficiently. 
Another direct consequence is the ability to fast-forward the evolution of certain quantum systems,
such as  the quantum kicked top in quantum chaos, adding to the literature
of physics models that can be fast-forwarded~\cite{gu2021fast}.
Ultimately, we expect our quantum primitives to also open the door to other novel applications.

Our paper is then organized as follows. In Sec.~\ref{sec:problemstatement}
we define and formalize the 
``high-dimensional $SU(n)$ problem'', which seeks an efficient implementation of $U \in SU(n)$,  and summarize the main results and corollaries in Sec.~\ref{sec:mainresults}.
In Sec.~\ref{sec:relatedwork}
we comment on related work, including Ref.~\cite{zalka2004implementing}, and also comment
on efficient quantum algorithms for the quantum Schur
transform that address a different but related problem.
In Sec.~\ref{sec:oscillatorrep} we describe how the   high-dimensional irreps of $SU(n)$ can be obtained via the oscillator or Jordan-Schwinger representations, that is, via $n$ quantum harmonic oscillators or bosonic systems. There, we also show how the corresponding $SU(n)$  unitaries can be fast-forwarded and efficiently implemented. Moreover, since the quantum harmonic oscillators are defined in the continuum and an infinite dimensional space, we need to approximate these via discretization and hence describe their finite-dimensional versions in Sec.~\ref{sec:finiteQHO}. In particular, in Sec.~\ref{sec:FFdiscreteQHO} we show that it is possible to consider a sufficiently high-dimensional discrete quantum harmonic oscillator to reproduce the results in the continuum with arbitrary accuracy. Next, we   construct the isometry that maps between spaces in Sec.~\ref{sec:quantumSU(n)}, and  also provide the full efficient quantum algorithm for the high-dimensional $SU(n)$ unitary there.
We discuss potential applications to quantum expanders and fast forwarding in Sec.~\ref{sec:applications} and conclude in Sec.~\ref{sec:conclusions}.

\subsection{Problem statement}
\label{sec:problemstatement}

We formalize the problem of implementing a high-dimensional irrep of $SU(n)$, which we refer to as the high-dimensional $SU(n)$ problem,
in a way that relates directly with a totally symmetric or oscillator representation for clarity.
That is, there are infinitely-many representations of dimension $N$ (e.g., it is always possible to conjugate the $T_a$'s with an arbitrary unitary and obtain another representation), but we will choose the standard ones that are determined by the Jordan-Schwinger mapping. This guarantees that the matrices $T_a$ are sparse and have a simple structure, among other features.

For $n \ge 2$ and $N \ge n$, we let $U \in \mathbb C^{N\times N}$ be a unitary in the corresponding $N$-dimensional irrep  of $SU(n)$. The goal is to approximate $U$ with a quantum circuit using as few gates as possible when $N \gg 1$. To this end, we need to be careful in the way that $U$ is specified.
Ultimately, we will describe $U$ as the exponential of matrices that form a representation of the Lie algebra $\mathfrak{su}(n)$ and, to that end, we will use the standard Cartan-Weyl basis. 
In this basis, the Cartan subalgebra $\mathfrak g_D$
is spanned by the vectors $\{\tilde H_i\}_{i}$, for $1 \le i \le n-1$, which correspond to a maximal set of 
elements where the Lie bracket vanishes, i.e., 
$[\tilde H_i , \tilde H_{i'}]=0$ for all $i$ and $i'$. 
We also define certain ladder elements or ``root'' vectors $\{ \tilde E_{j,k}\}_{ j ,k  }$, for $j,k \in \{1,\ldots,n\}$ and $j \ne k$, and  
the corresponding Hermitian versions $\tilde S_{j,k}:=\frac 1 2 (\tilde E_{j,k}+(\tilde E_{j,k})^\dagger)$ and $\tilde A_{j,k}:=\frac \ri 2(\tilde E_{j,k}-(\tilde E_{j,k})^\dagger)$, where $(\tilde E_{j,k})^\dagger =\tilde E_{k,j}$.
The Lie bracket for $\mathfrak {su}(n)$ is $[\tilde E_{j,k}, \tilde E_{l,m}]=\delta_{kl} \tilde E_{j,m}-\delta_{jm}\tilde E_{l,k}$ and $[\tilde E_{j,k}, \tilde H_i]$ is proportional to $\tilde E_{j,k}$.
The total number of operators is then $(n-1) + 2 n(n-1)/2=n^2-1$, which matches the dimension of $\mathfrak{su}(n)$; indeed, a basis of the abstract Lie algebra $\mathfrak{su}(n)$ is given by 
 $\{\tilde H_i, \tilde S_{j,k}, \tilde A_{j,k}\}$.
 
In the totally symmetric irrep of $\mathfrak{su}(n)$, which is the one we consider, the dimension obeys 
\begin{align}
    N =\binom{M + n -1}{ n-1} \;,
\end{align}
for arbitrary integer $M \ge 0$. That is, $N$ cannot be arbitrary for $n \ge 3$, but there is no constraint on $N$ for the case $n=2$.
The representations of the $\tilde H_i$'s are given by diagonal matrices, and the representations of the
$\tilde E_{j,k}$'s  are given by 1-sparse matrix
where the nonzero entries are off-diagonal. 
Specifically, 
we can construct the desired irrep
by considering a proper labeling of the basis vectors
of $\mathbb C^N$.
The standard way is to use the descending lexicographic ordering of size $n$: $\ket 0 \equiv \ket{M , 0 , 0 , \ldots , 0 },  \ket 1 \equiv \ket{M-1 , 1 , 0,\ldots , 0 } , \ldots ,  \ket {M+1} \equiv  \ket{ 0 , M , 0 , \ldots , 0}, \ket {M+2} \equiv  \ket{ 0 , M-1 , 1 , \ldots , 0}, \ldots , \ket{N-1}  \equiv  \ket{0,0,0,\ldots,M}$.
For $i \in \{1,\ldots,n\}$, let $m_i (\ell) \in \{0,\ldots,M\}$ be the value of the $i^{\rm th}$ entry of $\ket {\ell} \equiv \ket{m_1(\ell),\ldots,m_n(\ell)}$, where $0 \le \ell \le N-1$. 
We also have the inverse mapping $\bm^{-1}$ such that $\bm^{-1}(m_1(\ell),\ldots,m_n(\ell))=\ell \in \{0,\ldots,N-1\}$.

Then, the totally symmetric irrep 
of $\mathfrak{su}(n)$ is given by matrices that act on basis vectors as
\begin{align}
\nonumber
    E_{j,k} \ket{\ell } &\equiv  E_{j,k}\ket{m_1(\ell),\ldots,m_j(\ell),\ldots,m_k(\ell),\ldots,m_n(\ell)} \\
    \nonumber
    & =
    \sqrt{(m_j(\ell)+1)m_k(\ell)} \ket{m_1(\ell),\ldots,m_j(\ell)+1,\ldots,m_k(\ell)-1,\ldots,m_n(\ell)} \\
    & \equiv
     \sqrt{(m_j(\ell)+1)m_k(\ell)} \sket{\bm^{-1} (m_1(\ell),\ldots,m_j(\ell)+1,\ldots,m_k(\ell)-1,\ldots,m_n(\ell))}\;.
\end{align}
That is, the matrices are
\begin{align}
\label{eq:laddersymrepresentation}
   E_{j,k} : \left \{ 
   \begin{matrix}
    \left( E_{j,k}\right)_{\ell,\ell'} & \! \! \! \!= \sqrt{(m_j(\ell)+1)m_k(\ell)} & \! \! {\rm if}  \ 
    \ell^{'}=\bm^{-1} (m_1(\ell),\ldots \! ,m_j(\ell)+1\!,\ldots \!,m_k(\ell)-1 \!,\ldots 
\!,m_n(\ell)),\cr
   \left( E_{j,k}\right)_{\ell,\ell'}  & \! \! \! \! \! \! \! \! \! \! \! \! \!   \! \! \! \! \! \! \! \!=0 \ \ {\rm otherwise}.
   \end{matrix}
   \right .  
\end{align}
This automatically gives 
the Hermitian $N$-dimensional representations $S_{j,k}$ and $A_{j,k}$ of $\tilde S_{j,k}$ and $\tilde A_{j,k}$, respectively. To obtain the representation of $\tilde H_i$ we write $\tilde H_i =\frac 1 2( \tilde E_{i,i}-\tilde E_{i+1,i+1})$. The representation of 
$\tilde E_{i,i}$ is a matrix that acts on basis vectors as
\begin{align}
\nonumber
  E_{i,i} \ket{\ell} &\equiv  
E_{i,i}\ket{m_1(\ell),\ldots,m_i(\ell),\ldots,m_n(\ell)} \\
\nonumber
& = m_i(\ell) \ket{m_1(\ell),\ldots,m_i(\ell),\ldots,m_n(\ell)} \\
& \equiv m_i(\ell) \ket{\ell} \;.
\end{align}
That is,
\begin{align}
\label{eq:diagonalsymrepresentation}
    E_{i,i} : \left \{ 
   \begin{matrix}
    \left( E_{i,i}\right)_{\ell,\ell}   =  {m_i(\ell)} ,
    \cr
   \left( E_{i,i}\right)_{\ell,\ell'}  =0 \ \ {\rm if} \ \ell \ne \ell'
   \end{matrix}
   \right .  
\end{align}
for $1 \le i \le n$. 
For the case of $n=2$, these representations are all the standard irreps in terms of angular momentum operators. A reader with a physics background might have realized that the action of the matrices $E_{j,k}$
is similar to the action of corresponding creation and annihilation bosonic operators on Fock states.
This connection will be essential.
See Appendix~\ref{app:su(2)lexicographic} for  details.

While a unitary in $SU(n)$ can be expressed as an exponential of a combination of arbitrary matrices $T_a$ that represent $\mathfrak{su}(n)$, it is also simple to express it as an exponential of matrices representing the elements in the Cartan-Weyl basis.  
 
\vspace{0.2cm}
{\bf High-dimensional $SU(n)$  problem.}
Let $n \ge 2$, $M \ge 0$ be an integer, $\epsilon \ge 0$ be the error, and 
$N$ be the dimension satisfying 
$N=\binom{M+n-1}{n-1}$.
Consider the $N$-dimensional representation 
of $\mathfrak{su}(n)$ given by the Hermitian
matrices 
 $\{H_i, S_{j,k}, A_{j,k}\}$ obtained from Eqs.~\eqref{eq:laddersymrepresentation} and~\eqref{eq:diagonalsymrepresentation}.
For $1 \le i \le n-1$ and $1 \le j < k \le n$, let $\varsigma_i \in [0,4\pi)$, $\vartheta_{j,k} \in [0,4\pi)$, and $\varphi_{j,k} \in [0,4\pi)$ be a set of $n^2-1$ arbitrary phases, and define the $SU(n)$
unitary matrix 
\begin{align}
\label{eq:SU(n)problem}
    U := \exp \left(\ri \left(\sum_{i=1}^{n-1} \varsigma_i H_i + \sum_{1 \le j < k \le n} \vartheta_{j,k} S_{j,k}+ \varphi_{j,k} A_{j,k} \right)\right) \in \mathbb C^{N \times N}.
\end{align}
The goal of the high-dimensional $SU(n)$  problem is to construct a quantum circuit $V \in \mathbb C^{N' \times N'}$ that approximates $U$ within additive error at most $\epsilon$ in spectral norm, in the sense
\begin{align}
    \| (U \ket \psi) \ket 0 - V \ket \psi \ket 0\| \le \epsilon \;,
\end{align}
for all states $\ket \psi \in \mathbb C^N$.
 The quantum circuit $V$ acts on an enlarged space where $N' \ge N$, and $\ket 0$ is some zero state of an ancilla system.

\vspace{0.2cm}

We note that, 
instead of solving this problem directly, 
we solve a simplified version that regards the fast-forwarding of simpler high-dimensional $SU(n)$ unitaries. 
That is,
we can consider instead the problem of quantum circuits implementing the individual unitaries $\exp(\ri \varsigma H_i)$, $\exp(\ri \vartheta S_{j,k})$, or $\exp(\ri \varphi A_{j,k})$.
This is because the unitary in Eq.~\eqref{eq:SU(n)problem} can always be approximated by a sequence of these simpler unitaries, due to the generalized Euler angles.

More precisely, there is an efficient 
classical preprocessing that performs $\cO(n^3)$
operations (including additions and multiplications) such that, when given the phases $\varsigma_i$, $\vartheta_{j,k}$, and $\varphi_{j,k}$ that specify $U$ in Eq.~\eqref{eq:SU(n)problem}, it outputs a 
new set of angles to equivalently write $U$ as a product of these simpler exponentials. Concretely, we are able to decompose
\begin{equation}\label{eq:quadratic-decomposition}
U =  \prod_{b=1}^{n^2 - 1}\exp(\ri \vartheta_b {O}_b),
\end{equation}
for some choice of angles $\vartheta_b \in [0,4\pi]$ and where each ${O}_b \in \{ H_i, S_{j,k},A_{j,k}\}$.

The classical method uses the well-known Givens rotations and the minimal length of the sequence is also $n^2-1$, which coincides with the dimension of $\mathfrak{su}(n)$.
This can be achieved recursively in a simple manner
for our case: considering a fundamental representation
of $\mathfrak{su}(n)$, the procedure is essentially a 
Jacobi method to diagonalize the $n\times n$ unitary matrix. 
In practice, there will be small numerical errors.
That is, to achieve error $\cO(\epsilon)$
in this decomposition, we need to provide the phases and matrix entries in the input with $\cO(\log(1/\epsilon))$ bits of precision, since $n$ is constant. When considering high-dimensional representations where $\|H_i\|=\cO(N)$, $\|S_{j,k}\|=\cO(N)$, and $\|A_{j,k}\|=\cO(N)$, we will need to set $\epsilon^\prime=\cO(\epsilon/N)$
to obtain an error in the spectral norm that is $\cO(\epsilon)$. Hence, the actual complexity of this classical preprocessing to produce the decomposition of $U$ with high precision is polynomial in $n$, $\log(N)$, and $\log(1/\epsilon)$, 
without affecting the claimed speedup.
We point to Refs.~\cite{somma2005quantum,somma2019unitary} for an efficient classical approach that provides such a sequence if $n$ is constant or $n=\polylog(N)$.

\subsection{Summary of results}
\label{sec:mainresults}

Our main result is an efficient quantum circuit for the high-dimensional $SU(n)$ transform. 
\begin{theorem}
[Efficient high-dimensional $SU(n)$ circuits]
\label{thm:main}
    Let $n \ge 2$ be a constant and $M \ge 0$ integer.
    There exists a quantum circuit $V \in \mathbb C^{N' \times N'}$ that implements the $SU(n)$ unitary in Eq.~\eqref{eq:SU(n)problem}  of dimension $N=\binom{M+n-1}{n-1}$  within additive error $\epsilon >0$, with complexity $\poly(n, \log(N),\log(1/\epsilon))$. 
    The dimension $N'$ is polynomial in $N$ and $1/\epsilon$, and the input phases $\varsigma_i$, $\vartheta_{j,k}$, and $\varphi_{j,k}$ are assumed to be given within $\cO(\log(N/\epsilon))$ bits of precision.
\end{theorem}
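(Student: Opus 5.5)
The plan is to reduce the problem to the individual exponentials in the Euler-angle decomposition of Eq.~\eqref{eq:quadratic-decomposition}. Classical preprocessing produces $n^2-1=\cO(1)$ angles $\vartheta_b$, each computed to $\cO(\log(N/\epsilon))$ bits so that the decomposition error is $\cO(\epsilon)$. This bit count is needed because $\|O_b\|=\cO(M)=\cO(N)$. The diagonal factors $\exp(\ri\varsigma H_i)$ are immediate. We compute $m_i(\ell)-m_{i+1}(\ell)$ reversibly from $\ell$ into an ancilla, apply a phase by controlled rotations on its bits, and uncompute. The $\bm^{-1}$ and $\bm$ maps are computable in $\poly(\log N)$ arithmetic, since for constant $n$ they reduce to sums of binomial coefficients. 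Moreover, $S_{j,k}$ and $A_{j,k}$ are conjugate to $S_{1,2}$ by a permutation of modes composed with a diagonal phase $\exp(\ri\frac{\pi}{2}(E_{j,j}-E_{k,k})/\cdot)$. Both are efficient relabelings of basis states. So the whole task reduces to implementing $\exp(\ri\vartheta S_{1,2})$, which acts as an $SU(2)$ rotation on modes $1,2$ with $m_1+m_2$ fixed.

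For that factor I would use the Jordan--Schwinger picture. On $n$ oscillators, $E_{j,k}\mapsto a_j^\dagger a_k$ reproduces Eqs.~\eqref{eq:laddersymrepresentation} and~\eqref{eq:diagonalsymrepresentation} exactly on Fock states $\ket{m_1,\ldots,m_n}$ with $\sum m_i=M$. In addition, $\exp(\ri\vartheta(a_1^\dagger a_2+a_2^\dagger a_1)/2)$ is a passive beam-splitter, a linear canonical transformation that rotates the pair $(a_1,a_2)$. In position representation it is, up to phases, the rotation of the wavefunction $\psi(x_1,x_2)\mapsto\psi(R(x_1,x_2))$ in a suitable quadrature pair. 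More concretely, we can write it as a product of Gaussian unitaries $e^{\ri c\,x_1p_2}$, $e^{\ri c\,x_2 p_1}$ (shears), since a real rotation factors into three shears. Each shear $e^{\ri c\,x_1p_2}$ is a controlled translation. On a discretized grid with $\poly(N/\epsilon)$ points it is implemented with a QFT on register 2 and a controlled phase $c\,x_1p_2$ (multiplication and rotation circuits), at cost $\polylog(N/\epsilon)$. The overall circuit is then: compute $(m_1,\ldots,m_n)$ from $\ell$, and apply the efficient quantum Hermite transform of~\cite{jain2025efficient} on each of modes $1,2$ to map $\ket{m}$ to the discretized Hermite state. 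Next apply the discretized rotation, apply the inverse Hermite transforms, and uncompute the labeling. Because the continuum rotation preserves the total number, the output lies in the span of $\ket{m_1',m_2'}$ with $m_1'+m_2'=m_1+m_2$. The label register is therefore mapped back into $\mathbb C^N$ up to error, and the ancillas return to $\ket 0$ approximately.

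The main obstacle is the error analysis of the discretization. Hermite functions of degree up to $M$ are supported essentially in $|x|\lesssim\sqrt{2M}+\cO(\sqrt{\log(1/\epsilon)})$, and they are band-limited to the same range in momentum. Hence a grid of spacing $\Theta(1/\sqrt{M})$ over that interval, with $\poly(M,\log(1/\epsilon))$ points, represents them with error $\epsilon$. We must show that the discretized shears act on such states within $\cO(\epsilon)$ of the continuum ones. I would try to prove this by bounding the Gaussian tails after each shear. Each shear increases the phase-space radius by at most a constant factor, so enlarging the box by a constant factor suffices. We also need the aliasing error of the discrete Fourier transform on functions well-localized in both $x$ and $p$, and the compatibility with the precise discrete Hermite states that the transform of~\cite{jain2025efficient} prepares. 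Summing the errors of the $\cO(1)$ factors by the triangle inequality gives total error $\epsilon$ and gate complexity $\poly(\log N,\log(1/\epsilon))$, with $N'=\poly(N/\epsilon)$.
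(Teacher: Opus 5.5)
Your proposal is correct in outline but departs from the paper in two respects.

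\textbf{Circuit structure.} The paper applies $\mathrm{QHT}^{\otimes n}$ once and realizes every Euler factor in oscillator space via Claim~\ref{claim:main}: $\hat p^2,\hat x^2$ chirps for $\hat H_i$, $\hat p_j\hat p_k$ and $\hat x_j\hat x_k$ factors for $\hat S_{j,k}$, and shears for $\hat A_{j,k}$. You instead handle the Cartan factors, the mode permutations and the $S\to A$ phase exactly, by reversible arithmetic in the computational basis. You use oscillators only for one two-mode rotation, and your three-shear factorization of it is exactly the paper's formula for $e^{\ri\varphi\hat A_{j,k}}$ in Lemma~\ref{lem:FFQHOs}. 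This costs $\cO(n^2)$ pairs of two-mode QHTs instead of $n$ QHTs, which is irrelevant for constant $n$. It is cleaner to reduce directly to $A_{1,2}$, since $e^{\ri\vartheta\hat S_{1,2}}$ rotates the pairs $(x_1,p_2)$ and $(x_2,p_1)$ rather than $(x_1,x_2)$. The phase you want is $e^{\ri\frac{\pi}{4}(E_{j,j}-E_{k,k})}$, which conjugates $S_{j,k}$ into $A_{j,k}$.

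\textbf{Discretization analysis.} This is the more substantive difference. The paper stays in the span of discrete Hermite states. It uses the exact three-term recurrence for $\bar x$, Somma's bound $\|F|\psi_m\rangle-\ri^m|\psi_m\rangle\|\le e^{-\gamma L}$ for the $p$-type factors, and Taylor truncation of each exponential, which is what forces $|\vartheta_b|\le 1/(2e)$ and $L\ge c'2^rM$. You instead use three facts: $|\psi_m\rangle$ is exactly the $\Delta^{1/2}$-weighted sample vector of $\psi_m$ on the grid with $\Delta=\sqrt{2\pi/L}$; shears are linear changes of variables in the continuum; and, by Poisson summation, the centered DFT of samples equals samples of the Fourier transform up to truncation and aliasing terms. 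This bypasses both the near-eigenstate estimates and the Taylor machinery, and it gives explicit constants.

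What remains to be written is the aliasing bound. Hermite functions are only essentially band-limited, so aliasing must be controlled by pointwise (or weighted Sobolev) decay of the intermediate states in the mixed $(x_1,p_2)$ representation that each DFT sees, not just by $L^2$ tails. This follows from Sobolev embedding together with moment bounds $\|\xi_1\cdots\xi_K\Psi\|\le C^K\sqrt{(M+1)\cdots(M+K)}$, for bounded linear quadratures $\xi_i$ acting on states with at most $M$ quanta. With the QHT grid $L=\cO(M^{2.25}/\epsilon^{3.25})\gg M$ the margin is large. Also split large angles so that the shear parameters $\tan(\varphi/4)$ stay bounded.
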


More precisely, we will show it suffices for the dimension to be $N'=\cO((M^{2.25}/\epsilon^{3.25})^n)$
giving an overall complexity $\cO(n^2(\log (N)+\log(1/\epsilon))^3\times \log(1/\epsilon))$, which is dominated by the quantum Hermite transform (QHT), 
and the factor $n^2$ is due to the decomposition of $U$ into $\cO(n^2)$ 
terms.
See Sec.~\ref{sec:complexity} for more details.

As mentioned, we achieve this result by considering 
the oscillator representation of $\mathfrak{su}(n)$
and a recently developed QHT~\cite{jain2025efficient}. 
In general, $U \ket{\ell}=\sum_{\ell'=0}^{N-1} \alpha_{\ell,\ell'}\ket{\ell'}$ for all basis states where $\alpha_{\ell,\ell'} \in \mathbb C$. A sketch of our construction follows. 
\begin{enumerate}
    \item Apply a unitary $V_1$ that converts $\ket{\ell}$ into the $\ell$-th state by descending lexicographic order: 
    \begin{equation}
        \ket{\ell} \mapsto \ket{m_1(\ell),\ldots,m_n (\ell)}. 
    \end{equation}

    \item Apply a unitary $V_2$ built from $n$ QHTs (i.e., $V_2={\rm QHT}^{\otimes n}$) to map to the Hermite basis, 
     \begin{equation}
       \ket{m_1(\ell),\ldots,m_n(\ell)} \mapsto \sket{\psi_{m_1(\ell)},\ldots,\psi_{m_n(\ell)}}. 
    \end{equation}
    The Hermite states $\ket{\psi_m}$ belong to an enlarged space $\mathbb C^{L}$, where $L=\poly(N/\epsilon)$.

    \item Implement the corresponding unitary $\overline U$ acting on $(\mathbb C^{L})^{\otimes n}$
    obtained from the oscillator representation,
    \begin{align}
           \sket{\psi_{m_1(\ell)},\ldots,\psi_{m_n(\ell)}} \mapsto \sum_{\ell'=0}^{N-1} \alpha_{\ell,\ell'}\sket{\psi_{m_1(\ell')},\ldots,\psi_{m_n(\ell')}} \;,
    \end{align}
     where the coefficients are according to the action of $U$ 
    and $\sum_{i=1}^n m_i(\ell) = \sum_{i=1}^n m_i (\ell')=M$ in the sum. This $\bar U$ implemented as a sequence of $\cO(n^2)$ simpler unitaries.

    \item Apply the inverse of the $n$ QHT's followed by the inverse of $V_1$,
\begin{align}
    \sket{\psi_{m_1(\ell')},\ldots,\psi_{m_n(\ell')}} \mapsto \ket{m_1(\ell'),\ldots,{m_n(\ell')}} \mapsto \ket{\ell'} \;,
\end{align}
where $\ell'  \in \{0,\ldots,N-1\}$.
\end{enumerate}
\vspace{0.2cm}

The overall result is the desired map
\begin{align}
    \ket{\ell} \mapsto  
    \sum_{\ell'=0}^{N-1} \alpha_{\ell,\ell'}\ket{\ell'} = U \ket {\ell}\;.
\end{align}
The unitary $V_1$ in steps 1 and 4 can be efficiently constructed as we show in Sec.~\ref{sec:lexicographic}.
In Sec.~\ref{sec:FFdiscreteQHO} we provide a fast forwarding result
that shows how the corresponding $\overline U$ in step 3 can be decomposed into a product of diagonal operations and Fourier transforms using the oscillator representation. These operations correspond to evolutions with operators that are quadratic in the spatial coordinates or quadratic in the momentum coordinates of the bosonic systems. While the result would deliver $\overline U$ {\em exactly} in the continuum, 
where the Hermite states $\ket{\psi_m}$ are continuous eigenfunctions of the quantum harmonic oscillator $\frac 1 2 (\hat x^2 + \hat p^2)$, 
we must describe the quantum algorithm by acting on finite dimensional spaces (of qubits). This requires us to define an appropriate finite-dimensional quantum harmonic oscillator in $\mathbb C^L$, for which we show that the results in the continuum can be recovered with arbitrary accuracy by setting the dimension $L$ properly. Proving these approximations is one of the main technical contributions of this work.

Our efficient quantum circuits for high-dimensional $SU(n)$ unitaries have an immediate application to efficient and optimal quantum expanders due to Refs.~\cite{lubotzky1988ramanujan,harrow02expander,harrow2007quantum}. 
\begin{corollary}[Explicit Ramanujan quantum expanders]
\label{cor:expanders}
    For any $D = p+1$ where p is a prime congruent to 1 modulo 4, there exists explicit
    quantum circuits that implement a Ramanujan quantum expander
    of degree $D$ and dimension $N$, within error $\epsilon >0$, in time polynomial in $\log(N)$ and $\log(1/\epsilon)$. For this quantum expander,
    the second eigenvalue can get arbitrarily close to the optimal bound $\lambda \le 2 \sqrt{D-1}/D$.
\end{corollary}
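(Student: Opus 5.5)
The plan is to reduce the corollary to Theorem~\ref{thm:main} with $n=2$, combined with the known existence results of Refs.~\cite{lubotzky1988ramanujan,harrow02expander,harrow2007quantum}. Lubotzky--Phillips--Sarnak give, for $p\equiv 1 \pmod 4$ prime, an explicit set of $p+1$ elements $g_1,\ldots,g_{p+1}\in SU(2)$, closed under inverses. These are the normalized integer quaternions of norm $p$, and the relevant $p+1$ solutions of $a^2+b^2+c^2+d^2=p$ can be listed classically in $\poly(p)$ time. Harrow~\cite{harrow2007quantum} shows that, for every irrep $\rho$ of $SU(2)$ of dimension $N$, the operator $\frac1D\sum_{i}\rho(g_i)$ has all eigenvalues on the nontrivial part bounded by $2\sqrt{D-1}/D$, up to corrections that vanish as the dimension grows. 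Here "nontrivial part" means the complement of the trivial subspace; in the superoperator formulation $\Phi(X)=\frac1D\sum_i \rho(g_i) X\rho(g_i)^\dagger$, it is the traceless subspace, since $\rho\otimes\bar\rho$ decomposes into irreps of spin $1,\ldots,N-1$. So the quantum channel $\Phi$ is a $D$-regular quantum expander whose second eigenvalue approaches the Ramanujan bound. I will take these facts as given and only argue efficient, explicit implementability.

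For each $g_i$, I first compute classically the Euler-angle parameters $(\varsigma,\vartheta,\varphi)$ of Eq.~\eqref{eq:SU(n)problem} for $n=2$, to $\cO(\log(N/\epsilon))$ bits, using the Givens/Euler decomposition discussed around Eq.~\eqref{eq:quadratic-decomposition}. Then Theorem~\ref{thm:main} gives circuits $V_i$ with $\|(\rho(g_i)\ket\psi)\ket0 - V_i\ket\psi\ket0\|\le\epsilon'$, each of cost $\poly(\log N,\log(1/\epsilon'))$. For $n=2$ any $N=M+1$ is allowed, so there is no restriction on the dimension. Next, I implement the channel $\Phi$: prepare a uniform superposition over a $D$-dimensional register, apply the controlled unitary $\sum_i \ket i\bra i\otimes V_i$, and trace out the register and ancillas. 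Since $D$ is constant, the control costs only a factor of $D$. Equivalently, I can simply report the explicit family $\{V_i\}$.

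The remaining step is error propagation, which is the only mildly delicate part. The ancilla returns to $\ket0$ only approximately, so I bound the diamond-norm distance between the implemented channel $\tilde\Phi$ and $\Phi$ by $\cO(\epsilon')$, using the standard inequality that unitaries (here, isometries restricted to $\ket0$ ancillas) that are $\epsilon'$-close yield channels that are $2\epsilon'$-close. Eigenvalues of $\tilde\Phi$ restricted to the relevant subspace then move by at most $\cO(\epsilon')$ times a dimension factor when passing from diamond norm to the operator norm on Hilbert--Schmidt space. It is safer to work directly with the $N^2$-dimensional superoperator: its Hilbert--Schmidt operator norm distance is at most $2\epsilon'$ times a constant, because $\|\tilde V_iX\tilde V_i^\dagger - V_iXV_i^\dagger\|_2\le 2\epsilon'\|X\|_2$ for the effective operators $\tilde V_i$ on the system. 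Since $\Phi$ is normal (its Kraus set is closed under inverses, so it is self-adjoint), Weyl's inequality places the second eigenvalue within $\cO(\epsilon')$ of at most $2\sqrt{D-1}/D+o(1)$. Choosing $\epsilon'=\Theta(\epsilon)$ then gives cost $\poly(\log N,\log(1/\epsilon))$. I expect the main obstacle to be stating precisely what "within error $\epsilon$" means for a channel whose implementation is nonunitary because of ancilla leakage, and checking that the perturbation bound does not pick up an $N$-dependent factor. The Hilbert--Schmidt argument above is how I would avoid that factor.
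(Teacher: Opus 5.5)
Your route is the paper's. You take the LPS rotations $U_\rd=\exp(-\ri\theta^\rd\,\vec n^\rd\cdot\vec J)$ built from the $p+1$ integer quaternions of norm $p$ (listed in Appendix~C), invoke LPS/Harrow for the Ramanujan property of $\frac1D\sum_\rd U_\rd\rho U_\rd^\dagger$, and implement each $U_\rd$ with Theorem~\ref{thm:main} at $n=2$. The paper stops at that point: it does no perturbation analysis and only shows numerics for $D=6$ and $D=4$. Your channel construction and error propagation are therefore additions.

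One premise needs correcting, though. The LPS bound is not asymptotic. Via Deligne's theorem, the Hecke operator has every eigenvalue on each spin-$\ell$ subspace, $\ell\ge1$, bounded by $2\sqrt p$. On the traceless part of $\rho\otimes\bar\rho$ (spins $1,\dots,N-1$, as you correctly identify), the exact channel therefore satisfies $\lambda\le 2\sqrt{D-1}/D$ for every $N$. With your caveat ``up to corrections that vanish as the dimension grows'', you end with $2\sqrt{D-1}/D+o(1)+\cO(\epsilon')$. For fixed $N$ that does not let the second eigenvalue approach the Ramanujan bound as $\epsilon\to0$, which is exactly what the corollary claims; the paper's footnote stresses that $\epsilon\to0$ independently of $N$. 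You may be conflating this with Hastings' Haar-random result, which does carry an additive $\cO(1/\sqrt N)$. Drop the $o(1)$.

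There are also two smaller issues in the perturbation step.
\begin{enumerate}
\item The eigenvalue form of Weyl's inequality needs both operators Hermitian, and $\tilde\Phi$ need not be. It is cleaner to use the paper's norm form of the expansion condition: $\sup_{\mathrm{tr}X=0}\|\tilde\Phi(X)\|_2/\|X\|_2\le\lambda+\|\tilde\Phi-\Phi\|_{2\to2}$ follows by the triangle inequality. Alternatively use singular-value Weyl or Bauer--Fike together with self-adjointness of $\Phi$.
\item Your $2\epsilon'$ Hilbert--Schmidt bound holds for the post-selected operators $\tilde V_i=(\one\otimes\bra0)V_i(\one\otimes\ket0)$. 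Once you trace out the ancillas, the leaked part $W_i$ (with $\|W_i\|\le\epsilon'$) adds a CP map $X\mapsto\mathrm{Tr}_{\rm anc}(W_iXW_i^\dagger)$. On traceless inputs its $2\to2$ norm can be of order $\epsilon'^2\sqrt N$, and this is attainable. So an $N$-dependent factor does appear. It is harmless: Theorem~\ref{thm:main} costs only $\poly(\log(1/\epsilon'))$, so taking $\epsilon'=\epsilon/\poly(N)$ absorbs any polynomial-in-$N$ loss at polylogarithmic extra cost. The obstacle you flagged is therefore not a real one.
\end{enumerate}
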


To the best of our knowledge, this is
the first known construction of explicit, efficient, and optimal quantum expanders. Our construction matches the parameters of the proposals in Refs.~\cite{BST08,harrow2007quantum}, though these are, critically, non-constructive. We thus give an algorithm to make this construction explicit.
The details and proof are in Sec.~\ref{sec:applications} and Appendix~\ref{app:Ramanujanexpander}.
Ramanujan expanders are optimal in the sense that their spectral properties mimic those of a Haar random unitary, 
thereby providing the fastest mixing or scrambling rates.
While we are unable to obtain the quantum expander exactly, we note that the error $\epsilon$ can be arbitrarily small in our construction, 
which contrasts other (inefficient) constructions, where the error is fixed by the dimension
(Cf.~\cite{hastings2007random}).

Our quantum circuits also have an application
to exponential fast-forwarding the dynamics of certain quantum systems. For such systems, the Hamiltonian $H$
is a linear combination of matrices or operators that are a high-dimensional irrep of $\mathfrak{su}(n)$.
The evolution operator takes the form of Eq.~\eqref{eq:SU(n)problem}, and it can then be simulated with complexity that is only polylogarithmic in $\|H\|$ from Thm.~\ref{thm:main}, an exponential improvement over no-fast-forwarding results. 
See Sec.~\ref{sec:applications} for examples.

Our construction involves many parameters, which we summarize in the following table for easy record-keeping.
\begin{table}[h!] \label{table:parameters}
    \centering
    \begin{tabular}{l l l}
        \toprule[.5mm]
        \textbf{} & \textbf{Definition} & \textbf{Property} \\
        \midrule
        $n$ & Dimension of the group $SU(n)$ & $n \ge 2$ \\
        $N$ & Dimension of the irrep of $SU(n)$ & Exponentially large \\
        $\epsilon$ & Error in the implementation of $U$, in spectral norm & $ \epsilon >0$
        \\
        $M$ & Total number of bosons in the oscillator representation & $N=\binom{M+n-1}{n-1}$\\
        $L$ & Dimension of each discretized harmonic oscillator & $L=\poly(M/\epsilon)$
        \\
        $N'$ & Dimension of the space for the quantum circuit & $N'=\Theta(L^{n})$\\
        \bottomrule[.5mm]
    \end{tabular}
    \caption{Description of parameters. See the following sections for details.}
    \label{tab:my_table}
\end{table}

\subsection{Related work}
\label{sec:relatedwork}

We now survey a number of related works, including prior attempts to solve the high-dimensional $SU(n)$ problem, related algorithmic tasks in basis transformation and Hamiltonian simulation, and the current landscape of Ramanujan quantum expanders. Directions for future work are outlined in Sec.~\ref{sec:conclusions}.

\subsubsection{Zalka's construction}

In Ref.~\cite{zalka2004implementing}, Zalka posed the problem of implementing high-dimensional irreducible representations of $SU(2)$ on a quantum computer and sketched a possible approach. Zalka's proposal started from the observation that $SU(2)$ is a double cover of $SO(3)$ and hence implementing the irreps of $SU(2)$ is essentially the same problem as implementing the irreps of $SO(3)$, up to some minor technical details. To implement the irreps of $SO(3)$ the author proposed to exploit the fact that, for any fixed $\ell$, the spherical harmonic functions $\{Y_{\ell,m}(x,y,z): m=-\ell,\ldots,\ell\}$ transform under rotations as an irreducible representation $\rho_\ell:SO(3) \to U(2\ell+1)$, indexed by $\ell \in \{0,\frac{1}{2},1,\frac{3}{2},2,\frac{5}{2},\ldots\}$.
Here, $U(N)$ refers to unitary matrices of dimension $N \times N$.

Consider a superposition of the form
\begin{equation}
\label{eq:ketylm}
\ket{Y_{\ell,m}} = \sum_{\mathbf{x} \in \mathbb{R}^3} Y_{\ell,m}(\mathbf{x}) f(|\mathbf{x}|) \ket{\mathbf{x}},
\end{equation}
where $f(|\mathbf{x}|)$ is any envelope function yielding a normalized state (e.g., a Gaussian or a spherical shell). Then, for any $R \in SO(3)$
\begin{equation}
\label{eq:rotform}
\sum_{\mathbf{x} \in \mathbb{R}^3} Y_{\ell,m}(\mathbf{x}) f(|\mathbf{x}|) \ket{R^{-1} \mathbf{x}} = \sum_{m'} [\rho_{\ell}(R)]_{m,m'} \ket{Y_{\ell,m'}}.
\end{equation}
Consequently, if one can efficiently construct superpositions of the form \cref{eq:ketylm} using $\poly(\log (\ell))$ gates, then one can implement $\rho_\ell$ as a quantum circuit of $\mathrm{poly}(\log (\ell))$ gates as follows.
\begin{enumerate}
    \item \label{step:stateprep} Given $m$ construct $\ket{Y_{\ell},m}$ in an ancilla register:
    \begin{align}
    \ket{m} \ket{0\ldots0} \to \ket{m} \ket{Y_{\ell,m}} \;.
    \end{align}
    \item Uncompute $m$. This can be done using Kitaev's phase estimation because $\ket{Y_{\ell,m}}$ transforms under rotations about the $z$-axis as
    \begin{align}
    R_z(\phi) \ket{Y_{\ell,m}} = e^{i m \phi}.
    \end{align}
    One must take care to ensure that $R_z(\phi)$ remains reversible despite the superposition $\sum_{\mathbf{x} \in \mathbb{R}^3}$ being necessarily discretized and truncated onto finite volume. This can be achieved by decomposing the rotation into a sequence of shears as discussed in \cite{zalka2004implementing}. 
    \item Apply the given rotation $R \in SO(3)$ to $\ket{Y_{\ell,m}}$. By \cref{eq:rotform}, this yields
    \begin{align}
    \ket{Y_{\ell,m}} \to \sum_{m'} [\rho_{\ell}(R)]_{m,m'} \ket{Y_{\ell,m'}}\;.
    \end{align}
    Again, this rotation is to be implemented reversibly as a sequence of shears.
    \item Lastly, one must implement the transformation $\ket{Y_{\ell,m'}} \to \ket{m'}$, i.e., extract the index $m'$ as a bit string. This can be achieved by using phase estimation to extract $m'$ into an ancilla register, and then running the gate-by-gate inverse of the state preparation circuit from step \ref{step:stateprep} to uncompute $\ket{Y_{\ell,m'}}$.
\end{enumerate}

The key challenge is the first step, of preparing the superposition $\ket{Y_{\ell,m}}$. Zalka proposed to do this using the method of conditional rotations, that was previously introduced in \cite{Z98} (and which was further refined by Grover and Rudolph in \cite{GR02}). This can be done efficiently provided integrals of $Y_{\ell,m}^2(\mathbf{x})$ can be efficiently approximated on various finite regions of $\mathbb{R}^3$. In \cite{zalka2004implementing}, Zalka conjectured that this is indeed possible but pointed out that it was nontrivial because for exponentially large $\ell$ the integrand $Y_{\ell,m}^2$ becomes highly oscillatory, with exponentially many nodes. To our knowledge, this question has never been resolved.

Our approach is in some ways similar to Zalka's original proposal in that we consider an isometry to a space of continuous $L^2$ functions. However, we circumvent the problem of integrating $Y_{\ell,m}^2(\mathbf{x})$ by instead taking a more direct approach via a quantum Hermite transform \cite{jain2025efficient}, which avoids the evaluation of any oscillatory integrals by using the Plancharel-Rotach formula.

\subsubsection{Schur transform}
\label{Schurxform}

While the problem we consider is distinct from the quantum Schur transform, there are some related characteristics worth emphasizing.
Let $\mathcal{H}=(\mathbb{C}^d)^{\otimes n}$ be the Hilbert space of $n$ $d$-dimensional qudits.
We can act on this Hilbert space by choosing an element $u$ in the fundamental representation of the unitary group $U(d)$, which is a $d \times d$ unitary, and applying it to each qudit:
\begin{equation}
\ket{\psi} \to u^{\otimes n} \ket{\psi}\;.
\end{equation}
We can also act on this Hilbert space $\cH$ by choosing an element $\pi \in S_n$ of the symmetric group and correspondingly permuting the $n$ qudits: 
\begin{equation}
\ket{\psi_1} \otimes \ldots \otimes \ket{\psi_n} \to M_\pi \ket{\psi_1} \otimes \ldots \otimes \ket{\psi_n} = \sket{\psi_{\pi^{-1}(1)}} \otimes \ldots \otimes \sket{\psi_{\pi^{-1}(n)}},
\end{equation}
where $u^{\otimes n}$ and $M_\pi$ are reducible unitary $d^n$-dimensional representations of the unitary group $U(d)$ and the symmetric group $S_n$, respectively. These two actions on $\mathcal{H}$ commute. 

The irreducible representations of $S_n$ are in bijective correspondence with the partitions of $n$. Any partition of $n$ into at most $d$ parts indexes an irreducible representation of $U(d)$. $U(d)$ has infinitely many irreducible representations so, for any fixed $n$, these partitions only index a special subset of them. As discussed in Ref.~\cite{BCH07}, there exists a unitary change of basis $U_{\mathrm{Schur}}$ such that
\begin{equation}
\label{eq:schurdef}
U_{\mathrm{Schur}} M_\pi u^{\otimes n} U_{\mathrm{Schur}}^{-1} =
\bigoplus_\lambda \rho_\lambda(\pi) \otimes \nu_\lambda(u),
\end{equation}
where $\lambda$ ranges over all partitions of $n$ into at most $d$ parts. (If $d < n$ then not all irreducible representations of $S_n$ appear in the decomposition of $M_\pi$.)

In Refs.~\cite{H05, BCH06, BCH07} it was shown that an $\epsilon$-approximation  of $U_{\mathrm{Schur}}$ can be constructed from $\mathrm{poly}(n,d \log (1/\epsilon))$ quantum gates. The efficiency of quantum Schur transforms for large $d$ was subsequently improved in Ref.~\cite{BFG25} where, building upon Ref.~\cite{K19}, it was shown in  that ${U}_{\mathrm{Schur}}$ can be implemented using $\cO(n^4 \ \mathrm{poly}(\log d, \log (1/\epsilon)))$ gates. The implementation with the best scaling in $n$ is from Ref.~\cite{N23}, where it was shown that ${U}_{\mathrm{Schur}}$ can be implemented using $\cO(n d^4 \ \mathrm{poly}(\log(1/ \epsilon)))$ gates.

Quantum circuits for Schur transforms give as immediate corollaries quantum circuits for implementing irreducible representations of $U(d)$ of dimension higher than $d$. Specifically, from \cref{eq:schurdef} we see that by applying the circuit for $U_{\mathrm{Schur}}^{-1}$,  followed by $u^{\otimes n}$, followed by the circuit for $U_{\mathrm{Schur}}$ we get a direct sum containing all irreducible representations of $U(d)$ indexed by partitions of $n$ into at most $d$ parts. This construction also yields irreducible representations of $SU(d)$ because the irreducible representations of $U(d)$ remain irreducible when restricted to $u \in U(d)$ such that $\textrm{det}(u) = 1$. As discussed in later sections, the irreducible representations of $U(d)$ and $SU(d)$ are indexed by a list of $d$ integers, called a \emph{highest weight}. The Schur transform  does not yield an efficient way to replicate our results in this paper, because we consider representations in which the integers in the highest weight can be exponentially large, and these only arise in the Schur transform when the number of qudits $n$ is exponentially large.

\subsubsection{Hamiltonian-based methods for Lie group representations}

Using quantum Hamiltonian simulation one may attempt to implement exponentials of a representation of the Lie algebra $\mathfrak{g}$ and thereby obtain the corresponding representation of the Lie group $G$. 
Let $\rho$ denote the representation. Given an arbitrary element of $\Gamma \in G$ it is likely difficult to implement $e^{\rho(\gamma)}$ for a corresponding $\gamma$ such that $\Gamma = e^{\gamma}$ when $\rho$ has exponentially high dimension. Instead, suppose we can find some polynomial-size set of generators
$\{\gamma_1,\ldots,\gamma_m\} \in \mathfrak{g}$ such that:
\begin{itemize}
\item[] i) Given an arbitrary element $\Gamma \in G$, one can efficiently decompose it in the form 
\begin{equation}
    \Gamma = \prod_{j=1}^m e^{\gamma_j t_j}
\end{equation}
where $t_1,\ldots,t_m \in \mathbb{R}$ are coefficients efficiently computable from $G$, and
\item[] ii) Each of $e^{\gamma_1 t_1},\ldots,e^{\gamma_m t_m}$ can be implemented by polynomial-size quantum circuits.
\end{itemize}
Then, one can concatenate these quantum circuits to obtain the representation $e^{\rho(\gamma)}$ of $G$. In particular, if $\gamma_1,\ldots,\gamma_m$ are matrices of $\poly(\log(N))$ sparsity with efficiently computable matrix elements, then for each $j=1\ldots,m$ the unitary $e^{\gamma_j t_j}$ might be implemented by a quantum circuit of size $\cO(\|\gamma_j t_j\|)$ using known Hamiltonian simulation methods such as Ref.~\cite{berry2015hamiltonian}.

In Ref.~\cite{J09} this approach is used to implement exponentially-high dimensional irreducible representations of $U(n)$, $SU(n)$ and $SO(n)$. Via the formalism of Gel'fand Tsetlin diagrams, an irreducible representation of any of these Lie Groups can be specified by an $n$-tuple of integers, called the highest weight\footnote{It can happen that different $n$-tuples index the same irrep. In particular, this occurs for $SU(2)$.}. The irreducible representations of $U(n)$, $SU(n)$, and $SO(n)$, have dimension scaling as $w^{\cO(n)}$, where $w$ is the maximum magnitude of any entry in the $n$-tuple specifying the highest weight. The norms of the Hamiltonians used in \cite{J08} grow polynomially with $w$ and are independent of $n$. Thus, although the methods of \cite{J08} efficiently implement exponentially high dimensional irreps arising from fixed $w$ and polynomially large $n$, they do not efficiently implement the exponentially high dimensional irreps considered here, namely those arising from fixed $n$ and exponentially large $w$.
This is related to the fact that the $N$-dimensional irreps of the vectors of these algebras can have spectral norm that is linear in $N$, which is exponentially large in our setting.

\subsubsection{Almost Ramanujan quantum expanders}

In Ref.~\cite{jeronimo2025almost} the authors present an efficient algorithm to produce almost Ramanujan expanders from any expander. Their approach extends to the construction of almost Ramanujan {\em quantum} expanders that are also efficient and explicit as in our case. Their results are formulated as a bound on the degree $D$ in terms of the ``expansion parameter'' $\lambda$, which is the second largest eigenvalue. More precisely, they achieve $D=\cO(1/\lambda^{2 + o_\lambda(1)})$, where 
$o_\lambda(1)$ scales as $1/\sqrt{\log(1/\lambda)}$ and is asymptotically zero as $\lambda \rightarrow 0$. However, for small, constant $D$, the second eigenvalue is $\lambda>0$ and hence the dependence of $D$ with $\lambda$ is worse than $1/\lambda^2$ and suboptimal. 
In contrast, our construction does not have this additional dependence and satisfies the Ramanujan condition between $D$ and $\lambda$ even for small $D$.

\section{Oscillator representation of $SU(n)$}
\label{sec:oscillatorrep}

In this section, we present essential concepts regarding the oscillator, or Jordan-Schwinger, representation, which allows us to realize $\mathfrak{su}(n)$ using $n$ quantum harmonic oscillators. This will form the core of our fast-forwarding method, enabling us to factorize these rotations into exponentials of quadratic monomials in position and momentum.
Later, in Sec.~\ref{sec:finiteQHO}, we will discretize the representation presented here to obtain bounds on the error of our procedure.

\subsection{Jordan-Schwinger mapping}
As mentioned, the $N$-dimensional representation
of  $SU(n)$ of interest can be obtained by exponentiating
the corresponding $N$-dimensional matrices from the irreducible representation of the algebra $\mathfrak{su}(n)$. 
One way of obtaining the latter is via the oscillator representation, also known as the Jordan-Schwinger mapping. 
In essence, we will use the fact that the group $SU(n)$
is a subgroup of the symplectic group $Sp(2n, \mathbb R)$,
and the latter can be generated by certain bosonic operators.

To this end, we consider $n$ bosonic modes, or quantum harmonic oscillators, where the total number of bosons is set to some $M \ge 0$. Let $a^\dagger_j$ and $a^{}_j$ be the creation and annihilation operators of the $j^{\rm th}$ bosonic mode. Then we  define $\hat E_{j,k}:=a^\dagger_j a^{}_k$ and $\hat H_i =\frac 1 2 ( a^\dagger_i a^{}_i - a^\dagger_{i+1} a^{}_{i+1})$, where $1 \le j < k \le n$ and $1 \le i \le n-1$. Accordingly, we define $\hat S_{j,k}:=\frac 1 2 (\hat E_{j,k} + (\hat E_{j,k})^\dagger)$ and 
$\hat A_{j,k}:=\frac {\ri} 2 (\hat E_{j,k} - (\hat E_{j,k})^\dagger)$.
The canonical commutation relations of bosonic operators are $[a^{}_j,a^\dagger_k]=\delta_{j,k}$, where $\delta_{j,k}$ is the Kronecker delta. These in turn imply that such operators also satisfy the same commutation relations in the abstract Lie algebra $\mathfrak{su}(n)$, and hence they also span $\mathfrak{su}(n)$. See Appendix~\ref{app:schwingerrep} for more details.

In the Jordan-Schwinger representation, an irreducible representation of $\mathfrak{su}(n)$ is obtained by considering the subspace spanned by the bosonic states of definite particle number $M$. This is the subspace spanned by Fock states
\begin{align}
    \cB_{n,M} :=\left \{  \ket{m_1,m_2,\ldots,m_n} : \sum_{j=1}^n m_j = M \right \} \;.
\end{align}
The dimension of this subspace is $N=\binom{M+n-1}{n-1}$.
For now and for simplicity, $\ket{m_1,m_2,\ldots,m_n}$ represents the state with $m_1$ bosons in the first site, $m_2$ bosons in the second site, and so on.
The representation is obtained using the known properties
of the bosonic operators, namely
\begin{align}
\label{eq:creation}
a^\dagger_j \ket{m_1,\ldots,m_j, \ldots, m_n}& = \sqrt{m_j+1}\ket{m_1,\ldots,m_j+1, \ldots, m_n} \;,\\
\label{eq:annihilation}
a^{}_j \ket{m_1,\ldots,m_j, \ldots, m_n}& = \sqrt{m_j}\ket{m_1,\ldots,m_j-1, \ldots, m_n} \;.
\end{align}

Consider now the standard descending lexicographic ordering for the basis $\cB_{n,M}$, given by strings of size $n$. For example, 
 $\cB_{3,2}=\{\ket{3,0,0},\ket{2,1,0},\ket{2,0,1},\ket{0,3,0},\ket{0,2,1},\ket{0,0,3}\}$. More generally, labeling these states as $\{\ket{\ell}:0 \le \ell \le N-1\}$, the ordering gives a one-to-one map such that $\ket{\ell} \equiv \ket{m_1(\ell),\ldots,m_n(\ell)}$. Then, Eqs.~\eqref{eq:creation} and~\eqref{eq:annihilation} imply
 \begin{align}
 \nonumber
     \hat E_{j,k} \ket{\ell} &\equiv  \hat E_{j,k}
     \ket{m_1(\ell),\ldots,m_j(\ell),\ldots,m_k(\ell),\ldots,m_n(\ell)} \\
     \nonumber
     &= \sqrt{(m_j(\ell)+1)m_k(\ell)}\ket{m_1(\ell),\ldots,m_j(\ell)+1,\ldots,m_k(\ell)-1,\ldots,m_n(\ell)} \\
     & \equiv\sqrt{(m_j(\ell)+1)m_k(\ell)} \ket{\ell'}\;,
 \end{align}
 where $\ell'$  is such that $m_1(\ell')=m_1(\ell),\ldots,m_j(\ell')=m_j(\ell)+1,\ldots,m_k(\ell')=m_k(\ell)-1,\ldots,m_n(\ell')=m_n(\ell)$. Also,
 \begin{align}
 \nonumber
     \hat H_i \ket{\ell} &\equiv    \hat H_i \ket{m_1(\ell),\ldots,m_i(\ell),m_{i+1}(\ell),\ldots,m_n(\ell)} \\
     \nonumber
     &= (m_i(\ell)-m_{i+1}(\ell) )\ket{m_1(\ell),\ldots,m_i(\ell),m_{i+1}(\ell),\ldots,m_n(\ell)} \\
     & \equiv(m_i(\ell)-m_{i+1}(\ell) )\ket{\ell}\;.
 \end{align}
 The action of these operators on the basis $\cB_{n,M}$
 immediately gives an $N$-dimensional representation for them, where $\hat H_i$, $\hat S_{j,k}$, and $\hat A_{j,k}$ are represented by the Hermitian matrices $H_i$, $S_{j,k}$, and $A_{j,k}$, respectively. Simple inspection shows that these matrices are the ones obtained from Eqs. \eqref{eq:laddersymrepresentation} and~~\eqref{eq:diagonalsymrepresentation}. 
 That is, the Jordan-Schwinger representation provides
 the totally symmetric representation of $SU(n)$ of interest.

 A viable path towards implementing the high-dimensional $SU(n)$ unitary is then by implementing the corresponding bosonic operators over the subspace spanned by the Fock states.  These Fock states are the eigenstates (eigenfunctions) of the quantum harmonic oscillator. That is, consider the Hamiltonian
\begin{align}
    \hat H := \frac 1 2 (\hat x^2 + \hat p^2) \;,
\end{align}
where $\hat x$ and $\hat p$ are   the position and momentum operators, acting on $L^2$ functions as $\hat x f(x)=xf(x)$
and $\hat p f(x)=-\ri \partial_x f(x)$. The eigenfunctions of $\hat H$ are the well-known (physicist's) Hermite functions:
\begin{align}
\label{eq:Hermitefunctions}
    \psi_m(x) : = \frac 1 {\sqrt {2^m m! \sqrt \pi}}e^{-x^2/2} H_m(x) \implies \hat H \psi_m(x) = (m+1/2) \psi_m(x) \;,
\end{align}
for all integer $m \ge 0$.
Here, $H_m(x)$ is the $m^{\rm th}$ Hermite polynomial that satisfies
\begin{align}
    H_m(x) =(-1)^m e^{x^2} \frac{\rd^m}{\rd x^m}e^{-x^2}\;.
\end{align}
Then, in our notation for $\cB_{n,M}$, we can express
\begin{align}
    \ket{m_1,\ldots,m_n} \equiv \left( \int \rd x_1 \;\psi_{m_1}(x_1)\ket{x_1}\right) \otimes \cdots \otimes \left( \int \rd x_n \; \psi_{m_n}(x_n)\ket{x_n}\right)\;,
\end{align}
where $\ket x$ represents the eigenstate of $\hat x$, that is, $\hat x \ket x = x \ket x$, and $x \in \mathbb R$.

In the following, we will represent the eigenstates of the quantum harmonic oscillator as $\tshostate{m}:=
\int \rd x \;\psi_{m} (x)\ket{x}$, so that $ \ket{m_1,\ldots,m_n} \rightarrow \ket{\psi_{m_1}^{\rm c},\ldots,\psi_{m_n}^{\rm c}}:=\tshostate{m_1} \otimes \ldots \otimes  \tshostate{m_n}$ also represent the Fock states. (The superscript $c$  denotes the states in the continuum.)
This is useful to make the distinction with the standard computational-basis states used in the description of our quantum algorithm (i.e., the unitary $U$) and the states appearing in the discrete quantum harmonic oscillator.

Next we show how to efficiently implement $U$ by showing how to implement the corresponding $\hat U$ in the 
 $N$-dimensional subspace spanned by Fock states.

\subsection{Fast-forwarding $SU(n)$ in the oscillator representation}
\label{sec:fast-forwarding overview}

In the oscillator representation, any $SU(n)$
unitary transformation $\hat U$ can be written as a product of simple unitaries, each involving the exponential 
of a single basis operator $\hat H_i$, $\hat S_{j,k}$, or $\hat A_{j,k}$ only. 
We now show a way to fast-forward each of these operations. To this end, it is convenient to write the creation and annihilation operators in terms of the position and momentum operators. That is, for $1 \le j \le n$, we define
\begin{align}
   \hat x_j := \frac 1 {\sqrt 2} (a^\dagger_j + a^{}_j), \; \hat p_j := \frac{\ri}{\sqrt 2}(a^\dagger_j - a^{}_j). 
\end{align}
The commutation relations for $a^\dagger_j$ and $a^{}_j$ imply
\begin{align}
    [\hat x_j, \hat p_k]=\ri \delta_{j,k}, \; [\hat x_j,\hat x_k]=[\hat p_j,\hat p_k]=0\;.
\end{align}
The corresponding Lie algebra of position and momentum operators is known as the Heisenberg algebra, and the Lie algebra of quadratic position and momentum operators is ${\mathfrak {sp}}(2n,\mathbb R)$, which contains ${\mathfrak{su}}(n)$. 
It follows that
\begin{align}
\label{eq:HiJSrep}
    \hat H_i &= \frac 1 2 \left((\hat x_i)^2 + (\hat p_i)^2 - (\hat x_{i+1})^2- (\hat p_{i+1})^2 \right) \; , \\
    \label{eq:SjkJSrep}
    \hat S_{j,k} &=\frac 1 2 \left(\hat x_j \hat x_k + \hat p_j \hat p_k \right)\; , \\
     \label{eq:AjkJSrep}
    \hat A_{j,k} &=\frac 1 2 \left(\hat p_j \hat x_k - \hat x_j \hat p_k \right)\;,
\end{align}
where $1 \le i \le n-1$ and $1 \le j<k\le n$.

In the real-space basis where $\hat x_j$ is diagonal (i.e., the basis spanned by $\ket x$), we assume it is ``easy'' to implement the corresponding unitaries obtained via exponentiation of position operators, since
these unitaries are diagonal. A similar result would follow for exponentials of momentum operators, since position and momentum are related via conjugation with the Fourier transform, and this can also be implemented ``easily'': $\momentum=\hat F^{-1} \position \hat F^{}$, where $F$ is the Fourier transform in $\mathbb R$. However, exponentials of linear combinations involving both position and momentum are a priori not easy to implement. Nevertheless, if one is able to decompose such exponentials as a sequence of exponentials involving position or momentum operators only, then the corresponding exponentials become easily implementable.

To show this is possible, 
we build upon the following known result~\cite{QA07,jain2025efficient}.
\begin{claim}
\label{claim:main}
Let $K_1, K_2, K_3$ be three operators in a Lie algebra obeying the commutation relations
\begin{align}
[K_1, K_2] = -2K_3, \quad
[K_3, K_1] = 2K_1, \quad
[K_3, K_2] = -2K_2.
\end{align}
Then for all $t \in \mathbb{R}$,
\begin{align}
e^{(K_1+K_2)t} = e^{\alpha K_2} e^{\beta K_1} e^{\alpha K_2},
\end{align}
where
\begin{align}
\alpha = \frac{\tan(t/\sqrt{2})}{\sqrt{2}}, 
\qquad 
\beta = \frac{\sin(\sqrt{2} t)}{\sqrt{2}}.
\end{align}
\end{claim}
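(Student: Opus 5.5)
The plan is to reduce the claim to a computation in a faithful $2\times 2$ matrix representation, and then transfer the identity back to the abstract setting. First I will identify the Lie algebra. Let $E,F,H$ be the standard $\mathfrak{sl}(2)$ generators, with $[H,E]=2E$, $[H,F]=-2F$ and $[E,F]=H$. The assignment $K_1=\sqrt2\,E$, $K_2=-\sqrt2\,F$, $K_3=H$ reproduces the stated relations: the last two hold trivially, and $[K_1,K_2]=-2[E,F]=-2K_3$. Concretely I will take
\begin{align}
K_1=\begin{pmatrix}0&\sqrt2\\0&0\end{pmatrix},\quad K_2=\begin{pmatrix}0&0\\-\sqrt2&0\end{pmatrix},\quad K_3=\begin{pmatrix}1&0\\0&-1\end{pmatrix}.
\end{align}

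Next I will verify the identity in this representation, which is a direct computation. Set $\theta=t/\sqrt2$. Since $K_1+K_2=\sqrt2\begin{pmatrix}0&1\\-1&0\end{pmatrix}$, its exponential is the rotation with entries $\cos 2\theta$ on the diagonal, $\sin2\theta$ upper right and $-\sin2\theta$ lower left. On the right-hand side, $e^{\alpha K_2}$ is lower triangular with off-diagonal entry $a=-\sqrt2\alpha$, and $e^{\beta K_1}$ is upper triangular with off-diagonal entry $b=\sqrt2\beta$. Their product $LUL$ equals
\begin{align}
\begin{pmatrix}1+ab&b\\a(2+ab)&1+ab\end{pmatrix}.
\end{align}
Matching entries gives three conditions, each of which I will check:
\begin{itemize}
\item $b=\sin2\theta$ fixes $\beta=\sin(\sqrt2 t)/\sqrt2$.
\item $1+ab=1-2\alpha\beta=1-2\sin^2\theta=\cos2\theta$ follows from $\alpha\beta=\tan\theta\sin2\theta/2=\sin^2\theta$.
\item $a(2+ab)=-\sqrt2\alpha\cdot 2\cos^2\theta=-2\sin\theta\cos\theta=-\sin2\theta$.
\end{itemize}
So the identity holds exactly for every $t$ with $\cos(t/\sqrt2)\neq0$.

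The main obstacle is transferring the identity from this one representation to arbitrary operators satisfying the relations; the algebraic check itself is routine. I will argue in three steps.
\begin{itemize}
\item For $|t|$ small, both sides are exponentials of elements of the three-dimensional algebra $\mathfrak{g}=\mathrm{span}\{K_1,K_2,K_3\}$. By the Baker--Campbell--Hausdorff formula, $\log(e^{\alpha K_2}e^{\beta K_1}e^{\alpha K_2})$ is a convergent series of nested brackets determined solely by the structure constants.
\item The $2\times2$ representation is faithful on $\mathfrak g$, so equality of the two exponents there forces equality of the abstract Lie-algebra elements. The identity then holds in any representation, in particular for the bounded or finite-dimensional operators used later in the paper.
\item To reach all admissible $t$, I will note that both sides are real-analytic in $t$ on each interval where $\cos(t/\sqrt2)\neq0$, so they agree on the whole interval. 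Alternatively, one can use that $t\mapsto e^{(K_1+K_2)t}$ is a one-parameter group: the identity at small $t$ also yields it at larger $t$ through the same $SL(2)$ computation, applied via the homomorphism from the simply connected group.
\end{itemize}
For the unbounded quadratic operators $\hat x_j\hat x_k$ and similar, I would add a remark that the identity holds on the invariant dense domain of finite Fock-state combinations, where these operators act as a finite-dimensional representation on each fixed-$M$ block.
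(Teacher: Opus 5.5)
Your $2\times 2$ computation is correct. The paper gives no proof of this claim; it quotes it from \cite{QA07,jain2025efficient}. Your small-$t$ BCH/faithfulness step plus analytic continuation is also fine for bounded representations on $|t|<\pi/\sqrt2$. The gap is the extension to the other intervals. In the generality you claim it cannot be repaired, because there the statement is false for the representation the paper actually uses. Take $K_1=\ri\hat x^2/\sqrt2$, $K_2=\ri\hat p^2/\sqrt2$, $K_3=\ri(\hat x\hat p+\hat p\hat x)/2$ as in Lemma~\ref{lem:FFQHOs}, so $K_1+K_2=\ri\sqrt2\,\hat H$. At $t=\sqrt2\pi$ we have $\alpha=\beta=0$, so the right-hand side is $\one$. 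The left-hand side is $e^{2\pi\ri\hat H}=-\one$, since $\hat H$ has spectrum $\{m+1/2\}$. More generally, on the interval of admissible $t$ containing $\sqrt2 k\pi$ the two sides differ by $(-1)^k$.

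This is why both of your extension arguments fail. Analyticity cannot carry the small-$t$ agreement across the poles of $\alpha$. Your alternative fails because the simply connected group of $\mathfrak{sl}(2,\mathbb R)$ is $\widetilde{SL}(2,\mathbb R)$. There the $2\times2$ computation fixes each side only modulo the covering kernel $\pi_1(SL(2,\mathbb R))\cong\mathbb Z$. On the $k$-th interval the sides differ by the $k$-th power of the lift of the $2\pi$-rotation, and the oscillator (metaplectic) representation sends that lift to $-\one$. What is true is weaker. The identity holds for $|t|<\pi/\sqrt2$ in any representation that integrates to the group: lift the $2\times2$ identity along the path from $t=0$. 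It holds for all admissible $t$ only in representations factoring through $SL(2,\mathbb R)$, e.g.\ finite-dimensional ones, which you get by integrating the complexification to the simply connected $SL(2,\mathbb C)\supset SL(2,\mathbb R)$. Lemma~\ref{lem:FFQHOs} only uses $|t|/\sqrt2\le\pi/4$, so the restricted statement is the one to prove.

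Second, the case the paper needs is the unbounded one, and your treatment of it does not work. The claim is applied only to continuum operators such as $\ri\hat x_i^2/\sqrt2$, $\ri\sqrt2\,\hat x_j\hat x_k$ and $\ri\sqrt2\,\hat p_j\hat x_k$. The discretized $\overline x,\overline p$ do not satisfy the relations, so the paper never applies the claim to bounded or finite-dimensional operators. For the unbounded operators, BCH in operator norm is unavailable. Your fallback is also false. The operators $\hat x^2$, $\hat p^2$, $\hat x_j\hat x_k$, $\hat p_j\hat p_k$ contain $a^{\dagger 2}$, $a^2$, $a_j^\dagger a_k^\dagger$, $a_ja_k$, so they do not conserve boson number, and no fixed-$M$ block is invariant; only $K_1+K_2$ is number-conserving. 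In fact there is no finite-dimensional invariant subspace at all. Such a subspace would have to carry the trivial representation, since $\mathfrak{sl}(2,\mathbb R)$ has no nontrivial finite-dimensional representation by skew-Hermitian matrices, yet $K_1$ is an injective multiplication operator.

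A correct route for the oscillator case uses your own computation. Both sides conjugate $(\hat x,\hat p)$ by the same symplectic matrix, which is exactly what your $2\times2$ check shows. By irreducibility of the Schr\"odinger representation they therefore agree up to a phase. Both sides lie in the metaplectic group, since the oscillator realization integrates to $Mp(2,\mathbb R)$, e.g.\ via Nelson's theorem. So the phase is $\pm1$, and strong continuity from $t=0$ makes it $+1$ on $|t|<\pi/\sqrt2$.
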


Up to constant factors, the Lie algebra spanned by $K_1$, $K_2$, and $K_3$
is the well known $\mathfrak{sp}(2,\mathbb R)$
Lie algebra of the quantum harmonic oscillator. 
This claim is what allowed for the fast forwarding result in Ref.~\cite{jain2025efficient}, but we now use for $n$ coupled quantum harmonic oscillators.

\begin{lemma}
\label{lem:FFQHOs}
    Let $\varsigma, \ \vartheta, \ \varphi \in [-\pi/2,\pi/2]$, $n \ge 2$, and consider the operators $\hat H_i$, $\hat S_{j,k}$, and $\hat A_{j,k}$ of Eqs.~\eqref{eq:HiJSrep},~\eqref{eq:SjkJSrep}, and~\eqref{eq:AjkJSrep}, where $1 \le i \le n-1$ and $1 \le j < k \le n$. Then,
    \begin{align}
        e^{\ri \varsigma \hat {H}_i}& = e^{\ri \varsigma_1   ({\hat p}_i)^2} 
        e^{\ri \varsigma_2  ({\hat x}_i)^2}
        e^{\ri \varsigma_1   ({\hat p}_i)^2} 
        e^{\ri \varsigma_1   ({\hat p}_{i+1})^2} 
        e^{\ri \varsigma_2  ({\hat x}_{i+1})^2}
        e^{\ri \varsigma_1   ({\hat p}_{i+1})^2}
          \;, \\
         e^{\ri \vartheta \hat {S}_{j,k}}& =
         e^{\ri \vartheta_1 \hat p_j \hat p_k}
        e^{\ri \vartheta_2 \hat x_j \hat x_k}
         e^{\ri \vartheta_1 \hat p_j \hat p_k}
         \;, \\
         e^{\ri \varphi \hat {A}_{j,k}}& =
         e^{-\ri \varphi_1   {\hat x}_j \hat p_k}  e^{\ri \varphi_2   {\hat p}_j \hat x_k}  e^{-\ri \varphi_1   {\hat x}_j \hat p_k} \;,
    \end{align}
    where $\varsigma_1= \tan(\varsigma/2)/ 2$,
    $\varsigma_2= \sin(\varsigma)/ 2$, $\vartheta_1=\tan(\vartheta/4)$, $\vartheta_2=\sin(\vartheta/2)$, 
    $\varphi_1=\tan(\varphi/4)$, and $\varphi_2=\sin(\varphi/2)$.
\end{lemma}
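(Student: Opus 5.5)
The plan is to reduce each of the three identities to Claim~\ref{claim:main}. For each generator I will pick a triple $K_1,K_2,K_3$ of anti-Hermitian quadratic operators in $\mathfrak{sp}(2n,\mathbb R)$ such that:
\begin{itemize}
\item $K_1$ is a multiple of $\ri$ times a pure-position (or pure-momentum) quadratic monomial;
\item $K_2$ is the corresponding pure-momentum (or mixed) monomial;
\item the commutation relations of the Claim hold;
\item $(K_1+K_2)t$ equals the desired exponent for a suitable $t$.
\end{itemize}
Matching $\alpha$ and $\beta$ then gives $\varsigma_{1,2}$, $\vartheta_{1,2}$ and $\varphi_{1,2}$.

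\textbf{Case of $\hat H_i$.} Modes $i$ and $i+1$ commute, so $e^{\ri\varsigma \hat H_i}$ factors into one-mode exponentials $e^{\pm \ri \frac{\varsigma}{2}(\hat x^2+\hat p^2)}$. For a single mode take $K_1=\ri a\hat x^2$ and $K_2=\ri a\hat p^2$. From $[\hat x^2,\hat p^2]=2\ri(\hat x\hat p+\hat p\hat x)$ we get $[K_1,K_2]=-2K_3$ with $K_3=\ri a^2(\hat x\hat p+\hat p\hat x)$. A direct computation gives $[K_3,K_1]=4a^2K_1$ and $[K_3,K_2]=-4a^2K_2$, so the Claim's relations hold with $a=1/\sqrt2$. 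Then $e^{\ri\frac{\varsigma}{2}(\hat x^2+\hat p^2)}=e^{(K_1+K_2)t}$ with $t=\varsigma/\sqrt2$. This yields $\alpha K_2=\ri\frac{\tan(\varsigma/2)}{2}\hat p^2$ and $\beta K_1=\ri\frac{\sin\varsigma}{2}\hat x^2$, i.e.\ exactly $\varsigma_1,\varsigma_2$.

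For mode $i+1$ the same argument applies with $\varsigma\to-\varsigma$. Since $\varsigma_1$ and $\varsigma_2$ are odd in $\varsigma$, this produces the stated factor up to this sign convention, which I will track carefully against the definition of $\hat H_i$ in Eq.~\eqref{eq:HiJSrep}.

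\textbf{Case of $\hat S_{j,k}$.} Take $K_1=\ri b\,\hat x_j\hat x_k$ and $K_2=\ri b\,\hat p_j\hat p_k$. One computes
\[
[\hat x_j\hat x_k,\hat p_j\hat p_k]=\ri(\hat x_j\hat p_j+\hat p_k\hat x_k)=\ri(\hat x_j\hat p_j+\hat x_k\hat p_k)+1 .
\]
Hence $K_3$ is (up to scaling) $\ri\big(\hat x_j\hat p_j+\hat x_k\hat p_k-\ri\big)$, where the central constant is harmless because it commutes with everything. Commuting $K_3$ with $\hat x_j\hat x_k$ and with $\hat p_j\hat p_k$ returns multiples of these, and this fixes $b$. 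Matching $(K_1+K_2)t=\ri\frac{\vartheta}{2}(\hat x_j\hat x_k+\hat p_j\hat p_k)$ then gives $\alpha=\tan(\vartheta/4)$ and $\beta=\sin(\vartheta/2)$.

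\textbf{Case of $\hat A_{j,k}$.} This is analogous, with $K_1\propto \ri\hat p_j\hat x_k$, $K_2\propto-\ri\hat x_j\hat p_k$, and $K_3\propto \ri(\hat x_j\hat p_j-\hat x_k\hat p_k)$. Alternatively, it follows from the $\hat S_{j,k}$ case by conjugating with the phase-space rotation $e^{\ri\frac{\pi}{4}(\hat x_k^2+\hat p_k^2)}$ (a Fourier transform on mode $k$). That rotation maps $\hat x_k\to\hat p_k$ and $\hat p_k\to-\hat x_k$, so it sends $\hat S_{j,k}$ to $\hat A_{j,k}$ up to sign.

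\textbf{Main obstacle.} The main difficulty is rigor: these are unbounded operators, so the Claim's formal Lie-algebra identity does not immediately give an operator identity. I would argue through the metaplectic representation. Both sides of each identity are strongly continuous one-parameter families of unitaries generated by quadratic Hamiltonians. Their Heisenberg actions on $(\hat x,\hat p)$ are linear symplectic maps, and equality of these maps is a finite-dimensional $2\times2$ or $4\times4$ matrix check, which is exactly the Claim in the defining representation. Equality of the symplectic maps fixes the unitaries up to a phase, and the phase is $\pm1$ by the metaplectic double cover. Both sides are continuous in the angle on $[-\pi/2,\pi/2]$ and agree at angle $0$, where they equal the identity. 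On this interval $\tan(\cdot/2)$ and $\tan(\cdot/4)$ stay finite, so continuity forces the sign to be $+1$ throughout.
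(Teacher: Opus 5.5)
Your argument follows the paper's proof. You use the same three triples $(K_1,K_2,K_3)$, plug them into Claim~\ref{claim:main}, and use the same rescalings $t=\varsigma/\sqrt2$, $\vartheta/(2\sqrt2)$, $\varphi/(2\sqrt2)$. What you add is the metaplectic justification: compare the linear symplectic actions, reduce to a scalar by irreducibility, pin the scalar to $\pm1$ via the double cover (valid because every exponent here is a Weyl-symmetric quadratic form), and fix it to $+1$ by continuity from angle $0$. The paper applies the Claim formally to unbounded operators with no such argument, so this is a real gain in rigor. It also makes the Claim dispensable, since the $2\times2$ or $4\times4$ matrix identity can be checked by hand.

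The point you must not leave as ``a sign convention to track'' is the $\hat H_i$ case. With $\hat H_i$ as in Eq.~\eqref{eq:HiJSrep}, $e^{\ri\varsigma\hat H_i}=e^{\ri\frac{\varsigma}{2}(\hat x_i^2+\hat p_i^2)}e^{-\ri\frac{\varsigma}{2}(\hat x_{i+1}^2+\hat p_{i+1}^2)}$. So the correct mode-$(i+1)$ factor is $e^{-\ri\varsigma_1\hat p_{i+1}^2}e^{-\ri\varsigma_2\hat x_{i+1}^2}e^{-\ri\varsigma_1\hat p_{i+1}^2}$. The stated right-hand side differs from $e^{\ri\varsigma\hat H_i}$ by the factor $e^{\ri\varsigma(\hat x_{i+1}^2+\hat p_{i+1}^2)}$. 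This factor multiplies the $m$-th Fock state of mode $i+1$ by $e^{\ri\varsigma(2m+1)}$, so it is not the identity for any $\varsigma\neq0$ in the allowed range. Your own symplectic check would show this: the mode-$(i+1)$ block is rotation by $-\varsigma$, not $+\varsigma$. The first identity is therefore false as written, and the paper's proof makes the same slip when it factorizes $e^{\ri\varsigma\hat H_i}$ with $+\ri\varsigma$ on mode $i+1$. What your proof establishes is the sign-corrected version; say so explicitly instead of claiming to recover the stated factor.

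Two smaller remarks. First, in the $\hat A_{j,k}$ case, with $K_1=\ri\sqrt2\,\hat p_j\hat x_k$ and $K_2=-\ri\sqrt2\,\hat x_j\hat p_k$ one finds $[K_1,K_2]=2\ri(\hat x_j\hat p_j-\hat x_k\hat p_k)$. So the proportionality constant in your $K_3$ must be $-1$; the paper's $K_3$ has the opposite sign, which is harmless since only $K_1,K_2$ enter the conclusion. Second, your alternative of a Fourier rotation on mode $k$ sends $\hat S_{j,k}$ to $-\hat A_{j,k}$. It produces the palindrome with outer factors in $\hat p_j\hat x_k$ and middle factor in $\hat x_j\hat p_k$, the reverse of the stated one. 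Rotate mode $j$ instead ($\hat x_j\mapsto\hat p_j$, $\hat p_j\mapsto-\hat x_j$): this maps $\hat S_{j,k}$ exactly to $\hat A_{j,k}$ and yields the stated factorization with the stated $\varphi_1,\varphi_2$.
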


\begin{proof}
    For the first result note that the operators for the $i^{\rm th}$ mode commute with that of 
    the $(i+1)^{\rm th}$ mode and hence $\exp(\ri \varsigma \hat H_i)=\exp(\ri \varsigma (\hat x_i^2 + \hat p_i^2)/2)\exp(\ri \varsigma (\hat x_{i+1}^2 + \hat p_{i+1}^2)/2)$ factorizes. Then
    we can set $K_1=\ri (\hat x_j)^2/\sqrt 2$,  $K_2=\ri (\hat p_j)^2/\sqrt 2$, and $K_3=\ri (\hat x_i \hat p_i + \hat p_i \hat x_i)/2$, that satisfy the commutation relations in Claim~\ref{claim:main}, which would imply
    \begin{align}
   e^{\ri \varsigma (\hat x_i^2 + \hat p_i^2)/2}  =
   e^{ (\varsigma/\sqrt 2) (K_1+K_2)} =   e^{\ri \varsigma_1   ({\hat p}_i)^2} 
        e^{\ri \varsigma_2  ({\hat x}_i)^2}
        e^{\ri \varsigma_1   ({\hat p}_i)^2} \;.
    \end{align}
    Next we apply the same analysis using the $(i+1)^{\rm th}$ mode and obtain the expression.

   For the second result we consider $K_1=\ri \sqrt 2 (\hat x_j \hat x_k)$, $K_2=\ri \sqrt 2 (\hat p_j \hat p_k)$, and $K_3=\ri (\hat x_j \hat p_j + \hat p_k \hat x_k)$. These also satisfy the commutation relations in Claim~\ref{claim:main}, implying
   \begin{align}
       e^{\ri \vartheta \hat S_{j,k}}= e^{ (\vartheta/(2\sqrt 2)) (K_1+K_2)} = 
       e^{\ri \vartheta_1 \hat p_j \hat p_k}
        e^{\ri \vartheta_2 \hat x_j \hat x_k}
         e^{\ri \vartheta_1 \hat p_j \hat p_k}\;.
   \end{align}
Similarly, for the third result, we can choose $K_1 = \ri \sqrt 2 \hat p_j \hat x_k$, 
   $K_2 = -\ri \sqrt 2 \hat x_j \hat p_k$, and 
   $K_3 = \ri(\hat x_j \hat p_j - \hat x_k \hat p_k)$. Then,
   \begin{align}
  e^{\ri \varphi \hat A_{j,k}}= e^{(\varphi/(2\sqrt 2)) (K_1+K_2)} = 
       e^{-\ri \varphi_1 \hat x_j \hat p_k}
        e^{\ri \varphi_2 \hat p_j \hat x_k}
          e^{-\ri \varphi_1 \hat x_j \hat p_k}.  
   \end{align}
\end{proof}

Hence, we are able to factorize the exponentials of the operators in the desired way. Furthermore, since any unitary $U$ in $SU(n)$ can be factorized as a sequence of simpler exponentials, then the corresponding $\hat U$ acting in the oscillator representation can be written as
\begin{align}
\label{eq:Uinoscillatorsrep}
    U \mapsto \hat U=\prod_{b} \exp(\ri \vartheta_b \hat O_b)\;,
\end{align}
where the $\hat O_b$'s are now quadratic monomials in position and momentum, that is, 
\begin{align}
    \hat O_b \in \{ \hat x_j \hat x_k , \hat x_j \hat p_k, \hat p_j \hat p_k\}.
\end{align}
Moreover, the number of terms in the product is bounded as $r=\cO(n^2-1)$ since Lemma~\ref{lem:FFQHOs}
shows that each exponential is a product of at most
six exponentials.
Also, the phases $\vartheta_b$
can  be bounded by a constant (e.g., $|\vartheta_b|\le 1$), since the phases in 
Lemma~\ref{lem:FFQHOs} are also bounded. 
More generally, if the phases where to be larger than say $\pi/2$, we can simply split the exponential into a product of exponentials involving smaller phases; for example, $e^{\ri \varsigma \hat H_i}= 
(e^{\ri \varsigma \hat H_i/T})^T$. 

To obtain the desired circuit for $U$ when acting on basis states, we would like to prove that we can discretize and assume the space of the harmonic oscillator is finite dimensional without incurring in large errors, while still preserving the fast-forwarding feature, and that we can prepare the corresponding Hermite states efficiently.
The last step is precisely what the quantum Hermite transform (QHT) in Ref.~\cite{jain2025efficient} achieves. We will address these items next. 

\section{Discrete quantum harmonic oscillator}
\label{sec:finiteQHO}

Equation~\eqref{eq:Uinoscillatorsrep} is an operator that acts in the continuous space of $n$ quantum harmonic oscillators.
Our quantum algorithm runs on a system of qubits, and hence we will need to perform these mappings within finite dimensional spaces. To this end, we consider a discrete, finite-dimensional version of the quantum harmonic oscillator that allows us to recover the results of Sec.~\ref{sec:fast-forwarding overview} in the continuum. This is the same system studied in Refs.~\cite{somma2016,jain2025efficient}.

\paragraph{A single discrete QHO.} 
The discrete QHO is modeled by the $L$-dimensional Hamiltonian \begin{align}
\label{eq:discretehamiltonian}
\discretehamiltonian = \frac{1}{2}\left(\discreteposition^2 + \discretemomentum^2\right) \;,
\end{align}
where 
\begin{align}
\discreteposition := \sqrt{\frac{2\pi}{L}}\begin{pmatrix}
    -\frac{L}{2} & 0 & ... & 0 \\
    0 & -\frac{L}{2} + 1 & ... & 0 \\
    \vdots & \vdots & \ddots & \vdots \\
    0 & 0 & ... & \frac{L}{2}-1
\end{pmatrix} \;
\end{align}
is the discrete position operator and $\discretemomentum := F^{-1}\discreteposition F$ 
the discrete momentum operator, and $F$ is the centered $L$-dimensional discrete Fourier transform.
The dimension $L$ needs to be set according to the precision requirements, since it determines the discretization size $\sqrt{ 2\pi/{L}}$ and the quality of the approximation.
For arbitrary integer $m \ge 0$, we define 
the discrete Hermite states  via
\begin{align}\label{eq:hermite-state}
\dshostate{m} := \left( \frac{2\pi} L\right)^{1/4} \sum_{j=-L/2}^{L/2-1} \psi_m(x_j)\ket{j} \;,
\end{align}
 where $\psi_m(x)$ is the $m^{\rm th}$ Hermite function
 of Eq.~\eqref{eq:Hermitefunctions} and $x_j:= j  \sqrt{\frac{2\pi}{L}}$ is a point in the discretized space.
 In a ``low-energy'' subspace where $m \le cL$ for some constant $c<1$, these can be shown to be approximately the eigenstates of $\discretehamiltonian$,
 and they recover many properties of the states $\tshostate{m}$ in the continuum.
 
 That is, Somma
(Ref.~\cite{somma2016}) showed that there exist constants $c,\gamma\in(0,1)$ such that, 
for all $L>0$ sufficiently large and all $0 \le m,m' \leq c L$,
\begin{align}
\label{eq:singleQHOerrorbounds}
\left|\bradshostate{m'}\overline x^{a} \overline p^{b} \dshostate{m} - \bra{\psi_{m'}^{\rm c}}\hat x^{a}\hat p^{b}\ket{\psi_{m}^{\rm c}}\right|
\le \exp(-\gamma L).
\end{align}
Here, $a$ and $b$ are constant and non-negative integers (e.g., $a,b \le 4$). Furthermore, 
for $0 \le m \le cL$,
\begin{align}
    \| (\discretehamiltonian - (m+1/2)) \ket{\psi_m}\| \le \exp(-\gamma L) \;,
\end{align}
and there are similar approximations for other quantities.
The last equation implies these Hermite states are almost eigenstates of $\discretehamiltonian$. 
In the case of the $SU(n)$ transformation, and since the relevant space is of dimension $N=\binom{M+n-1}{n-1}$, we will choose $L$
large enough so that $c L \gg M$\footnote{Numerical simulations show that the constant satisfies $c \ge 3/4$.}. The fact that these errors are exponentially small in $L$   will be advantageous.

\paragraph{Many discrete QHOs.}
 In contrast with Ref.~\cite{jain2025efficient}, here we need to invoke $n \ge 2$ discrete quantum harmonic oscillators. The Hilbert space is then $(\mathbb C^L)^{\otimes n}$, of dimension $L^n$. For each system $i\in\{1,\ldots,n\}$,
 we define the corresponding operators as
\begin{align}
\discreteposition_i := \one^{\otimes (i-1)}\otimes \discreteposition \otimes \one^{\otimes(n-i)},\qquad
\discretemomentum_i := \one^{\otimes (i-1)}\otimes \discretemomentum\otimes \one^{\otimes(n-i)}.
\end{align}
We also consider the tensor product of Hermite states
in $(\mathbb C^L)^{\otimes n}$:
\begin{align}
\sket{\psi_{m_1},\ldots,\psi_{m_n}}:=    \dshostate{m_1}\otimes \ldots \otimes  \dshostate{m_n}
\end{align}
The error bounds in Eqs.~\eqref{eq:singleQHOerrorbounds}
can be then directly carried to the case of $n$ quantum harmonic oscillators, where we can now consider products of the $\discreteposition_i$'s and $\discretemomentum_i$'s.
In the next section, we will use these bounds to show that the fast-forwarding identities of
Sec.~\ref{sec:fast-forwarding overview}, which are exact in the continuum, remain valid up to exponentially small error on a desired low energy subspace of the $n$ harmonic oscillators, if the local dimension $L$ is chosen appropriately large.

\subsection{Discretization errors}
\label{sec:FFdiscreteQHO}

Thus far we showed that the unitary
in Step 3 of Sec.~\ref{sec:mainresults} could have been implemented exactly via a simple product of exponentials of quadratic monomials in $\hat x_j$ and $\hat p_k$, if we were allowed to work in the continuum. This is the unitary $\hat U$ in Eq.~\eqref{eq:Uinoscillatorsrep}.
Now we show that it is possible to discretize the space and implement that unitary within arbitrary, exponentially small errors, which we call $\overline U$. To this end, we will replace $\hat x_j \to \discreteposition_j$ and $\hat p_j \to \discretemomentum_j$ in each exponential of Lemma~\ref{lem:FFQHOs}, giving us $\overline U$. 
We assume $n \ge 2$ to be a constant, however the results might be adapted for the case $n =\polylog(N)$.

\begin{lemma}\label{lem:matrix-element-closeness}
    Let $\hat U$ be the unitary of Eq.~\eqref{eq:Uinoscillatorsrep} that implements the $SU(n)$ operation in the oscillator representation,
    where the number of terms in the product is $r \ge 1$.
    Consider its action on an arbitrary Fock state as
    \begin{align}
        \hat U \sket{\psi_{m_1}^{\rm c},\ldots,\psi_{m_n}^{\rm c}} = \sum_{m'_1,\ldots,m'_n} \beta^{m_1,\ldots,m_n}_{m'_1,\ldots,m'_n}
       \sket{\psi_{m'_1}^{\rm c},\ldots,\psi_{m'_n}^{\rm c}} \;,
    \end{align}
    where the sums are such that $\sum_i m_i = \sum_i m'_i = M$. Consider the corresponding finite-dimensional unitary $\overline U$,
    obtained by replacing 
     $\hat x_j \to \discreteposition_j$ and $\hat p_j \to \discretemomentum_j$ in Eq.~\eqref{eq:Uinoscillatorsrep}.
     Then, if the dimension of each discrete quantum harmonic oscillator satisfies $L \ge c' 2^r M$,
     for some constant $c'>1$,
     we obtain
     \begin{align}
       \| \overline U  \sket{\psi_{m_1},\ldots,\psi_{m_n}} -\sum_{m'_1,\ldots,m'_n} \beta^{m_1,\ldots,m_n}_{m'_1,\ldots,m'_n}
       \sket{\psi_{m'_1},\ldots,\psi_{m'_n}}\| \le  \exp(-\overline \gamma L) \;,
    \end{align}
    where $\overline \gamma >0$ is a positive constant.  
\end{lemma}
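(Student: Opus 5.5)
The plan is to peel off the $r$ factors of $\overline U=\prod_{b=1}^r \exp(\ri\vartheta_b \overline O_b)$ one at a time. We track the state in the low-energy subspace $\mathcal S_K:=\mathrm{span}\{\sket{\psi_{k_1},\ldots,\psi_{k_n}}: k_i\le K\}$ of discrete Hermite product states, with the cutoff $K$ increasing as we go. The key single-step statement to prove is the following. For a single exponential $W=e^{\ri\theta \overline O}$, with $\overline O$ a quadratic monomial in $\discreteposition_j,\discretemomentum_k$ and $|\theta|\le 1$, and its continuum counterpart $\hat W=e^{\ri\theta\hat O}$, let $\hat W\sket{\psi^{\rm c}_{\bf k}}=\sum_{\bf k'}w_{\bf k,\bf k'}\sket{\psi^{\rm c}_{\bf k'}}$. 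Then for every $\bf k$ with $\max_i k_i\le K$,
\begin{align}
\Big\| W\sket{\psi_{\bf k}}-\sum_{\max_i k'_i\le 2K+c_0} w_{\bf k,\bf k'}\sket{\psi_{\bf k'}}\Big\|\le e^{-\gamma' L},
\end{align}
provided $2K+c_0\le cL$. The truncation of the continuum expansion to $k'_i\le 2K+c_0$ should itself be exponentially accurate. Iterating this over the $r$ factors doubles the cutoff each time, from $M$ to about $2^r M$. That is why the hypothesis $L\ge c'2^rM$ appears: it keeps every intermediate cutoff below $cL$, so Eq.~\eqref{eq:singleQHOerrorbounds} stays applicable throughout. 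Errors accumulate only additively, since each step is unitary, giving a total of $r e^{-\gamma' L}$ plus truncation tails, which is at most $e^{-\overline\gamma L}$ for a slightly smaller $\overline\gamma$ because $r$ is constant.

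For the single-step bound, I would use that $\overline O$ is diagonal in a tensor-product Fourier basis. For example, $e^{\ri\theta\discreteposition_j\discreteposition_k}$ is diagonal in the position basis, and $e^{\ri\theta\discretemomentum_j\discretemomentum_k}$ is diagonal after $F^{\otimes 2}$. Mixed terms $\discreteposition_j\discretemomentum_k$ become diagonal after applying $F$ on mode $k$ only. The mixed case needs the discrete Hermite states to be (approximately) eigenvectors of $F$ with eigenvalue $(-\ri)^m$, which follows from the near-eigenstate property for $\discretehamiltonian$ together with $F\discretehamiltonian F^{-1}=\discretehamiltonian$. So in the appropriate basis the discrete state is a sampled continuum wavefunction $\Psi$ on the grid, and $W$ multiplies it by the sampled phase $e^{\ri\theta x_jx_k}$.

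The main work is then to compare $\langle\psi_{\bf k'}|W|\psi_{\bf k}\rangle$ with $w_{\bf k,\bf k'}=\langle\psi^{\rm c}_{\bf k'}|\hat W|\psi^{\rm c}_{\bf k}\rangle$. I would treat this as a Riemann-sum-versus-integral estimate for the integrand $\psi_{k'_j}\psi_{k'_k}\psi_{k_j}\psi_{k_k}e^{\ri\theta x_jx_k}$, which is analytic and Gaussian-decaying. A Poisson summation argument with grid spacing $\sqrt{2\pi/L}$ and box size $\sqrt{2\pi L}$ should give error $e^{-\Omega(L)}$, as long as all indices are at most $cL$ and $|\theta|$ is bounded. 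This is the same mechanism behind Eq.~\eqref{eq:singleQHOerrorbounds}. Alternatively, one could expand $e^{\ri\theta\hat O}$ in a Taylor series and use Eq.~\eqref{eq:singleQHOerrorbounds} termwise. That route needs control of the tail for high powers, which is why I prefer the Poisson/Gaussian route.

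The step I expect to be the main obstacle is the leakage control behind the cutoff $2K+c_0$. It requires showing that $\hat W\sket{\psi^{\rm c}_{\bf k}}$ has exponentially small weight on Fock states with $k'_i>2K+c_0$. An exponential of a quadratic form is a metaplectic (squeezing/shear) operator and does \emph{not} conserve excitation number, so this does not come for free. My first attempt would use the explicit Gaussian-kernel form of $\hat W$ to bound the mean and tails of the energy $\hat H_j+\hat H_k$ after the shear. For bounded $|\theta|$, the energy is multiplied by at most a constant factor, giving a bound like $\langle \hat W^\dagger (\hat H_j+\hat H_k)\hat W\rangle\le 2K+O(1)$. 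I would then convert this into exponential tails using the explicit Gaussian decay of sheared Hermite functions in phase space. The second ingredient is that the discrete states $\{\sket{\psi_{\bf k}}\}$ with $k_i\le cL$ are orthonormal up to $e^{-\gamma L}$, again from Eq.~\eqref{eq:singleQHOerrorbounds}. This lets the matrix-element closeness be turned into the norm bound stated in the step above: it controls the component of $W\sket{\psi_{\bf k}}$ outside $\mathcal S_{2K+c_0}$ via $\|W\sket{\psi_{\bf k}}\|^2\approx 1\approx\sum|w_{\bf k,\bf k'}|^2$.
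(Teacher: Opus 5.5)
Your overall architecture matches the paper's: you peel off the $r$ factors, track nested low-energy subspaces whose cutoff doubles, accumulate errors additively, and use the hypothesis on $L$ to keep every cutoff below $cL$. The problem is that your single-step bound is false as quantified, for two reasons.

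\emph{First}, the leakage of $\hat W|\psi^{\rm c}_{\mathbf{k}}\rangle$ beyond a cutoff is exponentially small only in the \emph{cutoff}, not in $L$. For example, $e^{\ri\theta\hat x_1\hat x_2}|\psi^{\rm c}_0,\psi^{\rm c}_0\rangle$ is a fixed Gaussian state whose weight on Fock states beyond any fixed $c_0$ is a nonzero constant, independent of $L$. So starting the chain at $K=M$ gives truncation errors $e^{-\Omega(M)}$, which is $\Theta(1)$ when $M=\cO(1)$, rather than $e^{-\gamma' L}$.

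\emph{Second}, with $|\theta|$ as large as $1$ the factor $2$ is too small. The map $e^{\ri\theta\hat x_1\hat x_2}$ shears $p_1$ by $\theta x_2$. On the semiclassical torus of $|\psi^{\rm c}_K,\psi^{\rm c}_K\rangle$, the new mode-1 energy is $K(1+2\theta\sin a\cos b+\theta^2\cos^2 b)$, which reaches $(1+|\theta|)^2K$. For $|\theta|=1$ a constant fraction of the torus lies above $2K$, so truncating at $2K+c_0$ has $\Theta(1)$ error even for large $K$.

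The fix is what the paper does:
\begin{itemize}
\item Split each factor so the angle is small. Algorithm~1 enforces $|\vartheta_b'|\le 1/(2e)$, at the cost of a constant increase in $r$.
\item Anchor the cutoffs to $L$ rather than to $M$, i.e.\ take $K_b\propto cL\,2^{b-r}$. The hypothesis $L\ge c'2^rM$ then serves only to guarantee $M\le K_0$, and every truncation costs $e^{-\Omega(L/2^r)}$.
\item Account for a $\mathrm{poly}(L)$ factor when applying a per-basis-state bound to superpositions of up to $(K_b+1)^n$ states (the paper's $(c'L)^{n/2}$). This is harmless.
\end{itemize}

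With that repaired, your single-step mechanism is a genuinely different and considerably heavier route than the paper's. The paper Taylor-expands each exponential. The order-$k$ term moves every index by at most $k$, and $\|\hat x^k|\psi^{\rm c}_m\rangle\|\le\sqrt{(m+1)\cdots(m+k)}$, together with Stirling and the small angle, gives a geometric tail. So the leakage you flagged as the main obstacle costs a few lines and needs no metaplectic energy or phase-space tail estimates.

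Moreover, $\overline x$ acts on sampled Hermite functions by exactly the continuum three-term recurrence. Hence factors $e^{\ri\theta\overline x_j\overline x_k}$ reproduce the continuum coefficients \emph{exactly} (Lemma~\ref{lem:simple-exponential-bound-x}), and no Riemann-sum or Poisson-summation comparison is needed. Momentum and mixed factors reduce to this case via $\|F|\psi_m\rangle-\ri^m|\psi_m\rangle\|\le e^{-\gamma L}$.

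Your Poisson route would additionally need two things. One is uniform aliasing estimates for chirped products of Hermite functions with indices $\Theta(L)$, under a bandwidth condition that ties the admissible $c$ to the bound on $|\theta|$. The other is the Gram-matrix argument.

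Finally, $F\discretehamiltonian F^{-1}=\discretehamiltonian$ plus the near-eigenvector property does not by itself give the Fourier eigen-relation. You would also need the eigenvalue near $m+1/2$ to be isolated in the spectrum of $\discretehamiltonian$, and you would still have to fix the phase. It is simpler to take that estimate directly from Ref.~\cite{somma2016}, as the paper does.
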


This Lemma essentially shows that Step 3 in Sec.~\ref{sec:mainresults} can be implemented using simple unitaries, which are exponentials of quadratic monomials of the discrete position and momentum operators, within exponentially small error. Each such exponential is either a diagonal unitary, or a diagonal unitary conjugated by the centered Fourier transform, and hence they can be fast-forwarded using standard techniques.

The error is exponentially small in $L$. At the same time, we need to choose $L =\Omega(2^r M)$ for the result to work, where $r =\cO(n^2)$. If $n$ is constant,
as we assumed, this is $L =\Omega(M)$. For error $\epsilon$, then setting some
$L = \cO(M + \log(1/\epsilon))$ suffices. 
However, due to the requirements of the QHT in Ref.~\cite{jain2025efficient}, we will ultimately need to choose a much larger $L = \cO(M^{2.25}/\epsilon^{3.25})$. This will imply that the error in the transformation due to discretization in Lemma~\ref{lem:matrix-element-closeness} could be much smaller than $\epsilon$.

We split the proof Lemma~\ref{lem:matrix-element-closeness} using the following three results. 

\begin{lemma}\label{lem:simple-exponential-bound-x}
    Consider the simpler unitary $\hat W= \exp(\ri \vartheta \hat x_j \hat x_k)$ acting on two quantum harmonic oscillators, and the corresponding discretization $\overline W = \exp(\ri \vartheta \overline x_j \overline x_k)$. Then, 
    if $\hat W \sket{\psi_{m_j}^{\rm c},\psi_{m_k}^{\rm c}}= \sum_{m_j^\prime,m_k^\prime=0}^\infty
    \alpha_{m_j^\prime m_k^\prime}\sket{\psi_{m_j^\prime }^{\rm c},\psi_{m_k^\prime}^{\rm c}}$, we obtain
    \begin{align}
        \overline W \sket{\psi_{m_j},\psi_{m_k}}= \sum_{m_j^\prime,m_k^\prime=0}^\infty
    \alpha_{m_j^\prime m_k^\prime}\sket{\psi_{m_j},\psi_{m_k}} \;.
    \end{align}
\end{lemma}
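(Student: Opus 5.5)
The statement should be read with the obvious index correction: the right-hand side is $\sum \alpha_{m_j' m_k'}\sket{\psi_{m_j'},\psi_{m_k'}}$. The plan is to exploit the fact that $\hat W$ and $\overline W$ are both \emph{diagonal in position}, so the discrete identity should follow from a \emph{pointwise} identity of functions, evaluated on the grid.

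First I will rewrite the continuum statement in the position representation. Since $\hat W$ acts on $\ket{x_j,x_k}$ by the phase $e^{\ri\vartheta x_jx_k}$, the hypothesis says that the function
\begin{align}
f(x,y):=e^{\ri\vartheta xy}\,\psi_{m_j}(x)\psi_{m_k}(y)
\end{align}
has the expansion $f=\sum_{m',m''}\alpha_{m'm''}\,\psi_{m'}(x)\psi_{m''}(y)$ in $L^2(\mathbb R^2)$. On the discrete side, Eq.~\eqref{eq:hermite-state} and the diagonal form of $\overline x$ give
\begin{align}
\overline W\sket{\psi_{m_j},\psi_{m_k}}=\left(\tfrac{2\pi}{L}\right)^{1/2}\sum_{a,b} f(x_a,x_b)\ket{a,b}.
\end{align}
Meanwhile, each term $\sket{\psi_{m'},\psi_{m''}}$ on the right-hand side is the same sampling map applied to $\psi_{m'}\otimes\psi_{m''}$. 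Sampling at the $L^2$ grid points is linear, and $(\mathbb C^L)^{\otimes 2}$ is finite dimensional. So it suffices to show that the Hermite series of $f$ converges to $f$ \emph{pointwise at every point}, not just in $L^2$; then the vector series converges componentwise to the left-hand side. Note that this argument never uses the low-energy condition $m\le cL$, because no orthogonality of the discrete states is needed.

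Second, and this is the main obstacle, I need pointwise convergence. The plan is to prove absolute and uniform convergence of the series, as follows.
\begin{itemize}
\item The function $f$ is a Schwartz function on $\mathbb R^2$, being a Gaussian times a polynomial times a smooth phase whose derivatives grow only polynomially.
\item Set $\hat N:=\tfrac12(\hat x_1^2+\hat p_1^2+\hat x_2^2+\hat p_2^2)$. Then $\hat N^s f\in L^2$ for every $s$, and self-adjointness gives
\begin{align}
\alpha_{m'm''}=(m'+m''+1)^{-s}\langle \psi_{m'}\psi_{m''},\hat N^s f\rangle,
\end{align}
so $|\alpha_{m'm''}|\le C_s (m'+m''+1)^{-s}$.
\item Combined with the uniform bound $|\psi_m(x)|\le \pi^{-1/4}$ (Cram\'er's inequality), choosing $s=3$ makes $\sum|\alpha_{m'm''}|\sup|\psi_{m'}\psi_{m''}|$ finite.
\end{itemize}
Hence the series converges uniformly to a continuous function $g$. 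Since it also converges to $f$ in $L^2$, we have $g=f$ almost everywhere, and continuity of both gives $g=f$ everywhere.

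Finally, I evaluate at the grid points $(x_a,x_b)$ and multiply by $(2\pi/L)^{1/2}$, which yields the claimed vector identity exactly. The analogous statements for $\hat p_j\hat p_k$ and $\hat x_j\hat p_k$ would then follow by conjugating with Fourier transforms, but there the discrete and continuum Fourier transforms differ, which is where the exponentially small errors of Eq.~\eqref{eq:singleQHOerrorbounds} enter.
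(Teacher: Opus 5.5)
Your proof is correct, and it takes a different route from the paper. Your index correction of the right-hand side is also right; the paper's proof carries the same typo.

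\textbf{The paper's route.} The paper never leaves the Fock basis. It notes that the three-term recurrence $x\psi_m=\sqrt{m/2}\,\psi_{m-1}+\sqrt{(m+1)/2}\,\psi_{m+1}$ holds verbatim for the sampled states, because $\overline x$ is multiplication by grid values. Hence each Taylor term $(\overline x_1\overline x_2)^k$ yields exactly the same finite Hermite expansion $\gamma^{(k)}$ as in the continuum. It then sums over $k$ and regroups by $(m_1',m_2')$.

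\textbf{Comparison.} Both arguments rest on the same fact: a pointwise identity between functions survives sampling on the grid. You apply it once, to the full function $e^{\ri\vartheta xy}\psi_{m_j}(x)\psi_{m_k}(y)$. The price is the Schwartz-class decay $|\alpha_{m'm''}|\le C_s(m'+m''+1)^{-s}$ plus a uniform bound on $|\psi_m|$, which together give pointwise convergence. Any uniform bound works, so the exact constant attributed to Cram\'er's inequality is immaterial. The paper instead applies the identity to polynomials, term by term.

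\textbf{What your route buys.} It is more rigorous and holds for every real $\vartheta$. The paper's regrouping step needs absolute convergence of the double series over $k$ and $(m_1',m_2')$. In fact the continuum Taylor series of $\hat W$ on a Fock state does not converge for large $|\vartheta|$: for example $\|(\hat x_1\hat x_2)^k\psi^{\rm c}_0\otimes\psi^{\rm c}_0\|/k!=\binom{2k}{k}4^{-k}$, so the radius of convergence is $1$. The paper does not address this. It is harmless there only because the algorithm first reduces all phases to $|\vartheta|\le 1/(2e)$. Your argument gives the exact identity with no restriction on $\vartheta$.

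\textbf{What the paper's route buys.} It exposes explicit structure: $(\hat x_1\hat x_2)^k$ only reaches $m_i'\le m_i+k$, with geometrically controlled coefficients. The paper reuses exactly this in Lemma~\ref{lem:simple-exponential-bound-p} to bound the tail $\delta$ by $\sum_{k\ge cL}(3/4)^k$. Your constants $C_s$ are unquantified in $\vartheta,m_j,m_k$. So they would not, on their own, deliver the $e^{-\tilde\gamma L}$ truncation estimates needed for the momentum and mixed cases that you sketch at the end.
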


\begin{proof}
We will assume without loss of generality that $j = 1$ and $k = 2$. The proof for all other cases follows in the exact same manner, even when $j = k$.
First, we observe that the discrete Hermite states satisfy a similar recurrence than that of the Hermite functions. That is
\begin{equation}
\label{eq:cont-positionrecurrence}
\position\tshostate{m}
=\sqrt{\frac{m}{2}}\tshostate{m-1}+\sqrt{\frac{m+1}{2}}\tshostate{m+1}
\implies \discreteposition\dshostate{m}
=\sqrt{\frac{m}{2}}\dshostate{m-1}+\sqrt{\frac{m+1}{2}}\dshostate{m+1} \;.
\end{equation}
Then, we  write $\hat{W}$ and $\overline{W}$ as (convergent) Taylor series:
\begin{align}
\hat{W} = \sum_{k=0}^{\infty} \frac{(\ri \theta)^k (\hat{x}_1 \hat{x}_2)^k}{k!} = \sum_{k=0}^{\infty} \frac{(\ri \theta)^k \hat{x}_1^k \hat{x}_2^k}{k!} , \ 
\overline{W} = \sum_{k=0}^{\infty} \frac{(\ri \theta)^k \overline{x}_1^k \overline{x}_2^k}{k!}.
\end{align}
For each $k$, we write without loss of generality
\begin{align}
\hat x_1^k \hat x_2^k \sket{\psi_{m_1}^{\rm c},\psi_{m_2}^{\rm c}} = \sum_{m_1^\prime, m_2^\prime} \gamma_{m_1^\prime, m_2^\prime}^{(k)} \sket{\psi_{m'_1}^{\rm c},\psi_{m'_2}^{\rm c}},
\end{align}
and the coefficients follow from the recurrence in Eq.~\eqref{eq:cont-positionrecurrence}.
This recurrence also implies
\begin{align}
\overline x_1^k \overline x_2^k \sket{\psi_{m_1},\psi_{m_2}} = \sum_{m_1^\prime, m_2^\prime} \gamma_{m_1^\prime, m_2^\prime}^{(k)} \sket{\psi_{m'_1},\psi_{m'_2}},
\end{align}
with exactly the same coefficients.
Thus, combining terms over all $k$, we can obtain $\alpha_{m_1^\prime, m_2^\prime} = \sum_{k=0}^\infty \gamma_{m_1^\prime, m_2^\prime}^{(k)}$ in both the discrete and continuum cases,
proving the desired result:
\begin{align}
        \overline W \sket{\psi_{m_1},\psi_{m_2}}= \sum_{m'_1,m'_2=0}^\infty
    \alpha_{m'_1m'_2}\sket{\psi_{m_1},\psi_{m_2}} \;.
    \end{align}
\end{proof}

Note that, since $\dshostate{m}$ are defined in the space of dimension $L$, they eventually become linearly dependent in the above sum. Nevertheless, we can still 
write the action of $\overline W$ in that form, and also note that later we will cutoff the sums at proper values for $m$. The result also holds for other unitaries in Lemma~\ref{lem:FFQHOs} that are exponentials of $\position_i^2$, if we replace them by exponentials of $\discreteposition_i^2$, since the recurrence relation applies for the position operators. 
Next we prove an analogous result for momentum operators.

\begin{lemma}\label{lem:simple-exponential-bound-p}
    Consider the simpler unitary $\hat W= \exp(\ri \vartheta \hat p_1 \hat p_2)$ acting on two quantum harmonic oscillators, and the corresponding one $\overline W$ obtained via discretization where $\momentum_i$ is replaced by $\discretemomentum_i$. Let the phase be $\vartheta \in [-1/2e, 1/2e]$. Then, 
    if $\hat W \sket{\psi^{\rm c}_{m_1},\psi^{\rm c}_{m_k}}= \sum_{m_j^\prime ,m_k^\prime=0}^\infty
    \alpha_{m_j^\prime m_k^\prime}\sket{\psi^{\rm c}_{m_j^\prime },\psi^{\rm c}_{m_k^\prime}}$, we obtain
    \begin{align}
       \| \overline W \sket{\psi_{m_j},\psi_{m_k}} - \sum_{m_j^\prime ,m_k^\prime=0}^\infty
    \alpha_{m_j^\prime m_k^\prime}\sket{\psi_{m_j^\prime },\psi_{m_k^\prime}} \| 
   \le  \exp(-\tilde \gamma L)
    \end{align}
    for all $m_j, m_k \le c' L$,
     where $c' <1$ and $\tilde \gamma >0$ are positive constants.
\end{lemma}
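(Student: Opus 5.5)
The plan is to reduce the momentum case to the position case of Lemma~\ref{lem:simple-exponential-bound-x} by conjugating with Fourier transforms. Since $\discretemomentum_i = F_i^{-1}\discreteposition_i F_i$, where $F_i$ is the centered DFT on mode $i$, we have exactly
\begin{align}
\overline W = (F_j^{-1}\otimes F_k^{-1})\, e^{\ri\vartheta\discreteposition_j\discreteposition_k}\,(F_j\otimes F_k).
\end{align}
In the continuum the analogous identity holds with $\hat F$. Moreover $\hat F\tshostate{m}=(-\ri)^m\tshostate{m}$ exactly. So the continuum coefficients satisfy $\alpha_{m_j'm_k'} = \ri^{\,m_j'+m_k'-m_j-m_k}\,\alpha^{x}_{m_j'm_k'}$, where $\alpha^x$ are the coefficients of $e^{\ri\vartheta\hat x_j\hat x_k}$ (up to the sign convention of $F$). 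The proof then follows three steps, each performed on the discrete side.

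\textbf{Step 1.} Apply $F_j\otimes F_k$ to $\sket{\psi_{m_j},\psi_{m_k}}$. I will invoke the known approximate Fourier-invariance of discrete Hermite states from Ref.~\cite{somma2016}, in the same family of bounds as Eq.~\eqref{eq:singleQHOerrorbounds}: for $m\le cL$, $\|F\dshostate{m}-(-\ri)^m\dshostate{m}\|\le e^{-\gamma L}$. This is obtained from Poisson summation, since $\psi_m$ is an eigenfunction of the continuous Fourier transform and it and its transform are negligible outside $|x|\lesssim\sqrt{2m+1}<\sqrt{\pi L/2}$. The error is at most $2e^{-\gamma L}$, and because the remaining operations are unitary it propagates without amplification.

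\textbf{Step 2.} Apply $e^{\ri\vartheta\discreteposition_j\discreteposition_k}$. By Lemma~\ref{lem:simple-exponential-bound-x}, this reproduces exactly the continuum expansion with coefficients $\alpha^x$ on the discrete Hermite states.

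\textbf{Step 3.} Apply $F_j^{-1}\otimes F_k^{-1}$ term by term. Here Step 1's bound is only available for $m'\le cL$, so I first split the sum at a cutoff $K=cL$. For $m_j',m_k'\le K$, the inverse Fourier transform yields $\ri^{m'}\dshostate{m'}$ up to $e^{-\gamma L}$ each. Summing at most $K^2=\cO(L^2)$ such errors, weighted by $|\alpha^x|\le1$, still gives $e^{-\gamma' L}$.

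The main obstacle is the tail $m_j'$ or $m_k'>cL$. In that regime two things must be controlled. First, the continuum weights $\sum_{\text{tail}}|\alpha^x_{m'}|^2$ must be exponentially small. Second, the discrete vectors $\dshostate{m'}$ appearing there must have controlled norm, since they are no longer near-orthonormal and eventually become linearly dependent. This is where the hypotheses $|\vartheta|\le 1/(2e)$ and $m_j,m_k\le c'L$ with $c'<c$ enter.

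For the weights, I would first try a direct Taylor argument. The operator $\hat x_j^k\hat x_k^k$ shifts the occupation numbers by at most $k$ each, so a tail index at $m'>cL$ requires order $k\ge (c-c')L$. With the recurrence coefficients bounded by $\sqrt{(m+k)/2}$, the $k$-th term contributes at most $\vartheta^k\big((m+k)/2\big)^k/k!$. The condition $\vartheta\le1/(2e)$, together with Stirling's bound, is meant to make this geometric in $k$ once $k\gtrsim m$. If this is too crude for $m\sim L$, the fallback is to use that $e^{\ri\vartheta\hat x_j\hat x_k}$ is a metaplectic (Gaussian) unitary. Its symplectic matrix has norm $1+\cO(\vartheta)$, so the image of a Fock state with total energy $E\le 2c'L$ has energy concentrated below $(1+\cO(\vartheta))^2E$, with exponentially decaying Fock tails. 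Choosing $c'$ small enough that this threshold is below $cL$ then gives the exponential tail bound.

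For the norms of the discrete tail vectors, I would bound $\|\dshostate{m'}\|$ for $m'$ above $cL$ by the uniform sup bound on Hermite functions, which gives at most polynomial growth in $L$. This polynomial factor is absorbed by the exponential tail. The tail sum itself can be truncated at some $m'=\cO(L)$, beyond which Taylor terms are superexponentially small, so no convergence issue arises from the linear dependence at large $m'$.

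Combining the three error contributions yields the stated bound with some $\tilde\gamma>0$.
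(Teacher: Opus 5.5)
Your route is the paper's own. You conjugate by $F\otimes F$ and use the near Fourier invariance of $\dshostate{m}$ for $m\le cL$. You apply Lemma~\ref{lem:simple-exponential-bound-x} exactly to the middle factor and cut the Hermite expansion at $cL$ at cost $\cO(L^2)e^{-\gamma L}$. You then control the tail by a Taylor estimate together with $\|\dshostate{m}\|=\cO(L^{1/4})$. Your tail set ``$m'_j>cL$ or $m'_k>cL$'' is the correct one; the paper writes ``and''.

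\textbf{The gap is the Taylor estimate itself.} Bounding each recurrence coefficient by $\sqrt{(m+k)/2}$ ignores that $\position\tshostate{m}=\sqrt{m/2}\,\tshostate{m-1}+\sqrt{(m+1)/2}\,\tshostate{m+1}$ branches into up to $2^k$ paths per mode. The resulting bound $\vartheta^k((m+k)/2)^k/k!$ is false. Already for $k=1$ and $m_j=m_k=m\ge 1$ one has $\|\position_j\position_k\sket{\psi^{\rm c}_m,\psi^{\rm c}_m}\|=m+\tfrac12>\tfrac{m+1}{2}$. For fixed $k$ and $m\to\infty$ the true size behaves like $\binom{2k}{k}4^{-k}(2m)^k$, a factor $\binom{2k}{k}\approx 4^k/\sqrt{\pi k}$ above your bound.

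If you repair it with the operator norm of $\position$ on $\mathrm{span}\{\tshostate{0},\ldots,\tshostate{n}\}$ (about $\sqrt{2n}$), you get to leading order $(2e|\vartheta|(m+k)/k)^k$. At $|\vartheta|=1/(2e)$ this equals $(1+m/k)^k\ge 1$. So this method yields no geometric decay under the stated angle range, neither near $k\approx(c-c')L$ nor in your truncation at $m'=\cO(L)$.

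The paper's chain has the same defect, so following it does not help: its bound $\|\position^k\tshostate{m}\|\le\sqrt{(m+1)\cdots(m+k)}$ already fails at $k=2$, $m=4$, where the squared left side is $30.75>30$. Your metaplectic fallback does not close the gap as stated either. A bound on the norm of the symplectic matrix controls the mean photon number of the output, but exponentially small Fock tails require a genuine concentration estimate that you do not supply.

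\textbf{A repair that keeps $|\vartheta|\le 1/(2e)$.} Use the exact moments, read off from $\langle\psi^{\rm c}_m|e^{\sqrt2\,t\position}|\psi^{\rm c}_m\rangle=e^{t^2/2}\sum_i\binom{m}{i}t^{2i}/i!$:
\[
\|\position^k\tshostate{m}\|^2=\frac{(2k)!}{4^k k!}\sum_{i}2^i\binom{k}{i}\binom{m}{i}\le\frac{(2k)!}{4^k k!}\,(1+\sqrt2)^{k+m}.
\]
With $\binom{2k}{k}\le 4^k$ this gives
\[
\frac{|\vartheta|^k}{k!}\,\big\|\position_j^k\position_k^k\sket{\psi^{\rm c}_{m_j},\psi^{\rm c}_{m_k}}\big\|\le\big(|\vartheta|(1+\sqrt2)\big)^k(1+\sqrt2)^{\max(m_j,m_k)}\le 2^{-k}(1+\sqrt2)^{c'L}.
\]
Only $k\ge(c-c')L$ reaches the tail. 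Each Taylor term has at most $(k+1)^2$ Hermite components, so its $\ell^1$ coefficient norm is at most $(k+1)$ times its $\ell^2$ norm. Each discrete product vector has norm $\cO(L^{1/2})$.

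Summing gives a tail of size $\mathrm{poly}(L)\,2^{-(c-c')L}(1+\sqrt2)^{c'L}=e^{-\Omega(L)}$ once $c'$ is a small enough multiple of $c$ (e.g.\ $c'\le c/3$). The same bound handles the large-$m'$ part, where the decay is geometric rather than superexponential. Alternatively, keep the crude operator-norm bound and shrink the angle range, e.g.\ to $|\vartheta|\le 1/(4e)$ with $c'<c/2$. That costs a change to the lemma's hypothesis and to the angle splitting in the algorithm.
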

\begin{proof}
Again, we assume without loss of generality that $j = 1, k = 2$.
We would like to follow the same proof as in Lemma~\ref{lem:simple-exponential-bound-x}, but unfortunately the recurrence relation does not apply to the discrete momentum operators. Hence, 
we make use of the $L$-dimensional centered Fourier transform $F$ that
maps $\discreteposition\mapsto\discretemomentum\mapsto-\discreteposition$, and write
\begin{align}
   \overline W \sket{\psi_{m_1},\psi_{m_2}} =(F^{-1} \otimes F^{-1}) \overline V (F \otimes F) \sket{\psi_{m_1},\psi_{m_2}} \; , \ \overline V:=\exp(\ri \vartheta \discreteposition_1 \discreteposition_2)\;.
\end{align}
Since the Hermite states $\dshostate{m}$ are not
exactly the eigenstates of $\discretehamiltonian$, they are not eigenstates of $F$, which contrasts the case in the continuum. However, they can be proven to be almost eigenstates and Ref.~\cite{somma2016} shows
\begin{align}
\label{eq:Fouriereigenvalue}
    \|F \dshostate{m}-\ri^m \dshostate{m}\| \le \exp(-\gamma L)
\end{align}
if $m \le cL$, where $\gamma>0$ and $c<1$ are positive constants.
Using a Taylor series expansion for $\overline V$,
it follows that
\begin{align}
\nonumber
    \overline V (F \otimes F) \sket{\psi_{m_1},\psi_{m_2}} &= \ri^{m_1+m_2} \overline V \sket{\psi_{m_1},\psi_{m_2}}+ \cO(\exp(-\gamma L)) \\
    \nonumber
    & = \ri^{m_1+m_2} \sum_{k=0}^\infty \frac{(\ri \vartheta)^k (\discreteposition_1 \discreteposition_2)^k}{k!}\sket{\psi_{m_1},\psi_{m_2}}+ \cO(\exp(-\gamma L)) 
    \\
    \label{eq:VbarHermite}
    &=\sum_{m'_1,m'_2=0}^\infty \gamma_{m'_1,m'_2} \sket{\psi_{m'_1},\psi_{m'_2}} + \cO(\exp(-\gamma L))\;,
\end{align}
where the last line follows from Eq.~\eqref{eq:cont-positionrecurrence}.
The coefficients $\gamma_{m'_1,m'_2}$ match those of the continuum, that is
\begin{align}
\nonumber
    \hat V (\hat F \otimes \hat F)\sket{\psi^{\rm c}_{m_1},\psi^{\rm c}_{m_2}}  & =  \ri^{m_1+m_2} \hat V \sket{\psi^{\rm c}_{m_1},\psi^{\rm c}_{m_2}} \\
    & = \ri^{m_1+m_2} \sum_{k=0}^\infty \frac{(\ri \vartheta)^k (\position_1 \position_2)^k}{k!}\sket{\psi^{\rm c}_{m_1},\psi^{\rm c}_{m_2}} \\
    & = \sum_{m'_1,m'_2=0}^\infty \gamma_{m'_1,m'_2} \sket{\psi^{\rm c}_{m'_1},\psi^{\rm c}_{m'_2}} \;,
\end{align}
since the Hermite functions are exact eigenfunctions of the Fourier transform $\hat F$ in $\mathbb R$, of eigenvalue $\ri^m$. The last line follows from the same recurrence relation in Eq.~\eqref{eq:cont-positionrecurrence}. Also, recall that the Fourier transform maps $\position\mapsto\momentum\mapsto-\position$, so that $\hat W = (\hat F^{-1} \otimes \hat F^{-1}) \hat V (\hat F \otimes \hat F)$. This in turn implies $ \alpha_{m'_1,m'_2}= (-\ri)^{m'_1+m'_2}\gamma_{m'_1,m'_2}$ in the lemma.

Consider now the sum in Eq.~\eqref{eq:VbarHermite}.
To complete the action of $\overline{W}$, we would need to act with $F^{-1} \otimes F^{-1}$. However, since the sum ranges over all $m'_1$ and $m'_2$, we cannot directly use the error bound in Eq.~\eqref{eq:Fouriereigenvalue}. To fix this, 
we note that if we were able to cutoff that sum so that $m'_1,m'_2 \le c L$, then Eq.~\eqref{eq:Fouriereigenvalue} and Eq.~\eqref{eq:VbarHermite} would imply
\begin{align}
\nonumber
    \overline{W} \sket{\psi_{m_1},\psi_{m_2}} &=(F^{-1} \otimes F^{-1})
    \sum_{m'_1,m'_2=0}^{cL}\gamma_{m'_1,m'_2} \sket{\psi_{m'_1},\psi_{m'_2}} + \delta + \cO(\exp(-\gamma L))\\ 
    \nonumber
    &=\sum_{m'_1,m'_2=0}^{cL}
    (-\ri)^{m'_1+m'_2}\gamma_{m'_1,m'_2} \sket{\psi_{m'_1},\psi_{m'_2}} +\delta
    + \cO(L^2 )\times \cO(\exp(-\gamma L))
    \\
    \nonumber & =\sum_{m'_1,m'_2=0}^\infty
    (-\ri)^{m'_1+m'_2}\gamma_{m'_1,m'_2} \sket{\psi_{m'_1},\psi_{m'_2}} +2\delta +  \cO(\exp(-\gamma ' L)) \\
    & = \sum_{m'_1,m'_2=0}^\infty
    \alpha_{m'_1,m'_2} \sket{\psi_{m'_1},\psi_{m'_2}} +2\delta +  \cO(\exp(-\gamma ' L))\; ,\label{eq:w-bar-error}
\end{align}
where $\delta$ is the error due to cutting off the sum at $m'_1,m'_2 \le cL$.
The term $ \cO(L^2) \times \cO(\exp(-\gamma L))$ is 
a worst case bound coming from adding the $\cO(L^2)$
errors in the action of the Fourier transform. It is 
exponentially small in $L$, giving the constant $\gamma'>0$.

Next we show that the contribution $\delta$ in Eq.~\eqref{eq:w-bar-error} can also be made exponentially small. This error satisfies
\begin{align}
    \delta &\le \left\| \sum_{m'_1 \ge c L,m'_2 \ge c L} (-\ri)^{m'_1+m'_2}\gamma_{m'_1,m'_2} \sket{\psi_{m'_1},\psi_{m'_2}}\right\| \\
    &\le
    \label{eq:deltabound}\left| \sum_{m'_1 \ge c L,m'_2 \ge c L}  \gamma_{m'_1,m'_2}\right| \times \cO(\sqrt L) \;,
\end{align}
since $\|\dshostate{m}\|=\cO(L^{1/4})$ for all $m\ge cL$, which follows from $|\psi_m(x)|\le 1$ for the Hermite functions. The inequality now involves the same coefficients obtained by the action of $\hat V$. Note that
from the recurrence
\begin{align}
    \|(\position)^k \tshostate{m}\| \le \sqrt{m+1}
      \|(\position)^{k-1} \tshostate{m+1}\| \le \sqrt{(m+1)\ldots(m+k)} \;.
\end{align}
Then, if $m_1 \le cL/2$ and $m_2 \le cL/2$,
and $k \ge cL$,
\begin{align}
  \left\|  \frac{(\ri \vartheta)^k (\position_1 \position_2)^k}{k!} \ket{\psi_{m_1}^{\rm c},\psi_{m_2}^{\rm c}}
  \right \| &\le \frac{|\vartheta|^k}{k!}
  \sqrt{(m_1+1)\ldots(m_1+k)(m_2+1)\ldots(m_2+k)} \\
  & \le\frac{|\vartheta|^k}{k!} (cL/2+1) \ldots (c'L+k) \\
  &\le \frac{|\vartheta|^k(c L/2 + k)^k}{k!} \\
  &\le \left(\frac{e|\vartheta|(c L/2 + k)}{k}\right)^k & \text{(Stirling's)} \\
  &\leq (3/4)^k \;.
\end{align}
Here we have used the fact that $e |\vartheta| \le 1/2$.
The terms in Eq.~\eqref{eq:deltabound} are indeed generated by the action of $(\position_1 \position_2)^k$ for $k \ge cL$.
Thus we have
\begin{align}
     \delta & \le   \left| \sum_{m'_1 \ge cL,m'_2 \ge cL}  \gamma_{m'_1,m'_2}\right| \times \cO(\sqrt L) \\
     & \le \sum_{k=c L}^{\infty} (3/4)^k \times \cO(\sqrt L) \\
     & \le \exp(-\gamma'' L)
\end{align}
where the constant $\gamma''>0$ can be easily obtained from the inequalities. 

Combining all results we have shown the exponentially small error
 \begin{align}
       \left\| \overline W \sket{\psi_{m_1},\psi_{m_2}} - \sum_{m'_1,m'_2=0}^\infty
    \alpha_{m'_1m'_2}\sket{\psi_{m'_1},\psi_{m'_2}} \right\| \le  \exp(-\tilde \gamma L)
    \end{align}
    where the positive constant $\tilde \gamma>0$
    can also be obtained from the inequalities.
\end{proof}

As a note, the numerical results of Ref.~\cite{somma2016}
show that $c \ge 3/4$, and hence $c' \ge 3/8$ for this Lemma. Furthermore, 
a similar result on an exponentially small discretization error applies to exponentials involving $\position_j \momentum_k$, if they are replaced by exponentials of $\discreteposition_j \discretemomentum_k$.
This is necessary to include all the exponentials appearing in Lemma~\ref{lem:FFQHOs} or Eq.~\eqref{eq:Uinoscillatorsrep}.
We remark that the only such terms satisfy $j \neq k$, allowing us to commute the position and momentum operators and obtain a very similar bound to Lemma~\ref{lem:simple-exponential-bound-p}.

\begin{lemma}\label{lem:simple-exponential-bound-xp}
    Consider the simpler unitary $\hat W= \exp(\ri \vartheta \hat x_j \hat p_k)$ acting on two quantum harmonic oscillators, and the corresponding discretization $\overline W = \exp(\ri \vartheta \overline x_j \overline p_k)$. Let the phase be $\vartheta \in [-1/2e, 1/2e]$. Then, 
    if $\hat W \tshostate{m_j,m_k}= \sum_{m_j^\prime ,m_k^\prime =0}^\infty
    \alpha_{m_j^\prime m_k^\prime }\tshostate{m_j^\prime ,m_k^\prime }$, 
    we obtain
     \begin{align}
       \| \overline W \dshostate{m_j,m_k} - \sum_{m_j^\prime ,m_k^\prime =0}^\infty
    \alpha_{m_j^\prime m_k^\prime }\dshostate{m_j^\prime ,m_k^\prime } \| \le \exp(-\tilde \gamma L)
    \end{align}
    for all $m_j, m_k \le c'L$, where $c'<1$
    and $\tilde \gamma >0$ are positive constants.
\end{lemma}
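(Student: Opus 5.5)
Since $j \neq k$, the operators $\discreteposition_j$ and $\discretemomentum_k$ act on different tensor factors and commute. The plan is to reduce $\overline W$ to a purely positional exponential by conjugating only the $k$-th oscillator with the centered Fourier transform, in the same spirit as Lemma~\ref{lem:simple-exponential-bound-p}. Concretely, with $F$ acting on mode $k$ we have $\discretemomentum_k = F^{-1}\discreteposition_k F$, so
\begin{align}
\overline W = (\one \otimes F^{-1})\, \overline V \,(\one \otimes F), \qquad \overline V := \exp(\ri \vartheta\, \discreteposition_j \discreteposition_k)\;,
\end{align}
and identically in the continuum, $\hat W = (\one\otimes \hat F^{-1}) \hat V (\one \otimes \hat F)$ with $\hat V = \exp(\ri\vartheta \position_j\position_k)$.

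The steps then mirror the previous lemma. First, I apply $\one\otimes F$ to $\sket{\psi_{m_j},\psi_{m_k}}$ and use Eq.~\eqref{eq:Fouriereigenvalue} to replace it by $\ri^{m_k}\sket{\psi_{m_j},\psi_{m_k}}$, at cost $\exp(-\gamma L)$ since $m_k \le c'L \le cL$. Because $\overline V$ is unitary, this error is not amplified. Second, I expand $\overline V$ in its Taylor series and apply the recurrence Eq.~\eqref{eq:cont-positionrecurrence}, exactly as in Lemma~\ref{lem:simple-exponential-bound-x}. This gives $\ri^{m_k}\overline V\sket{\psi_{m_j},\psi_{m_k}} = \sum_{m'_j,m'_k}\gamma_{m'_j,m'_k}\sket{\psi_{m'_j},\psi_{m'_k}}$ with the same coefficients as in the continuum. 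There, $\hat F$ has exact eigenvalues, so $\alpha_{m'_j,m'_k} = (-\ri)^{m'_k}\gamma_{m'_j,m'_k}$. Third, I split this sum into the part with $m'_j,m'_k\le cL$ and a tail. On the truncated part I apply $\one\otimes F^{-1}$ term by term using Eq.~\eqref{eq:Fouriereigenvalue}. The per-term errors are $\exp(-\gamma L)$, each multiplied by $|\gamma_{m'_j,m'_k}|\le 1$, and summing over $\cO(L^2)$ terms still leaves an exponentially small error. This recovers $\sum\alpha_{m'_j,m'_k}\sket{\psi_{m'_j},\psi_{m'_k}}$ up to twice the tail $\delta$ plus $\cO(\exp(-\gamma' L))$.

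The main obstacle is bounding the tail $\delta$, which collects the terms with $m'_j>cL$ or $m'_k>cL$. In the discrete space the states $\dshostate{m}$ with large $m$ are neither normalized nor eigenstates of $F$, so they must be controlled crudely through $\|\dshostate{m}\| = \cO(L^{1/4})$. I will argue as before. A component with index above $cL$ can only come from Taylor terms of order $k \ge cL - \max(m_j,m_k) \ge cL/2$ when $m_j,m_k\le c'L=cL/2$. The norm of such a Taylor term is bounded via $\|\position^k\tshostate{m}\|\le\sqrt{(m+1)\cdots(m+k)}$, so it is at most $(e|\vartheta|(cL/2+k)/k)^k \le (3/4)^k$ using $e|\vartheta|\le 1/2$. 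Summing the geometric tail from $k=cL/2$ and multiplying by the polynomial factors $\cO(\sqrt L)$ gives $\delta\le \exp(-\gamma''L)$. The one point needing care is that the truncation is done on the mixed-mode coefficients, so the coefficient bounds must be applied to the continuum state $\hat V\sket{\psi^{\rm c}_{m_j},\psi^{\rm c}_{m_k}}$, where orthonormality holds, before transferring to the discrete sum. Combining all three errors yields the claimed $\exp(-\tilde\gamma L)$.
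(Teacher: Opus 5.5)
Your route is the one the paper intends. Its proof of this lemma only says the steps are those of Lemma~\ref{lem:simple-exponential-bound-p}. Conjugating just mode $k$ by $F$, the single Fourier error on the input, and the $\cO(L^2)$ Fourier errors on the truncated sum (with $|\gamma_{m'_j,m'_k}|\le 1$ taken from the continuum) are all fine.

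The gap is in the tail bound $\delta$. You correctly identify it as the crux, but it does not go through as written, for two reasons.

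First, you rightly note that with $m_j,m_k\le cL/2$, indices above $cL$ already appear at Taylor order $s\ge cL/2$. But at $s=cL/2$ the ratio $e|\vartheta|(cL/2+s)/s$ equals $2e|\vartheta|$, which is $1$ at $|\vartheta|=1/(2e)$. The bound $(3/4)^s$ needs $s\ge cL$, which is the threshold used in the paper's version of this estimate. So your geometric sum starting at $s=cL/2$ is not exponentially small.

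Second, the inequality $\|\position^s\tshostate{m}\|\le\sqrt{(m+1)\cdots(m+s)}$, taken from the proof of Lemma~\ref{lem:simple-exponential-bound-p}, is false. For example, $\|\position^2\tshostate{4}\|^2=\tfrac34(2\cdot 4^2+2\cdot 4+1)=30.75>30$. The recursion behind it keeps only the raising branch of $\position\tshostate{m}=\sqrt{m/2}\,\tshostate{m-1}+\sqrt{(m+1)/2}\,\tshostate{m+1}$.

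Here is a repair that keeps the stated range $|\vartheta|\le 1/(2e)$. Normal ordering gives $\bra{\psi^{\rm c}_m}e^{\lambda\position}\tshostate{m}=e^{\lambda^2/4}\sum_i\binom{m}{i}\frac{(\lambda^2/2)^i}{i!}$, and extracting coefficients yields the exact identity
\[
\|\position^s\tshostate{m}\|^2=\frac{(2s)!}{4^s s!}\sum_{i}\binom{s}{i}\binom{m}{i}2^i\le 2^{\min(s,m)}\,(m+1)\cdots(m+s)\;.
\]
So in the tail the order-$s$ term picks up an extra factor $2^{(m_j+m_k)/2}\le 2^{c'L}$. This factor is exponential in $L$, and the geometric decay must beat it. Take $c'=c/4$. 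Then tail indices require $s\ge 3cL/4$, the ratio is at most $\frac12(1+\frac13)=\frac23$, and $2^{cL/4}(2/3)^{3cL/4}=e^{-\Omega(L)}$.

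You also need $\ell^1$ control of the tail coefficients, because the discrete $\dshostate{m'}$ are not orthogonal. Use
\[
\sum_{m'}|\gamma^{(s)}_{m'}|\le (s+1)\,\|(\position_j\position_k)^s\sket{\psi^{\rm c}_{m_j},\psi^{\rm c}_{m_k}}\|,
\]
which holds since there are at most $(s+1)^2$ nonzero entries. This polynomial factor is absorbed by the decay, and the same estimate justifies exchanging the Taylor and Fock sums.

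If you prefer the cruder bound from $\position^2\le 2\hat H$, which gives an extra $2^{s/2}$ per mode, you must instead shrink the range to $|\vartheta|\le 1/(4e)$ and take $c'\le c/3$.
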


The proof steps are essentially the same as those in Lemma~\ref{lem:simple-exponential-bound-p}.

We are now ready to prove the main result of this section, namely Lemma~\ref{lem:matrix-element-closeness}.

\begin{proof}
The unitary $\hat U$ of Eq.~\eqref{eq:Uinoscillatorsrep}
is a product of exponentials of monomials quadratic in $\position_j$ and $\momentum_j$. The number of terms $r$ is quadratic in $n$, i.e., a constant for constant $n$. 
Each of these exponentials is replaced by the discrete version. We can naively add up the error accrued by each term of this product by applying Lemmas~\ref{lem:simple-exponential-bound-x},~\ref{lem:simple-exponential-bound-p}, and~\ref{lem:simple-exponential-bound-xp}, but 
this does not suffice.

Recall that these lemmas apply to a Fock input state, and after the action of the first term in the product, the state can now be a linear combination of infinitely many Fock states. 
The hypothesis of Lemmas~\ref{lem:simple-exponential-bound-x},~\ref{lem:simple-exponential-bound-p}, and~\ref{lem:simple-exponential-bound-xp}, however, require the $m_i$'s to be sufficiently small, and each of these exponentials can create Fock states where the $m'_i$'s can be arbitrarily large. 
Nonetheless, in the proof of Lemma~\ref{lem:simple-exponential-bound-p}, we have shown that it is possible 
to perform Taylor series expansions of the exponentials and drop terms with sufficiently large $k$. In particular, we saw that it suffices if $k \ge 2c'L$, if the input state has $m_i \le c'L$. 
In the context of the unitary $\hat U$, 
which involves a product of $r=\cO(n^2)$ exponentials,
this means that if the initial state is $\ket{\psi_{m_1}^{\rm c},\ldots,\psi_{m_n}^{\rm c}}$ with $m_i \le M \le c' L/2^r < c'L$, 
then it suffices to cutoff the Taylor series of the first exponential at $c'L/2^{r-1}$. The state after the cutoff, which is the input state to the second exponential, is a linear combination of Fock states with $m_i \le c' L/2^{r-1}$.
We can then cutoff
the second exponential at $c' L/2^{r-2} \le c' L$, and so on until the last exponential at $c'L$.

Applying a union bound over all $r =\cO(n^2)$ terms in $\hat U$ and using the trivial bound where $m_i \leq c' L$ for all steps, the overall error 
incurred in all these truncations 
would be at most
\begin{align}
    \cO (n^2 (c^\prime L)^{n/2} \times \exp(-\tilde \gamma L)) \;.
\end{align}
This is the error of approximating each $\exp(\ri \vartheta_b \hat O_b)$ in $\hat U$ by the corresponding truncated Taylor series.
We can then invoke    Lemmas~\ref{lem:simple-exponential-bound-x},~\ref{lem:simple-exponential-bound-p}, and~\ref{lem:simple-exponential-bound-xp} to replace these Taylor series by those of the finite-dimensional operators, and then approximate those series by the corresponding finite-dimensional unitaries  $\exp(\ri \vartheta_b \overline O_b)$ that define $\overline U$.
This is possible
since the unitaries are now acting on Fock states that have bounded $m_i \le c' L$.
This will bring an additional error term 
$\cO (n^2 (c^\prime L)^{n/2}\exp(-\tilde \gamma L))$.
Assuming $n$ to be a constant, we combine all these errors in the term $\exp(-\overline \gamma L)$, for a properly defined $\overline \gamma >0$.

Moreover, in each of the Lemmas we assume an upper bound on $m_i$. In particular, for the 
first exponential we have $M \le c' L/2^r$, placing the lower bound $L \ge 2^r M/c'$.

\end{proof}

\section{Isometry}
\label{sec:isometry}
We now discuss steps 1 and 4 of  Sec.~\ref{sec:mainresults}. That is, we give a procedure to perform the descending lexicographic ordering step and outline our application of the Hermite transform.
Together, these steps will allow us to implement the isometry mapping the standard basis into the more convenient Fock basis of Hermite states.

\subsection{Descending lexicographic ordering}
\label{sec:lexicographic}

For given $n \ge 2$ and $M \ge 0$, the first step in Sec.~\ref{sec:mainresults}
concerns a unitary that  maps basis states as
\begin{align}
  V_1  \ket \ell \mapsto \ket{m_1(\ell),\ldots,m_n(\ell)} \; ;
\end{align}
in particular $\ket 0 \mapsto \ket {M,0,0,\ldots,0}$,
$\ket 1 \mapsto \ket{M-1,1,0,\ldots}$ and so on
until $\ket {N-1}\mapsto \ket{0,0,0,\ldots,M}$.
Recall that the dimension satisfies $N=\binom{M+n-1}{n-1}$.
This $V_1$ performs then a descending lexicographic ordering,
with the constraint that $\sum_{i=1}^n m_i = M$
and $m_i \ge 0$.

The functions $m_i(\ell)$ can computed efficiently and the same applies to the inverse function that satisfies
$\bm^{-1}(m_1(\ell),\ldots,m_n(\ell))=\ell$. 
Then we can use reversible circuits for these computations
and perform the desired mapping  coherently. 
The standard strategy is to first compute $\ket{\ell} \mapsto \ket{\ell} \ket{m_1(\ell),\ldots,m_n(\ell)}$ and then uncompute the 
$\ket{\ell}$ register by reversibly computing $\bm^{-1}$
and performing the map $\ket{\ell} \ket{m_1(\ell),\ldots,m_n(\ell)} \mapsto \ket{m_1(\ell),\ldots,m_n(\ell)}$. The overall complexity is that of the computation plus the uncomputation steps.

\begin{lemma}
    There is a quantum circuit which implements the map $V_1$ with gate complexity $\cO(n^2 \log^3 M)$.
\end{lemma}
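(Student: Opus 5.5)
The plan is to build $V_1$ from the standard compute--uncompute pattern described just above. We give classical reversible arithmetic circuits for the forward map $\ell \mapsto (m_1(\ell),\ldots,m_n(\ell))$ (unranking) and for the inverse map $\bm^{-1}$ (ranking). Each should use $\cO(n^2)$ additions, comparisons and multiplications on integers of $\cO(n\log M)$ bits; with constant $n$ these are $\cO(\log M)$-bit numbers. Assuming schoolbook reversible multiplication and division at $\cO(b^2)$ gates for $b$-bit operands, plus an extra logarithmic factor for computing binomial coefficients, the claimed $\cO(n^2\log^3 M)$ follows.

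\textbf{Ranking formula.} We first write $\bm^{-1}$ explicitly. In descending lexicographic order, the number of strings preceding $(m_1,\ldots,m_n)$ is obtained by going coordinate by coordinate. At position $i$, let $R_i := M-\sum_{i'<i} m_{i'}$ be the remaining mass. Then we count all completions whose $i$-th entry exceeds $m_i$:
\begin{align}
\bm^{-1}(m_1,\ldots,m_n)=\sum_{i=1}^{n-1}\sum_{v=m_i+1}^{R_i}\binom{R_i-v+n-i-1}{n-i-1}
=\sum_{i=1}^{n-1}\binom{R_i-m_i-1+n-i}{n-i},
\end{align}
where the second equality is the hockey-stick identity. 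Each term is a binomial $\binom{a}{k}$ with $k\le n-1$ constant and $a\le M+n$. Such a term is a product of $k$ factors divided by $k!$, so it costs $\cO(n)$ multiplications of $\cO(n\log M)$-bit numbers. Summing over $i$ gives $\cO(n^2)$ arithmetic operations, and we implement this reversibly (Bennett-style, with garbage uncomputed).

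\textbf{Unranking.} For the forward map we determine $m_1, m_2, \ldots$ in turn. Given the residual rank $\ell_i$ and the remaining mass $R_i$, the entry $m_i$ is the largest $v$ such that the number of strings with $i$-th entry $>v$ is at most $\ell_i$; that count is the closed form $\binom{R_i-v-1+n-i}{n-i}$, which is monotone in $v$. We find $m_i$ by binary search over $v\in\{0,\ldots,R_i\}$, which takes $\cO(\log M)$ steps, each evaluating one binomial, and then subtract to update $\ell_{i+1}$.

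\textbf{Main obstacle.} The binary search is the step I expect to dominate the cost. Naively it is $\cO(n\log M)$ binomial evaluations per coordinate, giving roughly $\cO(n^2\log M\cdot \log^2 M)$, which matches $\cO(n^2\log^3M)$ when constant-$n$ factors in the bit-length are absorbed. The bookkeeping must be done carefully so that the exponent of $\log M$ does not pick up an extra factor. To do this, I would first try to note that each binomial has constant degree $n-i$, so its evaluation is $\cO(\log^2 M)$ for fixed $n$. The binary-search comparisons are done reversibly with all intermediate values kept and uncomputed afterward, which only doubles the cost.

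\textbf{Assembling $V_1$.} We compute $\ket{\ell}\ket{0}\mapsto\ket{\ell}\ket{m_1(\ell),\ldots,m_n(\ell)}$ with the unranking circuit, uncomputing its garbage. We then apply the ranking circuit to the $m$ register and XOR its output into the $\ell$ register, which zeroes $\ket{\ell}$ because $\bm^{-1}\circ \bm=\mathrm{id}$. Both stages lie within the bound, so the total is $\cO(n^2\log^3 M)$.
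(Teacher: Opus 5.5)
Your proposal is essentially the paper's argument in Appendix~\ref{app:lexicographicranking}. You compute $\ell\mapsto(m_1,\ldots,m_n)$ by a per-coordinate binary search over a monotone closed-form count, and uncompute $\ell$ with a closed-form ranking. The cost is $\cO(n)$ multiplications per binomial, $\cO(\log M)$ search steps per coordinate, and $\cO(\log^2 M)$ gates per multiplication. Your ranking formula coincides with the paper's Eq.~\eqref{eq:appinverselexicographic} after Pascal's rule: with $S_i=R_i-m_i$, $\binom{S_i+n-i}{n-i}-\binom{S_i+n-i-1}{n-i-1}=\binom{S_i+n-i-1}{n-i}$.

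One step must be corrected: the unranking predicate is inverted. The count $C(v)=\binom{R_i-v-1+n-i}{n-i}$ of completions with $i$-th entry $>v$ is \emph{decreasing} in $v$ and vanishes at $v=R_i$. So ``the largest $v$ with $C(v)\le\ell_i$'' is always $R_i$; for example, for $n=3$, $M=2$, $\ell=4$ it returns $m_1=2$ instead of $0$. The right rule is $C(m_i)\le\ell_i<C(m_i-1)$, i.e.\ $m_i$ is the \emph{smallest} $v$ with $C(v)\le\ell_i$, followed by $\ell_{i+1}=\ell_i-C(m_i)$. The paper's own phrasing of this step via $S_r=\binom{M-r+k}{k}$ is similarly off by one (it places $r$ balls where it should place $r-1$).

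Two bookkeeping remarks. First, your opening paragraph should say $\cO(n^2\log M)$ multiplications for unranking, with the extra logarithm coming from the binary search rather than from the binomials; your later paragraph has this right. Second, as you note, the $\cO(n^2\log^3M)$ bound treats the $\cO(n\log M)$-bit binomials as $\cO(\log M)$-bit, i.e.\ it assumes constant $n$. The paper makes the same simplification.
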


The proof is contained in Appendix~\ref{app:lexicographicranking}. 
There we first show how given
$\ell \in \{0,\ldots,N-1\}$,
we can obtain the digits $m_1(\ell),\ldots,m_n(\ell)$ such that ${\rm rank}_{\rm desc}(m_1(\ell),\ldots,m_n(\ell))=\ell$,
where ${\rm rank}_{\rm desc}$ is the ranking. 
This reduces to a partitioning task where we can recursively compute the digits $m_1(\ell)$ all the way to $m_{n-1}(\ell)$. The last digit is determined by the constraint $\sum_{i=1}^n m_i = M$. Each digit is obtained via a binary search approach, based on counting the number of partitions before any string that starts with $m_1,\ldots,m_j$.  Next we show how given the digits $m_1,\ldots,m_n$, we can compute ${\rm rank}_{\rm desc}(m_1,\ldots,m_n)$. To this end we efficiently compute how many
configurations have the first digit less than $m_1$, then how many configurations have the first digit $m_1$ and the next less than $m_2$, and so on. Adding all these configurations determines the ranking. The complexity of the approach is mostly dominated by that of computing corresponding binomial coefficients, and can be shown to be $\cO(n^2 \log^3 (M))$ in the worst case.

\subsection{Quantum Hermite transform}
\label{sec:QHT}

To complete the isometry to  map basis states into Hermite states of $n$ quantum harmonic oscillators, we will use the recent quantum Hermite transform (QHT) of Ref.~\cite{jain2025efficient}.

\begin{theorem}[Efficient quantum Hermite transform, adapted from Ref.~\cite{jain2025efficient}]
\label{thm:QHT}
    Let $M>0$ be the dimension of the subspace for the Hermite transform
and $\epsilon >0$ the error. Then, there
 exists a quantum circuit of complexity $\cO((\log M +
\log(1/\epsilon))^3 \times \log(1/\epsilon))$ that can perform the following map with error $\epsilon$:
\begin{align}
\label{eq:QHT}
     \sum_{m=0}^{M-1} \alpha_m \ket m \mapsto 
      \sum_{m=0}^{M-1} \alpha_m \ket {\psi_m}\;.
\end{align}
The coefficients $\alpha_m \in \mathbb C$ are arbitrary and normalized, i.e., $\sum_m |\alpha_m|^2=1$. The Hermite states  
$\ket {\psi_m}$ are the ones described in Sec.~\ref{sec:finiteQHO} and of dimension $L=\cO(M^{2.25}/\epsilon^{3.25})$.
\end{theorem}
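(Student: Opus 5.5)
Theorem~\ref{thm:QHT} is essentially imported from Ref.~\cite{jain2025efficient}, so the plan is to reconstruct it from the ingredients of that work and to check that its parameters match the discrete Hermite states $\dshostate{m}$ of Eq.~\eqref{eq:hermite-state}. The central device is the fast-forwarded evolution of the discrete oscillator. By Claim~\ref{claim:main}, with $K_1,K_2$ proportional to $\discreteposition^2,\discretemomentum^2$, the evolution $e^{-\ri t\discretehamiltonian}$ factors into a diagonal phase, then a Fourier-conjugated diagonal phase, then another diagonal phase. Each piece costs $\cO(\log L\cdot\polylog(1/\epsilon))$ gates using the QFT and reversible arithmetic. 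By the discretization bounds of Ref.~\cite{somma2016} (cf.\ Eq.~\eqref{eq:singleQHOerrorbounds}), on the span of $\dshostate{m}$ with $m\le cL$ this factorized unitary acts, up to $e^{-\gamma L}$ error, as $\dshostate{m}\mapsto e^{-\ri t(m+1/2)}\dshostate{m}$. This means phase estimation on it can read out $m$ coherently.

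With that primitive, the plan has three steps.
\begin{enumerate}
\item Starting from $\ket m\ket 0$, prepare a fixed, easily prepared reference state whose overlap with each $\dshostate{m}$, $m<M$, is at least inverse polynomial. A natural choice is a Gaussian or coherent-type wavepacket spread over the region $|x|\lesssim\sqrt{2M}$, possibly chosen conditioned on $m$.
\item Run phase estimation with the fast-forwarded $e^{-\ri t\discretehamiltonian}$ to project onto energy $m+1/2$. Because the eigenvalues are integer-spaced, $\cO(\log M)$ bits suffice.
\item Use amplitude amplification or a Grover--Rudolph-type construction, guided by the Plancherel--Rotach asymptotics of $\psi_m$, to make the projection succeed deterministically. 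Then uncompute the label register via a second phase estimation, which yields the coherent map $\ket m\mapsto\dshostate{m}$. Linearity extends this to superpositions $\sum_m\alpha_m\ket m$.
\end{enumerate}

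The gate count follows from these steps. Phase estimation needs precision $\cO(\log M+\log(1/\epsilon))$ bits, and each controlled evolution costs $\cO((\log M+\log(1/\epsilon))^2)$ because of arithmetic on that many bits. This gives $\cO((\log M+\log 1/\epsilon)^3)$ per round. The $\log(1/\epsilon)$ factor comes from amplification or repetition to drive the failure probability down to $\epsilon$.

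The main obstacle is step 3: preparing $\dshostate{m}$ exactly rather than an energy-filtered mixture. The difficulty is controlling how the errors of the approximate eigenstates and of the discretization propagate when $L$ must also keep the oscillatory Hermite functions resolved. This is the step that forces the polynomial choice $L=\cO(M^{2.25}/\epsilon^{3.25})$. It comes from balancing the grid spacing $\sqrt{2\pi/L}$ against the local wavelength $\sim 1/\sqrt{M}$ of $\psi_m$ and the $\epsilon$-accuracy of the Plancherel--Rotach approximation near the turning points. I would verify this by checking that the inner-product errors are $\cO(\epsilon)$ and not only exponentially small in $L$. For those details I would defer to Ref.~\cite{jain2025efficient}.
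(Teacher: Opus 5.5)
The gap is in the overlap you accept in step~1. Your gate count assumes amplification contributes only a factor $\cO(\log(1/\epsilon))$. That holds only if the starting state has \emph{constant} overlap with $\dshostate{m}$, uniformly over $m<M$. With overlap amplitude $a$, amplitude amplification needs $\Theta(1/a)$ rounds. An inverse-polynomial overlap therefore costs $\mathrm{poly}(M)$ rounds, which is exponential in $\log M$, and the claimed bound is lost.

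Your candidate states show the problem concretely. No $m$-independent state can work: approximate orthonormality of the $\dshostate{m}$ and Bessel's inequality force $\min_{m<M}|\langle\psi_m|\phi\rangle|^2\lesssim 1/M$. A centered wide Gaussian is even, so it has essentially zero overlap with every odd $m$. Even the $m$-conditioned coherent state with $|\alpha|^2=m$ has overlap amplitude only $\approx(2\pi m)^{-1/4}$, which still requires $\Theta(M^{1/4})$ rounds.

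Step~3 cannot repair this. Plain amplitude amplification inherits the $1/a$ cost. A Grover--Rudolph preparation needs integrals of the highly oscillatory $\psi_m^2$ over dyadic intervals, which is exactly the unresolved obstacle in Zalka's approach~\cite{zalka2004implementing}.

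The missing idea, which is how the paper (following Ref.~\cite{jain2025efficient}) proceeds, is to use the Plancherel--Rotach asymptotics in step~1, not step~3. Conditioned on $m$, you map $\ket{m}\mapsto\ket{m}\ket{\phi_m}$, where $\ket{\phi_m}$:
\begin{itemize}
\item is built from a simple closed-form oscillatory approximation to $\psi_m$,
\item is efficiently preparable, and
\item has \emph{constant} overlap with $\dshostate{m}$.
\end{itemize}
The requirement $L=\cO(M^{2.25}/\epsilon^{3.25})$ comes from the proof of this constant overlap, not from the amplification stage.

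Then eigenstate filtering by phase estimation with the fast-forwarded evolution maps $\ket{\phi_m}$ to $\dshostate{m}$ in $\cO(\log(1/\epsilon))$ rounds. This must be combined with \emph{fixed-point} amplitude amplification, because the overlap varies with $m$ across a superposition, so no fixed number of ordinary Grover rounds suits all $m$. A second phase estimation then uncomputes $\ket{m}$. This part of your plan, including the fast-forwarding primitive, matches the paper.

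One further detail you should address: filtering outputs $\dshostate{m}$ multiplied by the phase of $\langle\psi_m|\phi_m\rangle$. So $\ket{\phi_m}$ must have positive overlap to give exactly the map in Eq.~\eqref{eq:QHT}, rather than that map composed with an $m$-dependent diagonal phase. The paper notes this phase is $+1$ for the construction used.
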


In Ref.~\cite{jain2025efficient}, the QHT is simply a basis change and hence the Hermite states
can be defined up to a phase. Nevertheless, it is simple to show that the resulting phase of that construction is exactly +1 for all $m$ when the Hermite states $\ket {\psi_m}$ in Eq.~\eqref{eq:QHT} 
are according to the definition of Sec.~\ref{sec:finiteQHO}. If this were not the case, then the unitary $\hat U$ of $SU(n)$ in the oscillator representation, corresponding to $U$, would be the desired one up to conjugation with a diagonal unitary that contains the phases. 
That is, we would not be implementing $U$ 
exactly with our quantum algorithm as specified in Secs.~\ref{sec:problemstatement} and~\ref{sec:oscillatorrep}, but rather $U$ up to this conjugation. The resulting transformation would still be a valid irrep of $SU(n)$ but not exactly the one we want.

The construction of the QHT in Ref.~\cite{jain2025efficient} follows simple state preparation steps. For each $m$, first it maps $\ket m \mapsto \ket m \ket {\phi_m}$, where 
the $\ket {\phi_m}$ have constant overlap with the $\ket{\psi_m}$. This can be done using 
some simple oscillatory approximations to the Hermite functions. Next it maps $\ket{\phi_m}\mapsto \ket{\psi_m}$ using a combination of quantum phase estimation (eigenstate filtering) and fixed point amplitude amplification. Then it maps $\ket m \ket{\psi_m}\mapsto \ket 0 \ket{\psi_m}$ to uncompute the $\ket m$ register, also using quantum phase estimation. These steps perform the right operation on arbitrary states due to linearity. The efficient complexity results from the ability to prepare the $\ket{\phi_m}$ efficiently and a corresponding fast forwarding result that enables efficient quantum phase estimation. The dimension requirement for $L$ being polynomial in $M$ and $1/\epsilon$ is mainly
due to a proof that shows constant overlap between $\ket{\phi_m}$ and $\dshostate{m}$, but might be improved with better analyses.

Then, we can complete the isometry for the $SU(n)$ transform by $n$ uses of the QHT, each within error $\epsilon/n$ and corresponding dimension $L$. 
That is,
\begin{align}
    ({\rm QHT})^{\otimes n} \ket{m_1,\ldots,m_n} \mapsto
    \ket{\psi_{m_1},\ldots,\psi_{m_n}}
\end{align}
within error $\epsilon$.
This is the unitary $V_2$ in step 2 of Sec.~\ref{sec:mainresults}.
Note that we are interested in performing this map for up to $M$ Hermite states, and since $M<N$  (i.e., $N=\binom{M+n-1}{n-1}$), the resulting complexity is polylogarithmic in $N$ and $1/\epsilon$. Furthermore, in comparing the requirements on $L$ from Lemma~\ref{lem:matrix-element-closeness} and Thm.~\ref{thm:QHT}, a suitable choice of $L=\cO(N^{2.25}/\epsilon^{3.25})$ will suffice for overall error $\epsilon$.

\section{Quantum algorithm}
\label{sec:quantumSU(n)}

Our main quantum algorithm for the high-dimensional $SU(n)$ problem essentially follows the four steps in 
Sec.~\ref{sec:mainresults}.
Let $\epsilon>0$ be the error, $N>0$ be the dimension of the irrep determined via $N=\binom{M+n-1}{n-1}$, and $n \ge 2$ be the dimension of the group $SU(n)$.
     Let $\vartheta \in [0,4 \pi)$ be the phase given within $\cO(\log(N/\epsilon))$ bits of precision, $T \in \{H_i,S_{j,k},A_{j,k}\} \in \mathbb C^{N \times N}$
     be a Hermitian matrix in the $N$-dimensional representation of $\mathfrak{su}(n)$ as in Sec.~\ref{sec:problemstatement}, and consider the implementation of the exponential $\exp(\ri \vartheta T)$.
     We let $\hat T$ be the corresponding $T$ in the space of the $n$ quantum harmonic oscillators (replacing $H_i \mapsto \hat H_i$, $S_{j,k} \mapsto \hat S_{j,k}$, or  $A_{j,k} \mapsto \hat A_{j,k}$). Similarly, $\bar O_b$ is the corresponding $\hat O_b$ but in the space of the discrete quantum harmonic oscillator.

\begin{algorithm}
    \caption{Efficient high-dimensional $SU(n)$ circuit}  \label{alg:high-dim-sun-rotation} 
     Compute the local dimension $L=\cO(M^{2.25}/\epsilon^{3.25})$ and consider the space $(\mathbb C^L)^{\otimes n}$. \\
     Implement the lexicographic ordering unitary $V_1$
     as described in Sec.~\ref{sec:lexicographic}. \\
     Apply $V_2=\text{QHT}^{\otimes n}$ as described in Sec.~\ref{sec:QHT}. \\
     Factorize $\exp(\ri \vartheta \hat T)=\prod_{b=1}^{r} \exp(\ri \vartheta_b \hat O_b)$
     according to Lemma~\ref{lem:FFQHOs}.\\
    \For{$b \in [r]$}{
    Compute $t = \lceil 2e \vartheta_b \rceil$ and define $\vartheta_b^\prime = \vartheta_b/t$. \\
    Perform $t$ repetitions of $\exp(\ri \vartheta_b^\prime \overline O_b)$ to satisfy the conditions of Lemmas~\ref{lem:simple-exponential-bound-x} to~\ref{lem:simple-exponential-bound-xp}.
    }
      Implement the inverse of the $n$ QHTs and the inverse of $V_1$.
\end{algorithm}

This algorithm implements the desired $U$ as $(V_1)^\dagger (V_2)^\dagger\exp(\ri \vartheta \overline T)V_2 V_1$,  where $\exp(\ri \vartheta \overline T)=\prod_{b=1}^{r} \exp(\ri \vartheta_b \overline O_b)$.

\subsection{Complexity}
\label{sec:complexity}

We now prove our main result in Thm.~\ref{thm:main}.
Thus far we showed that the complexity of $V_1$ and the $n$ QHTs is polylogarithmic in $1/\epsilon$ and $N$.
More specifically, the isometry can be implemented with complexity
\begin{align}
    \cO(n^2 \times \log^3(N) + n((\log (M)+\log(1/\epsilon))^3 \times \log(1/\epsilon))\;.
\end{align}
For constant $n$, this is $\cO((\log (M)+\log(1/\epsilon))^3 \times \log(1/\epsilon))$ and dominated by the QHTs.
More generally, we can write $\cO(n^2((\log (N)+\log(1/\epsilon))^3 \times \log(1/\epsilon))$ for this complexity.

The additional complexity comes from the implementation
of $\exp(\ri \vartheta \overline T)$. For each diagonal unitary, such as an exponential of $\overline x_j \overline x_k$ and $\vartheta$ is given with $\cO(\log(N/\epsilon))$ bits of precision, the unitary   can be implemented with cost $\cO(\log^2(N/\epsilon))$ using standard techniques. If the operation is not diagonal, we need to add the cost of the centered Fourier transform of dimension $L$, which is $\cO(\log^2 L)$ or $\cO(\log^2(N /\epsilon))$. 

For a general $U$ in the high-dimensional irrep of $SU(n)$ as in Thm.~\ref{thm:main}, there will be $n^2-1$ phases and exponentials. Hence, the complexity of implementing 
the corresponding unitary in the oscillator representation will be $\cO(n^2 \log^2(N /\epsilon))$, since there are $\cO(n^2)$ terms in the product.

The resulting complexity for a general $U$ is then 
\begin{align}
     \cO(n^2(\log (N)+\log(1/\epsilon))^3 \times \log(1/\epsilon) + n^2 \log^2(N /\epsilon))
\end{align}
or simply $\cO(n^2(\log (N)+\log(1/\epsilon))^3 \times \log(1/\epsilon))$.
 This is the actual complexity in the result in Thm.~\ref{thm:main}.

\section{Applications}
\label{sec:applications}

We begin with a concrete application of the $SU(2)$ transform to optimal quantum expanders and later comment on potential applications of the transforms to quantum simulation problems.  

\subsection{Explicit Ramanujan quantum expanders}

We provide the proof of Corollary~\ref{cor:expanders}.
A quantum expander is a generalization of a classical expander using quantum channels instead of Markov chains. A quantum expander's goal is to randomize an input quantum state quickly, acting like a depolarizing channel. They are useful in physics
and quantum information; for example, they can be 
be used to probe quantum chaos since they are fast scramblers, they play a role in matrix product states, and are related to constructions of unitary designs (Cf.~\cite{ben2007quantum,hastings2007random,harrow2007quantum}).
The definition of a quantum expander relies on the spectral properties of the channel. Let $\rho \in \mathbb C^{N \times N}$ be a density matrix and $\varepsilon(\cdot)$ be the quantum channel. 
Formally, we say $\varepsilon$ is an $(N,D,\lambda)$ quantum expander if it satisfies:
\begin{enumerate}[label=(\roman*)]
    \item $\varepsilon(\one_N)=\one_N$ is the unique fixed point,
    \item $\varepsilon(\rho)=\sum_{\rd=1}^D A_\rd^{} \rho A_\rd^\dagger$, where $A_\rd$ are Kraus operators, and
    \item the second largest eigenvalue of the $\varepsilon$ is at most $\lambda <1$. 
\end{enumerate}
This last condition is equivalent to 
\begin{align}
    \sup_{X \in \mathbb C^{N \times N}: \tr(X)=0} \|\varepsilon(X)\|_{\rm Fr}/\|X\|_{\rm Fr}\le \lambda < 1,
\end{align}
where $\|.\|_{\rm Fr}$ is the Frobenius norm. (We assume all other eigenvalues to be at most $\lambda$ in magnitude.) The degree of the expander is $D$ and   we are interested in cases where $D$ is also a constant independent of $N$. Intuitively, the larger the gap $1-\lambda$ between the two largest eigenvalues is, the faster the expander scrambles. Optimal quantum expanders, also known as Ramanujan quantum expanders, are then those that satisfy an optimal relationship between $\lambda$ and $D$, namely $\lambda \le 2 \sqrt{n-1}/D$. If this eigenvalue relationship is not satisfied exactly, but only satisfied up to a very small and arbitrary difference $\epsilon > 0$, we refer to such quantum expanders as approximate Ramanujan quantum expanders.

It turns out that exact Ramanujan quantum expanders can be explicitly constructed from certain $SU(2)$ unitaries and, since we can implement these unitaries with arbitrary accuracy, our construction gives approximate {\em and} efficient Ramanujan quantum expanders for $D \ge 2$. That is, of interest to quantum computing are those quantum expanders $\varepsilon$ that can be simulated with complexity that is $\polylog(N)$. 
In particular, Refs.~\cite{gamburd1999spectra,bourgain2006new}
provide a result on quantum expanders from $SU(2)$: there is a particular finite set $\{U_1,\ldots,U_D\}$ of unitaries in the $N$-dimensional representation of $SU(2)$ such that the quantum channel $\varepsilon(\rho)=\frac 1 D \sum_\rd U_\rd^{} \rho U_\rd^\dagger$ is a quantum expander. Furthermore, it is possible to construct {\em exact} Ramanujan quantum expanders from $SU(2)$ unitaries derived from the LPS construction~\cite{lubotzky1988ramanujan,harrow2007quantum}. However, prior to this work, an efficient implementation of these unitaries was not known.

Our efficient $SU(n)$ transform can then be used to construct such efficient and (approximate) Ramanujan quantum expanders. This construction is explicit and specified by a sequence of rotation angles and axes that determine the $U_\rd$'s; see Appendix~\ref{app:Ramanujanexpander} for details. This contrasts other constructions of approximate Ramanujan quantum expanders based on random quantum circuits. For example, Ref.~\cite{hastings2007random} shows that approximate Ramanujan quantum expanders can be obtained from $D \ge 2$ Haar random unitaries. In that case it is possible to show that the eigenvalue relationship is satisfied up to an additive term $\cO(1/\sqrt N)$, with overwhelming probability $1-\cO(\exp(-cN)$ for some $c>0$. However, this construction is not efficient since Haar random circuits need exponentially many gates\footnote{In addition we note that, while the additive term to the eigenvalue relationship is small, it depends on $N$ and cannot be made arbitrarily small for fixed $N$. In our $SU(2)$ constructions, we can take $\epsilon \rightarrow 0$ independently of $N$.}. To bypass this efficiency problem, Refs.~\cite{brandao2016local,hunter2019unitary,haferkamp2021improved} 
investigate the related problem of constructing $t$-designs from efficient random quantum circuits that are local or have all-to-all connectivity. Those results suggest that quantum expanders can be constructed from random quantum circuits if the depth is sufficiently large, but they are not known to  be Ramanujan, even approximately. Furthermore, for $D$ constant, those constructions require using $\Omega( \polylog N)$ random bits  for the random gates, while constructions based on $SU(2)$ only require a constant number of random bits to implement the channel. 

Another approach for constructing Ramanujan quantum expanders might be to use recent efficient constructions of $t$-designs in extremely low depth~\cite{SchusterHaferkampHuang25} to try to reproduce the pseudorandomness needed in the proof that Haar random unitaries are Ramanujan quantum expanders~\cite{hastings2007random}. However, to the best of our knowledge, the eigenvalue relation property uses $\poly(N)$ moments of the Haar random distribution (as written they use $\Theta(N^{2/15})$ moments). It's possible that the analysis could be refined to get $\varepsilon$-close to the Ramanujan bound with fewer moments, but this is not straightforward. It is known that $t$-designs require depth $t$ circuits, hence the circuit complexity of such an approach would necessarily be $\poly(N)$ without further insight.

In Appendix~\ref{app:Ramanujanexpander} we give a concrete example for a Ramanujan quantum expander based in $SU(2)$, for arbitrary dimension $N \ge 2$ and certain degrees $D \ge 4$. We also work out the specific case $D=6$, where the second largest eigenvalue satisfies $\lambda \le 2 \sqrt 5/6$ for all $N \ge 2$. 
Setting the rotation angle to $\theta=2 \arccos(1/\sqrt 5)$, we define the quantum channel $\varepsilon_{\mathfrak{su}(2)}$ that samples a random unitary $U^\pm_a = \exp(\pm \ri \theta J_a)$, where $a=x,y,z$ and $J_a$
is the angular momentum operator. In Fig.~\ref{fig:spectralgapexpander}
we show the spectral gap of this quantum channel and how it satisfies   the optimal bound. In combination with our results that prove these unitaries can be implemented with complexity $\polylog(N)$ and $\polylog(1/\epsilon)$, the result is an efficient and explicit approximate Ramanujan quantum expander.
\begin{figure}[htb]
    \centering
    \includegraphics[scale=0.4]{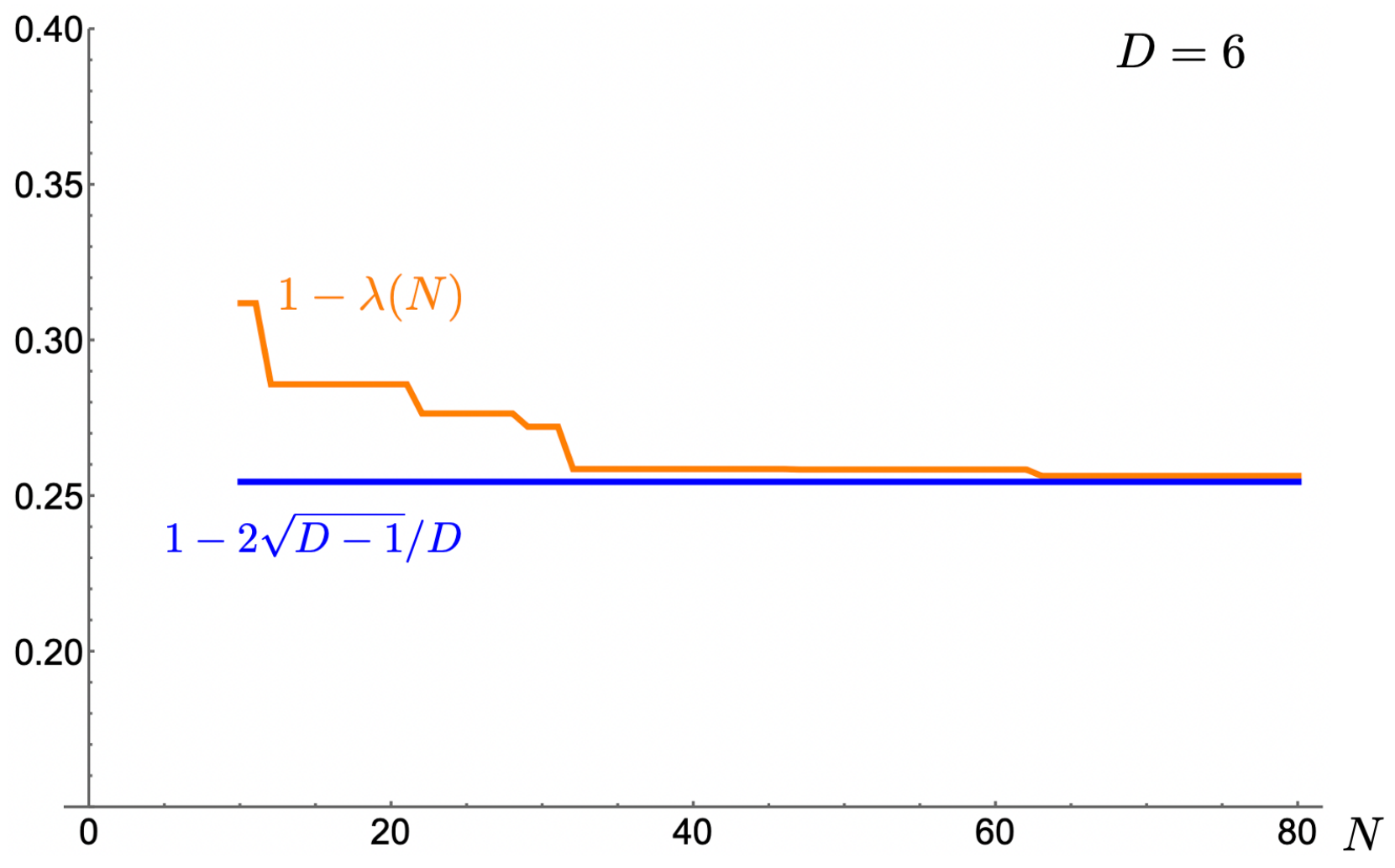}
    \caption{The spectral gap of the quantum expander $\varepsilon_{\mathfrak{su}(2)}$ as a function of the dimension $N \ge 10$ of the representation. The spectral gap satisfies the optimal bound $1-\lambda(N) \ge 1-2\sqrt 5/6$.}
    \label{fig:spectralgapexpander}
\end{figure}

  \subsection{Fast-forwarding quantum simulation}

We can also consider applications to quantum simulation. Within the context of simulating quantum evolution, our quantum circuits for $SU(n)$ transforms give yet other examples to the growing literature of  fast-forwarding~(Cf. \cite{atia2017fast,babbush2018low,gu2021fast,jain2025efficient,Apel2026}). In general, for simulating a Hamiltonian $H$ for time $t$, the complexity is $\Omega(\|H\|t)$, where $\|H\|$ is some norm like the spectral norm. In our case, where the Hamiltonian is spanned by the corresponding representation of the Lie algebra 
$\mathfrak{su}(n)$, the spectral norm can satisfy $\|H\|=\Theta(N)$, yet our quantum transforms allow us to simulate this evolution exponentially faster, in time $\polylog(N)$. 
A main example is the one underlying our construction, where we readily showed how to fast forward the dynamics of $n$ quantum harmonic oscillators, where the Hamiltonian takes the form
\begin{align}
    H = \sum_{j,k=1}^n \alpha_{j,k} a^\dagger_j a^{}_k \;.
\end{align}
While this Hamiltonian acts on the infinite-dimensional spaces, we also showed that with proper discretizations and dimension cutoff, we can recover the properties in the continuum with excellent accuracy. This result, however, does not necessitate the quantum Hermite transform since it is a bosonic system by definition and the evolution is a transformation over the Hermite states. We showed how to do this efficiently via the Jordan-Wigner representation.

Another interesting example is that of the quantum kicked top, which is a model used to study quantum chaos, and was a motivation for Ref.~\cite{zalka2004implementing}. 
In this case, the periodic and time-dependent Hamiltonian is $H(t) = \gamma J_y + \beta (J_z)^2 \sum_n \delta (t- n T)$, where $T$ is the period, and $J_a$ the angular momentum operators in the $N$-dimensional representation of $\mathfrak{su}(2)$. The evolution $U(t)$ induced by $H(t)$ can be expressed in terms of the Floquet operator $V=e^{-\ri \beta (J_z)^2} e^{-\ri \gamma J_y}$; that is, $U(kT)=V^k$.
Then, this can also be fast forwarded since: 
i) our results show how $e^{-\ri \gamma J_y}$ can be fast forwarded, and ii) the operator $e^{-\ri \beta (J_z)^2}$ is a diagonal unitary, which can implemented efficiently using standard techniques for diagonal unitaries.

\section{Conclusion \& open questions}
\label{sec:conclusions}

We provided efficient quantum circuits to implement unitaries in the $N$-dimensional representations of the group $SU(n)$. The circuits leverage a recently proposed quantum Hermite transform (QHT), and the Jordan-Schwinger representation of the Lie algebra $\mathfrak{su}(n)$ using $n$ quantum harmonic oscillators. To obtain complexity polylogarithmic in the dimension $N$, we showed that the dynamics of the $n$ quantum harmonic oscillators induced by the corresponding (quadratic) operators can be fast forwarded, and then we proved a similar result upon discretizing these systems. Our findings work for the totally symmetric representations, and we leave open the case of implementing $SU(n)$ unitaries using other representations.

We presented applications to efficient and explicit Ramanujan quantum expanders and to fast-forwarding time dynamics of specific quantum systems. 
In those cases, we assumed $n \ge 2$ to be constant, but also notice that our results can be extended to the case where $n=\polylog(N)$ in a straightforward way. 
This opens up the possibility of fast-forwarding other quantum systems.
Beyond, we expect that our results will unlock other novel applications.

We note that our quantum circuits 
might not be optimal in terms of gate count, which is an important problem when 
investigating actual practical applications. 
The complexity of the high-dimensional $SU(n)$
unitary is dominated by that of the QHT in our case, which in turn is dominated by some state preparation steps requiring reversible and coherent arithmetics. Hence, any significant improvement 
might result from improvements to the corresponding steps of the QHT. Finally, we end with some open questions for future work.

\begin{itemize}

    \item Can we use our construction to implement the Fourier transform for the groups $SU(n)$ efficiently? 

    \item Given that we only implement the totally symmetric irreps, can we also implement all the other high-dimensional irreps of $SU(n)$ efficiently?
    
    \item Can we construct explicit Ramanujan quantum expanders for all degrees and dimensions? This would be a quantum analog of a celebrated line of work~\cite{MSSone,MSSfour,Alon2021}.

    \item What are the optimized gate counts for implementing $N$-dimensional irreps of $SU(2)$ for reasonably large values of $N$?
\end{itemize}

\subsection*{Acknowledgements}

We thank Aram Harrow for 
emphasizing the application to explicit Ramanujan quantum expanders and Dave Bacon for discussions.
VI is supported by a National Science Foundation Graduate Research Fellowship. S. Jain is supported by an Amazon AI Fellowship.

\appendix

\section{Irreducible representations of the Lie algebra $\mathfrak {su}(n)$ using bosonic operators}
\label{app:schwingerrep}

The oscillator representation of 
$\mathfrak {su}(n)$, also known as the Jordan-Schwinger representation or mapping, provides a path 
to generating the $\mathfrak {su}(n)$ algebra
from the bosonic operators associated with distinct bosonic modes or distinct quantum harmonic oscillators.

Consider the creation and annihilation 
operators $a^\dagger_j$ and  $a^{}_j$, respectively, satisfying the commutation relations
\begin{align}
\label{eq:bosonicrelations}
    [ a^{}_j, a^\dagger_k] & =\delta_{j,k},  \;
     [a^\dagger_j, a^\dagger_k]  =0 , \;
      [a^{}_j, a^{}_k]  =0 ,
\end{align}
 where $1 \le j,k \le n$. We define the operators
 $\hat E_{j,k}:=a^\dagger_j a^{}_k$, being quadratic in bosonic operators and also ``number preserving''. That is, while $a^\dagger_j$ creates a boson in the $j^{\rm th}$ bosonic mode, $a^{}_k$ annihilates a boson in the 
 $k^{\rm th}$ bosonic mode. Accordingly, we define
 \begin{align}
     \hat S_{j,k}&:=\frac 1 2( \hat E_{j,k} + \hat E_{k,j} )\; , \\
     \hat A_{j,k}&:= \frac {\ri} 2(\hat E_{j,k} - \hat E_{k,j}) \; ,
 \end{align}
 where $1 \le j < k \le n$, and $\hat E_{k,j}=(\hat E_{j,k})^\dagger=a^\dagger_k a^{}_j$.
 These are Hermitian and, for $1 \le j < k \le n$, there are $n(n-1)$ such operators.
We also define the diagonal operators
\begin{align}
    \hat H_i := \frac 1 2 (a^\dagger_i a^{}_i - a^\dagger_{i+1} a^{}_{i+1})\;, 
\end{align}
where $1 \le i < n$. These are also Hermitian,  number preserving, and there are $n-1$ such operators.

The connection with $\mathfrak{su}(n)$ is readily apparent. Indeed, it is simple to show that $\{ \overline H_j,  \overline S_{j,k},  \overline A_{j,k} \}$ also span the real Lie algebra $\mathfrak{su}(n)$ from the commutation relations in Eq.~\eqref{eq:bosonicrelations}, which imply
\begin{align}
    [\hat E_{j,k},\hat E_{l,m}]&=\delta_{k,l} \hat E_{j,m} - \delta_{j,m} \hat E_{l,k} \;,\\
     [\hat E_{j,k},\hat H_i] &=\delta_{k,i} \hat E_{j,i}-\delta_{j,i} \hat E_{i,k} - \delta_{k,i+1} \hat E_{j,i+1} + \delta_{j,i+1} \hat E_{i+1,k}\;,\\
      [\hat H_i,\hat H_{i'}] & = 0 
    \;.
\end{align}
The structure factors of these commutations match those
of the commutation relations between the $n \times n$ matrices $E_{j,k}$, which readily provided an irreducible representation of $\mathfrak{su}(n)$.
Hence, the operators $\hat S_{j,k}$, $\hat A_{j,k}$, and $\hat H_i$
specify an $\mathfrak {su}(n)$ Lie algebra as well.

The next step is to construct irreducible representations of
$\mathfrak {su}(n)$ from these operators. To this end, consider $n$ different bosonic modes, where the total number of bosons is $M \ge 0$ and fixed. This subspace is spanned by the following Fock states:
\begin{align}
 \cB_{n,M} :=\left \{  \ket{m_1,m_2,\ldots,m_n} : \sum_{j=1}^n m_j = M \right \} \;.
\end{align}
A useful ordering for Fock states is the descending lexicographic ordering discussed in Sec.~\ref{sec:oscillatorrep}.
   Note that the action of the bosonic operators is such that
   \begin{align}
       \hat E_{j,k} \ket{m_1,\ldots,m_j, \ldots, m_k, \ldots ,m_n} = \sqrt{(m_j+1)m_k}\ket{m_1,\ldots,m_j+1, \ldots, m_k-1, \ldots, m_n}\;.
   \end{align} 
   This action automatically gives a representation of $\mathfrak{su}(n)$ using Fock states. Furthermore, the action is irreducible since there is always a sequence of operators $\hat E_{j,k}$ that connects any two arbitrary Fock states of $M$ bosons. The overall dimension of this irreducible representation $N$ is given by the composition of $M$ with $n$ non-negative integers. The formula is
   \begin{align}
       N(n,M) = \begin{pmatrix}
           M + n -1 \cr n-1
       \end{pmatrix}\;,
   \end{align}
   which can be obtained from the recursion $N(n,M)=\sum_{m=0}^M N(n-1,m)$ and $N(1,m)=1$. This dimension is not arbitrary unless $n=2$, in which case $N(n,M)=M+1$, and we can 
   set $M$ accordingly. More generally, these dimensions refer to the ``totally symmetric'' representations of $SU(n)$. For $n \ge 3$, these are a strict subset of all representations.

   We can illustrate this with an example. Consider the case $n=3$ and $M=2$. The dimension of the representation is $N(3,2)=6$ and the Fock states that span this space, in descending lexicographic ordering, are
   \begin{align}
\cB_{3,2}=\{\ket{2,0,0},\ket{1,1,0},\ket{1,0,1},\ket{0,2,0},\ket{0,1,1},\ket{0,0,2} \}\;.
   \end{align}
Using this ordering we can compute the representations of the $\hat E_{j,k}$; for example
\begin{align}
   \hat E_{1,2} = \begin{pmatrix}
   0 & \sqrt 2 & 0 & 0 & 0 & 0 \cr
       0 & 0 & 0 & \sqrt 2 & 0 &0 \cr
       0 & 0 & 0 & 0 & 1 & 0 \cr
       0 & 0 & 0 & 0 & 0 & 0 \cr
       0 & 0 & 0 & 0 & 0 & 0 \cr
       0 & 0 & 0 & 0 & 0 & 0
   \end{pmatrix}, \; \hat H_1=
   \begin{pmatrix}
   1 & 0 & 0 & 0 & 0 & 0 \cr
       0 & 0 & 0 & 0 & 0 &0 \cr
       0 & 0 & 1/2 & 0 & 0 & 0 \cr
       0 & 0 & 0 & -1 & 0 & 0 \cr
       0 & 0 & 0 & 0 & -1/2 & 0 \cr
       0 & 0 & 0 & 0 & 0 & 0
   \end{pmatrix}.
\end{align}

\subsection{High-dimensional representation of $\mathfrak{su}(2)$}
\label{app:su(2)lexicographic}

Another standard example is when $n=2$. In this case, we can obtain all the irreducible representations. For general $M$, considering the basis 
\begin{align}
    \cB_{2,M}=\{\ket{M,0},\ket{M-1,1},\ldots,\ket{0,M}\},
\end{align}
which is readily given in the descending lexicographic ordering,
we obtain
\begin{align}
   \hat E_{1,2} = \begin{pmatrix}
       0 & \sqrt M & 0 & \ldots & 0 \cr
       0  & 0 & \sqrt{(M-1)2} & \ldots & 0 \cr
       \vdots & \vdots & \vdots & \ddots & \vdots \cr
       0 & 0 & 0 & \ldots & \sqrt{M}\cr
       0 & 0 & 0 & \ldots & 0
   \end{pmatrix},
\end{align}
which is the well-known $\mathfrak{su}(2)$ raising angular momentum operator $J^+$, and also
\begin{align}
   \hat H_1 = \frac 1 2\begin{pmatrix}
       M & 0 & \cdots & 0 \cr
       0 & M-2 & \cdots & 0 \cr
       \vdots & \vdots & \ddots & \vdots \cr
       0 & 0 & \cdots & -M
   \end{pmatrix}  ,
\end{align}
which is   the angular momentum operator $J_z$.

\section{Ranking in the descending lexicographic ordering}
\label{app:lexicographicranking}

We would like to perform the mapping $\ket \ell \mapsto \ket{m_1(\ell),\ldots,m_n(\ell)}$ and its inverse,
by using coherent arithmetics. Next we describe 
the corresponding classical computation and uncomputation steps and their complexities, which also determine the quantum complexities.

\subsection{From an index to a lexicographically ordered partition}

Given $\ell \in \{0,\ldots,N-1\}$,
we want to obtain $m_1(\ell),\ldots,m_n(\ell)$, where $\sum_i m_i(\ell)=M$. 
This is a partitioning task, which can be thought of as placing $M$ balls into $k \le n$ bins, and there are $\binom{M + k - 1}{k - 1}$ ways of doing this. 
First we fix $M$ and $k=n-1$, and define $N_r$ to be the number of partitions with $r$ balls in the first bin. From the prior formula, we obtain
\begin{align}
N_r = \binom{M - r + k-1 }{ k-1}\;.
\end{align}
Now, define $S_r: = \sum_{k=r}^{M} N_k$. By a simple combinatorial argument, we have
\begin{align}
S_r = \sum_{r'=r}^{M} \binom{M - r' + k-1}{k-1} = \binom{M - r + k}{k}  \;.
\end{align}
This counts the number of partitions that are in descending lexicographic ordering before encountering \emph{any other} partition whose first element is $r-1$. 

Our algorithm to compute $m_1(\ell), \ldots,m_n(\ell)$ will proceed recursively: find the smallest $r$ such that $S_r \leq \ell$. This can be done via binary search in time $\log M$, since $S_r$ is monotonically decreasing in $r$. This will readily give $m_1(\ell)$.
Then place $r$ balls in the first bin and solve the task recursively with $\ell \to \ell - S_r$, $M \to M - r$, and $k \to k-1$, up to $k=1$. This will give
the first $n-1$ digits, and the digit $m_n(\ell)$
is automatically determined by the constraint $\sum_i m_i(\ell)=M$.

The worst-case complexity of this computation is $\cO(n \times \log(M) \times n \times \log^2 M)=\cO(n^2 \log^3 M)$. One factor of $n$ is from running $k=n-1$ to $k=1$.
Another factor $\log(M)$ is from binary search at each step. In each binary search we need to compute the binomial coefficient $\binom{M-r+k}{k}$, which involves $\cO(n)$ multiplications with bitstrings of size $\log^2(M)$, giving the final complexity.

\subsection{From a lexicographically ordered partition to an index}

To obtain the desired formula, it helps considering the ascending lexicographic ordering first. 
The ordering is such that 
the string $m_1,\ldots,m_n$ is bigger than $m'_1,\ldots,m'_n$
if there is a $j \in\{1,n\}$ such that $m_{j'}=m'_{j'}$
for all $j'<j$ and $m_j > m_j^\prime $. This allows us to rank the 
strings easily by considering each digit $m_i$ in ascending order. 
That is, a larger $m_1$ implies a larger ranking, determining the most relevant floor for the ranking. For given $m_1$, a larger $m_2$
also implies a larger ranking, determining the next relevant floor for the ranking, and so on. Our goal is to add the magnitudes of these floors and determine $\ell$.

More specifically, let $M \ge 0$ and $n \ge 2$ be fixed. 
The constrain is $m_i \ge 0$ and $\sum_i m_i = M$. 
Consider the string $m_1,\ldots,m_n$. If $m_1>0$
then we know in the ascending order that the strings with 
first digit $m'_1$ being $m_1-1,\ldots,0$ are ranked below this one. 
For a fixed $m_1'$, we need to count how to arrange $M'=M-m'_1$ into $d'=n-1$ bins. The formula for this is the one we already know, i.e., $\binom{M' +d'-1}{d'-1}$. This gives us the additive term in the ranking
\begin{align}
\nonumber
    r_1&=\sum_{m'_1=0}^{m_1-1}\binom{M-m'_1+n-2}{n-2} \\
    &= \binom{M+n-1}{n-1}-\binom{S_1+n-1}{n-1}\;,
\end{align}
where $S_1=M-m_1=\sum_{i>1} m_i$. In the context of bosonic states, this is the number of bosons to the right of $m_1$.

Next we need to rank according to $m_2$. That is, for fixed $m_1$ and $m_2$, we need to add the terms due to arranging $M'=M-m_1-m_2'$ into $d'=n-2$ bins. The formula is similar:
\begin{align}
\nonumber
    r_2&=\sum_{m'_2=0}^{m_2-1}\binom{M-m_1-m'_2+n-3}{n-3}
    \\
    &= \binom{M-m_1+n-2}{n-2}-\binom{S_2+n-2}{n-2}\;.
\end{align}
Here, $S_2 = \sum_{i>2}m_i$.
Next we keep adding the terms $r_3$ all the way up to $r_{n-1}$, since $r_n$ will be fixed from these due to the constraint $\sum_i m_i=M$.
The result is
\begin{align}
\nonumber
    {\rm rank}_{\rm asc}(m_1,\ldots,m_n) &=  \sum_{k=1}^{n-1} \binom{S_{k-1}+n-k}{n-k}-\binom{S_k+n-k}{n-k} \\
    \textbf{}
    & = N + \sum_{k=1}^{n-2}\binom{S_{k}+n-k-1}{n-k-1} -\sum_{k=1}^{n-1}\binom{S_k+n-k}{n-k} \\
   & = N-1 + \sum_{k=1}^{n-1}\binom{S_{k}+n-k-1}{n-k-1} - \binom{S_k+n-k}{n-k} \;,
\end{align}
where $S_0:=M$ and recall $N=\binom{M+n-1}{n-1}$. 
This can be computed classically in time $ \cO(n^2 \log^2 M)$ 
or $ \cO(n^2 \log^2 N)$ for  $n \ll N$, since there are $\cO(n)$ terms in the sum, each involving $\cO(n)$ multiplications with strings of size $\cO(\log N)$. Note that the exponent can be improved with faster than textbook multiplication. 

For the descending lexicographic ordering, 
we simply obtain
\begin{align}
\nonumber
    &{\rm rank}_{\rm desc}(m_1,\ldots,m_n) =(N-1)- {\rm rank}_{\rm asc} \\
      \label{eq:appinverselexicographic}
    & =    \sum_{k=1}^{n-1} \binom{S_k+n-k}{n-k}-\binom{S_{k}+n-k-1}{n-k-1}\;.
\end{align}
We can compute this also in time $ \cO(n^2 \log^2 M)$. Note that this is the desired function: $\ell(m_1,\ldots,m_n)\equiv {\rm rank}_{\rm desc}(m_1,\ldots,m_n)$.

\section{Ramanujan quantum expanders from $SU(2)$}
\label{app:Ramanujanexpander}

We discuss briefly the construction 
of optimal quantum expanders from $SU(2)$ transforms..
The LPS construction~\cite{lubotzky1988ramanujan,harrow02expander}
gives the rotation angles from the integer solutions to the quaternion equation
\begin{align}
\label{eq:appquaternion}
    a_0^2 + a_1^2 + a_2^2 + a_3^2 = p \;,
\end{align}
where $p$ is a prime and usually $p=1 \mod 4$.
That is, we are interested in integer quaternions of (squared) length $p$.
Here the degree is $D=p+1$, and the quantum channel is constructed from $D$ unitaries $U_1, \ldots, U_D$, so that $\varepsilon(\rho)= \frac 1 D \sum_{\rd=1}^D U_\rd^{} \rho U_\rd^\dagger$. Each of these unitaries can be explicitly written as
\begin{align}
    U_{\rd} = \exp (-\ri \theta^{\rd} (n_x^{\rd} J_x+ n_y^{\rd} J_y + n_z^{\rd} J_z))\;,
\end{align}
where $J_x$, $J_y$, and $J_z$ are the $SU(2)$ angular momentum operators, $0 \le \theta_\rd \le 2\pi$, and $\vec n^\rd =(n_x^\rd,n_y^\rd,n_z^\rd) \in \mathbb R^3$ satisfies $\|\vec n_\rd\|=1$. The rotation angles and axes satisfy
\begin{align}
    \theta^\rd = 2 \arccos \left(\frac {a_0^\rd}{\sqrt p} \right) , \; \vec n^\rd = \frac 1 {\sqrt{p-(a_0^\rd)^2}} (a_1^\rd,a_2^\rd,a_3^\rd),
\end{align}
where the supraindex $\rd$ refers to the $\rd^{\rm th}$ solution to Eq.~\eqref{eq:appquaternion}. (Note that many of the solutions are related via multiplication by imaginary units, so we only keep the $p+1$ ``distinct'' solutions.)

We illustrate this with an example for $p=5$, where $D=6$. 
The distinct solutions are
\begin{align}
    (1,2,0,0), (1,-2,0,0), (1,0,2,0), (1,0,-2,0), (1,0,0,2), (1,0,0,-2).
\end{align}
We compute all the rotation angles and axes:
\begin{align}
    \theta^1 & \simeq {126.87}^\circ  ,  \; \vec n^1 =  (1,0,0) \implies U_1 = \exp(-\ri  \theta^1 J_x )\;,\\
     \theta^2 & \simeq {126.87}^\circ  ,  \; \vec n^2 =  (-1,0,0) \implies U_2 = \exp(\ri  \theta^2 J_x )\; , \\
      \theta^3 & \simeq {126.87}^\circ  ,  \; \vec n^3 =  (0,1,0) \implies U_3 = \exp(-\ri  \theta^3 J_y )\\
       \theta^4 & \simeq {126.87}^\circ ,  \; \vec n^4 =  (0,-1,0) \implies U_4 = \exp(\ri  \theta^4 J_y )\;, \\
        \theta^5 &  \simeq {126.87}^\circ ,  \; \vec n^5 =  (0,0,1) \implies U_5 = \exp(-\ri  \theta^5 J_z )\; , \\
         \theta^6 &  \simeq {126.87}^\circ ,  \; \vec n^6 =  (0,0,-1) \implies U_6 = \exp(\ri  \theta^6 J_z )\;  .
\end{align}
All the angles are the same since $a_0=1$ in all these solutions. The result for the spectral gap of the superoperator is shown in Fig.~\ref{fig:spectralgapexpander} for this case.

Even for the case $p=3$, $D=4$, we can achieve optimal Ramanujan quantum expanders.
The relevant solutions to Eq.~\eqref{eq:appquaternion}
imply $a_0=0$ or $\theta^\rd=\pi$ for all $1 \le n \le 4$. The relevant rotation axes are $\vec n^1 = (1,1,1)/\sqrt 3$, $\vec n^2 = (1,1,-1)/\sqrt 3$, $\vec n^3 = (1,-1,1)/\sqrt 3$, and $\vec n^4 = (-1,1,1)/\sqrt 3$. Simulation results are in Fig.~\ref{fig:appexpander}.

\begin{figure}[htb]
    \centering
    \includegraphics[scale=0.35]{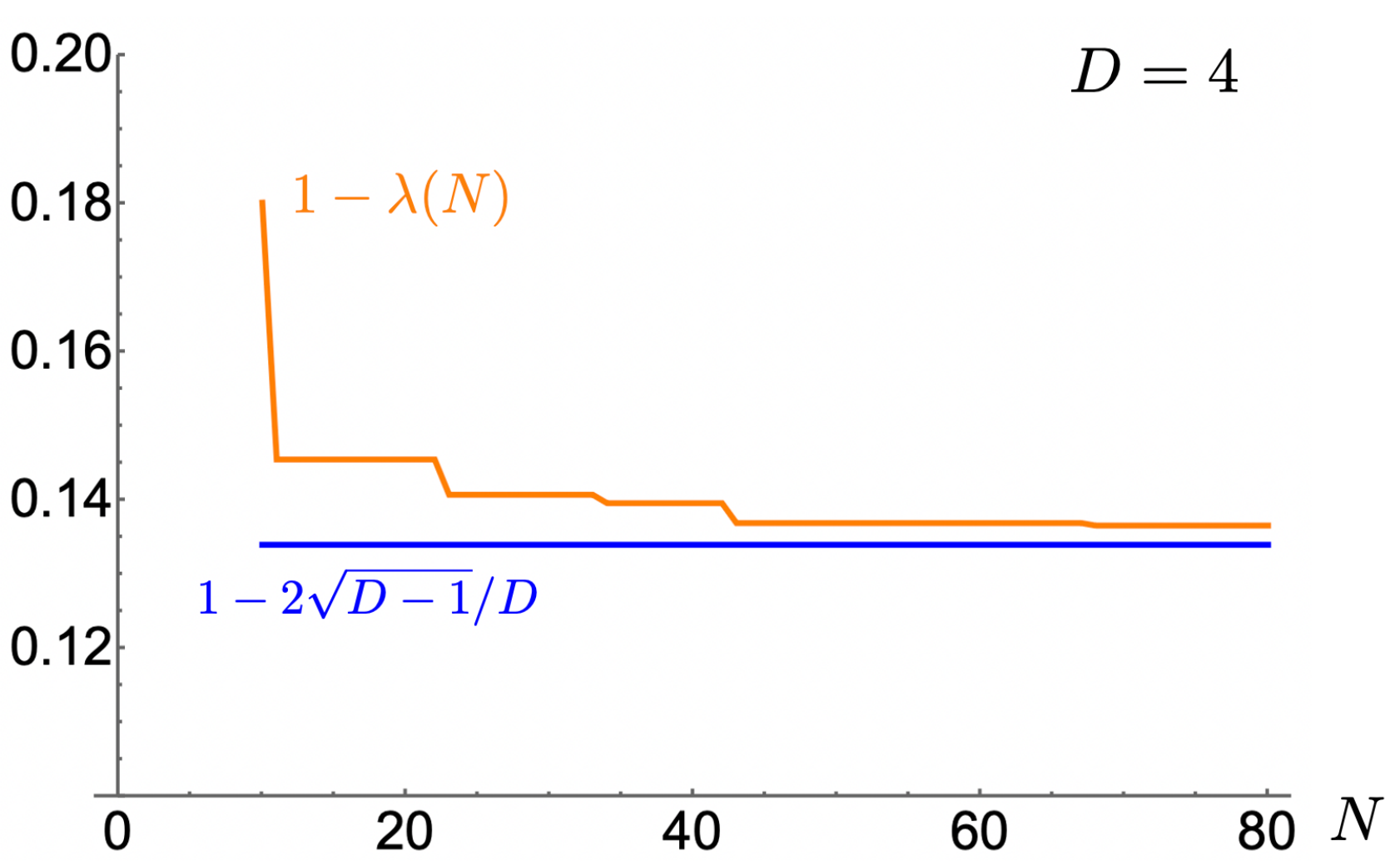}
    \caption{The spectral gap of the quantum expander obtained from Eq.~\eqref{eq:appquaternion} as a function of the dimension $N \ge 10$ of the representation for the $D=4$ case. The spectral gap satisfies the optimal bound $1-\lambda(N) \ge 1-2\sqrt 3/4$.}
    \label{fig:appexpander}
\end{figure}

\newpage
\DeclareUrlCommand{\Doi}{\urlstyle{sf}}
\renewcommand{\path}[1]{\small\Doi{#1}}
\renewcommand{\url}[1]{\href{#1}{\small\Doi{#1}}}
\bibliographystyle{plain}
\bibliography{refs}

\end{document}